\let\accentvec\vec  %

\documentclass[a4paper,UKenglish]{llncs}

\let\vec\accentvec

\usepackage{amsmath,amssymb,xspace}%
\usepackage[textsize=scriptsize,linecolor=black,backgroundcolor=white]{todonotes}

\usepackage{comment}
\specialcomment{shortproof}{\begin{proof}}{\end{proof}}
\specialcomment{longproof}{\begin{proof}}{\end{proof}}
\excludecomment{shortproof}

\spnewtheorem{redrule}{Reduction Rule}{\bfseries}{\itshape}
\spnewtheorem{invariant}{Invariant}{\bfseries}{\itshape}

\newcommand{\no}{\textsc{no}\xspace}
\newcommand{\yes}{\textsc{yes}\xspace}

\newcommand{\opt}{\ensuremath{\mathsf{opt}}\xspace}

\newcommand{\poly}{\ensuremath{\operatorname{poly}}}

\newcommand{\F}{\mathcal{F}}
\newcommand{\X}{\ensuremath{\mathcal{X}}\xspace}
\newcommand{\Y}{\ensuremath{\mathcal{Y}}\xspace}
\renewcommand{\P}{\mathcal{P}} %
\newcommand{\Oh}{\mathcal{O}}

\newcommand{\N}{\mathbb{N}}

\newcommand{\eref}[1]{(\ref{#1})\xspace}

\newcommand{\probnamex}[1]{\textsc{\lowercase{#1}}}
\newcommand{\probname}[1]{\probnamex{#1}\xspace}

\newcommand{\VCon}{\probnamex{Vector Connec\-ti\-vi\-ty}\xspace}
\newcommand{\VdCon}{\probnamex{Vector $d$-Connec\-ti\-vi\-ty}\xspace}
\newcommand{\VConk}{\probnamex{Vector Connec\-ti\-vi\-ty($k$)}\xspace}
\newcommand{\VdConk}{\probnamex{Vector $d$-Connecti\-vi\-ty($k$)}\xspace}

\newcommand{\HSm}{\probnamex{Hitting Set($m$)}\xspace}
\newcommand{\HS}{\probnamex{Hitting Set}\xspace}

\newcommand{\APX}{\ensuremath{\mathsf{APX}}\xspace}
\newcommand{\NP}{\ensuremath{\mathsf{NP}}\xspace}
\newcommand{\containment}{\ensuremath{\mathsf{NP\subseteq coNP/poly}}\xspace}

\newcommand{\FPT}{\ensuremath{\mathsf{FPT}}\xspace}

\newcommand{\demandiny}{\ensuremath{d^3}\xspace}

\newcommand{\req}{\ensuremath{\textit{Req}}}
\newcommand{\satcon}{\ensuremath{\textit{Sat}}}
\newcommand{\fac}{\ensuremath{\textit{Fac}}}
\newcommand{\sig}{\ensuremath{\textit{Sig}}}

\newcommand{\parproblemdef}[4]
{
\begin{quote}
#1\\
\textbf{Instance:} #2\\
\textbf{Parameter:} #3\\
\textbf{Question:} #4
\end{quote}
}

\newcommand{\problemdef}[3]
{
\begin{quote}
#1\\
\textbf{Instance:} #2\\
\textbf{Question:} #3
\end{quote}
}

\title{On Kernelization and Approximation for the Vector Connectivity Problem}

\author{Stefan Kratsch 
\and Manuel Sorge
}
\institute{Technical University Berlin, Germany, \email{$\{$stefan.kratsch$,$manuel.sorge$\}$@tu-berlin.de}}
\authorrunning{S. Kratsch and M. Sorge}

\pagestyle{plain}

\begin{document}

\maketitle

\begin{abstract}
In the \VCon problem we are given an undirected graph $G=(V,E)$, a demand function $\phi\colon V\to\{0,\ldots,d\}$, and an integer $k$. The question is whether there exists a set $S$ of at most $k$ vertices such that every vertex $v\in V\setminus S$ has at least $\phi(v)$ vertex-disjoint paths to $S$; this abstractly captures questions about placing servers or warehouses relative to demands. The problem is \NP-hard already for instances with $d=4$ (Cicalese et al., arXiv '14), admits a log-factor approximation (Boros et al., Networks '14), and is fixed-parameter tractable in terms of~$k$ (Lokshtanov, unpublished '14).

\looseness=-1 We prove several results regarding kernelization and approximation for \VCon and the variant \VdCon where the upper bound $d$ on demands is a fixed constant. For \VdCon we give a factor $d$-approximation algorithm and construct a vertex-linear kernelization, i.e., an efficient reduction to an equivalent instance with $f(d)k=\Oh(k)$ vertices. For \VCon we have a factor $\opt$-approximation and we can show that it has no kernelization to size polynomial in $k$ or even $k+d$ unless \containment, making $f(d)\operatorname{poly}(k)$ optimal for \VdCon. Finally, we provide a write-up for fixed-parameter tractability of \VConk by giving an alternative FPT algorithm based on matroid intersection.
\end{abstract}

\section{Introduction}

In the \VCon problem we are given an undirected graph $G=(V,E)$, a demand function $\phi\colon V\to\{0,\ldots,d\}$, and an integer $k\in\N$. The question is whether there exists a set $S$ of at most $k$ vertices of $G$ such that every vertex $v\in V$ is either in $S$ or has at least $\phi(v)$ vertex-disjoint paths to vertices in $S$ (where the paths may share the vertex $v$ itself).

\problemdef{\VCon}{A graph $G=(V,E)$, a function~$\phi\colon V\to\{0,\ldots,d\}$, $k\in\N$.}{Is there a set $S$ of at most $k$ vertices such that each vertex $v\in V\setminus S$ has $\phi(v)$ vertex-disjoint paths with endpoints in $S$?}
The value~$\phi(v)$ is also called the \emph{demand} of vertex~$v$. We call~$S\subseteq V$ a \emph{vector connectivity set} for~$(G, \phi)$, if it fulfills the requirements above.
We do not formally distinguish decision and optimization version; $k$ is not needed for the latter.

\looseness=-1 For intuition about the problem formulation and applications one may imagine a logistical problem about placing warehouses to service locations on a map (or servers in a network): Each location has a particular demand, which could capture volume and redundancy requirements or level of importance. Demand can be satisfied by placing a warehouse at the location or by ensuring that there are enough disjoint paths from the location to warehouses; both can be seen as guaranteeing sufficient throughput or ensuring access that is failsafe up to demand minus one interruptions in connections. In this way, \VCon can also be seen as a variant of the \probname{Facility Location} problem~\cite{HamacherD02} in which the costs of serving demand are negligible, but instead redundancy is required.

\emph{\textbf{Related work.}} The study of the \VCon problem was initiated recently by Boros et al.~\cite{BorosHHM14} who gave polynomial-time algorithms for trees, cographs, and split graphs. Moreover, they obtained an $\ln n+2$-factor approximation algorithm for the general case.
More recently, Cicalese et al.~\cite{CicaleseMR14} continued the study of \VCon and amongst other results proved that it is \APX-hard (and \NP-hard) on general graphs, even when all demands are upper bounded by four. 
In a related talk during a recent Dagstuhl seminar\footnote{Dagstuhl Seminar 14071 on ``Graph Modification Problems.''} Milani\v{c} asked whether \VCon is fixed-parameter tractable with respect to the maximum solution size~$k$. This was answered affirmatively by Lokshtanov.\footnote{Unpublished result.} %

\emph{\textbf{Our results.}} We obtain results regarding kernelization and approximation for \VCon and \VdCon where the maximum demand $d$ is a fixed constant. We also provide a self-contained write-up for fixed-parameter tractability of \VCon with parameter $k$ (Sec.~\ref{section:fpt}); it is different from Lokshtanov's approach and instead relies on a matroid intersection algorithm of Marx~\cite{Marx09}.

Our analysis of the problem starts with a new data reduction rule stating that we can safely ``forget'' the demand of $r:=\phi(v)$ at a vertex $v$ if $v$ has vertex-disjoint paths to $r$ vertices each of demand at least $r$ (Sec.~\ref{section:reductionrule}). After exhaustive application of the rule, all remaining vertices of demand $r$ must have cuts of size at most $r-1$ separating them from other vertices of demand at least $r$. By analyzing these cuts we then show that any yes-instance of \VdCon can have at most $d^2k$ vertices with nonzero demand; the corresponding bound for \VdCon is $k^3+k$. Both bounds also hold when replacing $k$ by the optimum cost $\opt$. This directly would yield factor $d^2$ and factor $\opt^2+1$ approximation algorithms. We improve upon this in Sec.~\ref{section:approximation} by giving a variant of the reduction rule that works correctly relative to any partial solution $S_0$, which can then be applied in each round of our approximation algorithm. The algorithm follows the local-ratio paradigm and, surprisingly perhaps, proceeds by always selecting a vertex of \emph{lowest} demand (plus its cut). We thus obtain ratios of $d$ and $\opt$ respectively, i.e., the returned solution is of size at most $d\cdot \opt$ for \VdCon and at most $\opt^2$ for \VCon.

Regarding kernelization we show in Sec.~\ref{section:kernellowerbound} that there is no kernel with size polynomial in $k$ or even $k+d$ for \VCon unless \containment (and the polynomial hierarchy collapses); our proof also implies that the problem is $\mathsf{WK[1]}$-hard (cf.~\cite{HermelinKSWW13}). Nevertheless, when $d$ is a fixed constant, we prove that a vertex-linear kernelization is possible. A non-constructive proof of this can be pieced together from recent work on meta kernelization on sparse graph classes (see below). Instead we give a constructive algorithm building on an explicit (though technical) description of what constitutes equivalent subproblems. We also have a direct proof for the number of such subproblems in an instance with parameter $k$ rather than relying on a known argument for bounding the number of connected subgraphs of bounded size and bounded neighborhood size (via the two-families theorem of Bollobas, cf. \cite{Jukna01}). Our kernelization is presented in Sec.~\ref{section:kernelization}.
The bound of $f(d)k=\Oh(k)$ vertices is optimal in the sense that the lower bound for parameter $d+k$ rules out total size $\poly(k+d)$, i.e. we need to allow superpolynomial dependence on~$d$.

\emph{\textbf{A non-constructive kernelization argument.}} The mentioned results on meta kernelization for problems on sparse graph classes mostly rely on the notion of a protrusion, i.e., an induced subgraph of bounded treewidth such that only a bounded number of its vertices have neighbors in the rest of the graph (the \emph{boundary}). Under appropriate assumptions there are general results that allow the replacements of protrusions by equivalent ones of bounded size, which yields kernelization results assuming that the graph can be decomposed into a bounded number of protrusions. Intuitively, having small boundary size limits the interaction of a protrusion with the remaining graph. The assumption of bounded treewidth ensures fast algorithms for solving subproblems on the protrusion and also leads to fairly general arguments for obtaining small equivalent replacements. Note that for \VCon there is no reason to assume small treewidth and we also do not restrict the input graphs. Vertex-linear kernels for problems on general graphs are much more rare than the wealth of such kernels on sparse input graphs.

Arguably, the most crucial properties of a protrusion are the small boundary and the fact that we can efficiently compute subproblems on the protrusion; in principle, we do not need bounded treewidth. (Intuitively, we essentially want to know the best solution value for each choice of interaction of a global solution with the boundary of the protrusion.) Thus, it seems worthwhile and natural to define a relaxed variant of protrusions by insisting on a small boundary and efficient algorithms for solving subproblems rather than demanding bounded treewidth. Fomin et al.~\cite{FominLST13} follow this approach for problems related to picking a certain subset of vertices like \probname{Dominating Set}: Their relaxed definition of a $r$-DS-protrusion requires a boundary of size at most $r$ and the existence of a solution of size at most $r$ for the protrusion itself; the latter part implies efficient algorithms since we can afford to simply try all $r=\Oh(1)$ sized vertex subsets. Fomin et al.\ also derive a general protrusion replacement routine for problems that have finite integer index (a common assumption for meta kernelization, see, e.g., Fomin et al.~\cite{FominLST13}) and are monotone when provided also with a sufficiently fast algorithm, like the one implied by having a solution of size at most~$r$ for the protrusion. Fomin et al.~\cite{FominLST13} remark that the procedure is not constructive since it assumes hard-wiring an appropriate set of representative graphs, whose existence is implied by being finite integer index.

From previous work of Cicalese~\cite{CicaleseMR14} it is known that \VCon is an implicit hitting set problem where the set family consists of all connected subgraphs with neighborhood size lower than the largest demand in the set; the family is exponential size, but we get size roughly $\Oh(n^d)$ when demands are at most~$d$. The procedure of Fomin et al.~\cite{FominLST13} can be applied to minimal sets in this family and will shrink them to some size bounded by an unknown function in $d$, say $h(d)$. Then one can apply the two-families theorem of Bollobas to prove that each vertex is contained in at most $\binom{h(d)+d}{d}$ such sets. Because the solution must hit all sets with $k$ vertices, a yes-instance can have at most $k\cdot \binom{h(d)+d}{d}$ sets. This argument can be completed to a vertex-linear kernelization.

In comparison, we obtain an explicit upper bound of $d^2k\cdot 2^{d^3+d}$ for the number of subproblems that need to be replaced, by considering a set family that is different from the implicit hitting set instance (but contains supersets of all those sets). We also have a constructive description of what constitutes equivalent subproblems. This enables us to give a single algorithm that works for all values of~$d$ based on maximum flow computations (rather than requiring for each value of~$d$ an algorithm with hard-wired representative subproblems).

\section{Preliminaries}\label{section:preliminaries}

We consider only undirected, simple graphs $G=(V,E)$ where $V$ is a finite set and $E\subseteq\binom{V}{2}$; we use $V(G):=V$ and $E(G):=E$. (Here, $\binom{V}{2}$ denotes the family of all size-two subsets of~$V$.) The open neighborhood $N_G(v)$ of a vertex $v\in V(G)$ is defined by $N_G(v):=\{u\mid \{u,v\}\in E(G)\}$; the closed neighborhood is $N_G[v]:=N_G(v)\cup\{v\}$. For vertex sets $X\subseteq V(G)$ we define $N_G[X]:=\bigcup_{v\in X} N_G[v]$ and $N_G(X):=N_G[X]\setminus X$. For $X\subseteq V(G)$ we define the subgraph induced by $X$, denoted $G[X]$, as the graph $G'=(X,E')$ with $E':=\{\{u,v\}\mid \{u,v\}\in E(G) \wedge u,v\in X\}$. By $G-X$ we mean the graph $G[V(G)\setminus X]$.

Due to the nature of the \VCon problem we are frequently interested in disjoint paths from a vertex $v$ to some vertex set $S$, where the paths are vertex-disjoint except for sharing $v$. The natural counterpart, in the spirit of Menger's theorem, are \emph{$v,S$-separators} $C\subseteq V(G)\setminus\{v\}$ such that in $G-C$ no vertex of $S\setminus C$ is reachable from $v$; the maximum number of disjoint paths from $v$ to $S$ that may overlap in $v$ equals the minimum size of a $v,S$-separator. Throughout, by disjoint paths from $v$ to $S$, or $v,S$-separator (for any single vertex $v$ and any vertex set $S$) we mean the mentioned path packings and separators with special role of $v$. (In a network flow interpretation of the paths between~$v$ and~$S$, this essentially corresponds to giving $v$ unbounded capacity, whereas all other vertices have unit capacity.) Several proofs use the function $f\colon 2^V\to\N\colon U\mapsto |N(U)|$, which is well-known to be \emph{submodular}, i.e., for all $X,Y\subseteq V$ it holds that $f(X)+f(Y)\geq f(X\cap Y)+f(X\cup Y)$.

So-called \emph{closest sets} will be used frequently; these occur
naturally in cut problems but appear to have no generalized name. We
define a vertex set $C$ to be \emph{closest to $v$} if $C$ is the
unique $v,C$-separator of size at most $|C|$, where $v,C$-separator is
in the above sense. As an example, if $C$ is a minimum $s,t$-vertex
cut that, amongst such cuts, has the smallest connected component for
$s$ in $G-C$ then $C$ is also closest. The following proposition
captures some properties of closest sets, mostly specialized to
$v,S$-separators (see also~\cite{KratschW12}). %

\begin{proposition}\label{proposition:closestsets:properties}%
Let $G=(V,E)$, let $v\in V$, and let $S\subseteq V\setminus\{v\}$. The following holds.
\begin{enumerate}
\item If $C$ is a closest $v,S$-separator and $X$ the connected component of $v$ in $G-C$, then for every set $X'\subseteq X$ with $v\in X'\subsetneq X$ we have $|N(X')|>|N(X)|$.\label{claim:basic}
\item The minimum $v,S$-separator $C$ minimizing the size of the connected component of $v$ in $G - C$ is unique. The set $C$ is also the unique minimum $v,S$-separator closest to~$v$.
\item If $C_1$ and $C_2$ are minimum $v,S$-separators and $C_1$ is closest to $v$, then the connected component of $v$ in $G-C_1$ is fully contained in the connected component of $v$ in $G-C_2$.
\item If $C\subseteq V\setminus\{v\}$ is closest to $v$ then so is every subset $C'$ of $C$.
\end{enumerate}
\end{proposition}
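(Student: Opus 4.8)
The plan is to derive all four parts from two elementary observations together with the submodularity of $f\colon U\mapsto|N(U)|$. First: for any $W$ with $v\in W$, the connected component of $v$ in $G-N(W)$ is contained in $W$, since a $v$-path leaving $W$ must first pass through $N(W)$. Second: a minimum $v,S$-separator $C$ with $v$-component $X$ in $G-C$ satisfies $N(X)=C$ --- otherwise $N(X)$ would be a strictly smaller $v,S$-separator, using that every vertex of $S$ lies outside $X$ and hence, by the first observation, is unreachable from $v$ in $G-N(X)$. I also record that for minimum $v,S$-separators $C_1,C_2$ with $v$-components $X_1,X_2$, both $N(X_1\cap X_2)$ and $N(X_1\cup X_2)$ are again $v,S$-separators (for the union: $S$ meets $X_1\cup X_2$ only within $C_1\cup C_2$; for the intersection: the first observation).

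Parts 1 and 4 then follow the same template: exhibit a separator that closestness forces to coincide with a known set, and read off a contradiction or the conclusion. For Part 1, if $|N(X')|\le|N(X)|$ for some $v\in X'\subsetneq X$, then --- since $C\cap X'=\emptyset$ --- the set $N(X')$ is a $v,C$-separator of size at most $|N(X)|\le|C|$, so closestness of $C$ forces $N(X')=C$; but then the $v$-component of $G-C$ is both $X$ and (first observation) a subset of $X'$, contradicting $X'\subsetneq X$. For Part 4, let $C'\subseteq C$ and let $D$ be any $v,C'$-separator with $|D|\le|C'|$; since $C\setminus(D\cup(C\setminus C'))=C'\setminus D$ and $D$ separates $v$ from $C'\setminus D$, the set $D\cup(C\setminus C')$ is a $v,C$-separator of size at most $|C|$, hence equals $C$ by closestness, and comparing cardinalities in $|C|=|D\cup(C\setminus C')|\le|D|+|C\setminus C'|\le|C|$ forces $D=C'$; so $C'$ is the unique $v,C'$-separator of size at most $|C'|$.

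Parts 3 and 2 are the submodular part. For Part 3, apply submodularity to $X_1,X_2$: since $|N(X_1)|=|N(X_2)|=\mu$ (the minimum separator size) while $N(X_1\cap X_2)$ and $N(X_1\cup X_2)$ are $v,S$-separators of size $\ge\mu$, submodularity is tight and $|N(X_1\cap X_2)|=\mu$; as $C_1\cap(X_1\cap X_2)=\emptyset$, the set $N(X_1\cap X_2)$ is also a $v,C_1$-separator of size $\le|C_1|$, so closestness of $C_1$ gives $N(X_1\cap X_2)=C_1$, whence $X_1\subseteq X_1\cap X_2\subseteq X_2$ by the first observation. For Part 2, the same tightness shows that any two minimum $v,S$-separators with smallest $v$-component have equal $v$-components (otherwise $N(X_1\cap X_2)$ would be a minimum separator with strictly smaller $v$-component), so the component-minimizer $C_{\min}$ is unique; that $C_{\min}$ is closest to $v$ I would prove directly, by turning a hypothetical $v,C_{\min}$-separator $C'$ of size $\le|C_{\min}|$ into the $v,S$-separator $N(Y)$, where $Y$ is the $v$-component of $G-C'$, whose $v$-component must equal $X_{\min}$ by minimality, forcing $N(Y)=C_{\min}$ and then $C'=C_{\min}$; uniqueness among \emph{all} closest minimum $v,S$-separators then follows by applying Part 3 in both directions to $C_{\min}$ and any such separator.

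The main obstacle is not any single deep step but the bookkeeping: one must consistently distinguish $v,S$-separators from $v,C$-separators, repeatedly verify that the chosen neighborhoods ($N(X')$, $N(X_1\cap X_2)$, $N(Y)$, $D\cup(C\setminus C')$) genuinely separate the intended pair, and invoke $N(X)=C$ for minimum separators at exactly the right places. The ordering also needs care --- Part 3 must be available before the uniqueness half of Part 2 yet proved independently of it, while Part 1, though used nowhere else, is the natural warm-up for the recurring "$N(X')$ is a $v,C$-separator" pattern.
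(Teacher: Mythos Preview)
Your proof is correct and follows essentially the same approach as the paper: submodularity of $U\mapsto|N(U)|$ for Parts~2 and~3, and the $D\cup(C\setminus C')$ construction for Part~4. The differences are organizational---you prove Part~3 directly (via closeness forcing $N(X_1\cap X_2)=C_1$) rather than by contradiction through Part~1, and you reuse Part~3 symmetrically to obtain the uniqueness half of Part~2, whereas the paper reproves that step from scratch using Part~1 and submodularity.
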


\begin{longproof}
Let $G=(V,E)$, let $v\in V$, and let $S\subseteq V\setminus\{v\}$.
We give simple proofs for all claims; some of them use submodularity of $f\colon 2^V\to\N\colon U\mapsto |N(U)|$.
\begin{enumerate}
\item %
  Note that $N(X)=C$ or else $|N(X)|<|C|$ and $N(X)$ is a $v,C$-separator, contradicting closeness of~$C$. Let $X'\subseteq X$ with $v\in X'\subsetneq X$. Clearly $X'\cap C\subseteq X\cap C=\emptyset$, making $N(X')$ a $v,C$-separator. Since $X'\subsetneq X$ and $X$ is connected, it follows that $N(X')\cap X\neq \emptyset$ and, thus, that $N(X')\neq N(X)$. Since $C$ is closest to $v$, we must have $|N(X')|>|C|=|N(X)|$ or else $N(X')$ would be a $v,C$-separator of size at most $|C|$ but different from $C=N(X)$.

\item Assume that both $C_1$ and $C_2$ are minimum $v,S$-separators that furthermore minimize the size of the connected component of $v$ in $G-C_i$; let $X_i$ denote those components. Note that $N(X_i)=C_i$ since $C_i$ is a minimum $v,S$-separator. We have
\begin{align*}
f(X_1)+f(X_2)\geq f(X_1\cap X_2)+f(X_1\cup X_2).
\end{align*}
Clearly, $(X_1\cup X_2)\cap S=\emptyset$, implying that $N(X_1\cup X_2)$ is a $v,S$-separator. It follows that $|N(X_1)\cup N(X_2)|=f(X_1\cup X_2)\geq f(X_2)$, which implies that $f(X_1)\geq f(X_1\cap X_2)$. It is easy to see that $N(X_1\cap X_2)$ is also a $v,S$-separator of size at most $f(X_1)=|C_1|$ and, if $C_1\neq C_2$, then $X_1\neq X_2$ and $X_1\cap X_2\subsetneq X_1$; a contradiction to $X_1$ being a minimum size connected component. Thus, we have $C_1=C_2$, proving the first part of the claim.

For the second part, assume first that $C_3$ is a $v,C_1$-separator of size at most $|C_1|$ and let $X_3$ the connected component of $v$ in $G-C_3$. Since every path from $v$ to $S$ contains a vertex of $C_1$, each such path must also contain a vertex of $C_3$ (to separate $v$ from $C_1$). Thus $C_3$ is also a $v,S$-separator, but then $C_3$ is also minimum (as $|C_3|\leq |C_1|$). Note that, since $C_3$ is a $v,C_1$-separator we have $X_3\cap C_1=\emptyset$. It follows that $X_3\subseteq X_1$. If $X_3\subsetneq X_1$ then $C_3$ would violate the choice of $C_1$ as minimum $v,S$-separator with minimum component size for $v$. But then we must have $X_3=X_1$ and $C_1=C_3$, which proves closeness of $C_1$ to $v$.

Finally, assume that $C_4$ is another minimum $v,S$-separator closest to $v$; for uniqueness we want to show $C_1=C_4$. If $X_1$ and $X_4$ are the corresponding connected components, then $C_i=N(X_i)$. By submodularity of $f$ and the same arguments as above we find that $|N(X_1\cap X_4)|\leq |C_1|$. If $X_1\nsubseteq X_4$ then $v\in X_1\cap X_4\subsetneq X_1$ which, by property~\ref{claim:basic}, implies $|N(X_1\cap X_4)|>|N(X_1)|=|C_1|$; a contradiction. Else, if $X_1\subseteq X_4$, then $C_1=N(X_1)$ is a $v,C_4$-separator of size $|C_1|=|C_4|$; this implies $C_1=C_4$ by closeness of $C_4$.

\item Let $C_1$ and $C_2$ minimum $v,S$-separators and let $C_1$ be closest to $v$. Let $X_i$ the connected component of $v$ in $G-C_i$. Note that $N(X_i)=C_i$ since $C_i$ is minimum. Assume, for contradiction, that $X_1\nsubseteq X_2$, implying that $X_1\cap X_2\subsetneq X_1$. By property~\ref{claim:basic} we have $f(X_1\cap X_2)=|N(X_1 \cap X_2)| > |C_1| = |C_2|=f(X_2)$. Because
\begin{align*}
f(X_1)+f(X_2)\geq f(X_1\cap X_2)+f(X_1\cup X_2),
\end{align*}
we have $|N(X_1\cup X_2)| < f(X_1)=|C_1| = |C_2|$. However, $N(X_1 \cup X_2)$ is also a $v,S$-separator; this contradicts~$C_1$ and~$C_2$ being minimum $v,S$-separators.

\item Let $C$ closest to $v$ and let $C'\subseteq C$. If $C'$ is not closest to $v$ then there is a $v,C'$-separator $C''$ of size at most $|C'|$ and with $C''\neq C'$. Consider $\hat{C}=C''\cup(C\setminus C')$ and note that $|\hat{C}|\leq |C''|+|C\setminus C'|\leq |C'|+|C\setminus C'|=|C|$. Observe that $\hat{C}$ is a $v,C$-separator since $C''\subseteq \hat{C}$ separates $v$ from $C'$, and $C\setminus C'$ is contained in $\hat{C}$. Thus, $|\hat{C}|<|C|$ would contradict $C$ being closest to $v$ because it implies that $\hat{C}\neq C$ in addition to $\hat{C}$ being a $v,C$-separator of size at most $|C|$. Thus, $|\hat{C}|=|C|$ and by closeness of $C$ to $v$, we have $\hat{C}=C$. The latter can hold only if $C''\supseteq C'$, because $\hat{C}=C''\cup(C\setminus C')$, but then $C''=C'$ because $|C''|\leq |C'|$, contradicting $C''\neq C'$. Thus $C'$ is indeed closest to $v$.
\end{enumerate}
This completes the proof.\qed
\end{longproof}

\section{Reducing the number of demand vertices}\label{section:reductionrule}

In this section we introduce a reduction rule for \VCon that reduces the total demand. We prove that the reduction rule does not affect the solution space, which makes it applicable not only for kernelization but also for approximation and other techniques. In Section~\ref{section:kernelization}, we will use this rule in our polynomial kernelization for \VdConk. In the following two sections, as applications of these reduction rules and the insights gained we get approximation algorithms for \VdCon and \VCon, and an alternative \FPT algorithm for \VConk using a result of Marx~\cite{Marx09}.

\begin{redrule}\label{rule:reducedemand}
Let $(G,\phi,k)$ be an instance of \VCon.
If a vertex $v\in V$ has at least $\phi(v)$ vertex-disjoint paths to vertices different from $v$ with demand at least $\phi(v)$ then set the demand of $v$ to zero.
\end{redrule}

We prove that the rule does not affect the space of feasible solutions for any instance.

\begin{lemma}\label{lemma:reducedemand-safe}
Let $(G,\phi,k)$ be an instance of \VCon and let $(G,\phi',k)$ be the instance obtained via a single application of Rule~\ref{rule:reducedemand}. For every $S\subseteq V(G)$ it holds that $S$ is a solution for $(G,\phi,k)$ if and only if $S$ is a solution for $(G,\phi',k)$.
\end{lemma}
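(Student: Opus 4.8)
The plan is to prove the two implications separately; the forward one is trivial and the backward one carries the content. Throughout, let $v$ be the vertex to which Rule~\ref{rule:reducedemand} is applied, put $r:=\phi(v)$, and fix $r$ paths $P_1,\dots,P_r$ from $v$ to pairwise distinct vertices $w_1,\dots,w_r$, each satisfying $\phi(w_i)\ge r$, that are pairwise vertex-disjoint except for sharing $v$; such paths exist by the hypothesis of the rule. The two demand functions differ only at $v$: we have $\phi'(v)=0\le\phi(v)$ and $\phi'(u)=\phi(u)$ for every $u\neq v$, so in particular $\phi'(w_i)=\phi(w_i)\ge r$. For the ``only if'' direction I would simply note that decreasing a single demand value weakens only the feasibility requirement at that vertex and leaves all others untouched, so any solution for $(G,\phi,k)$ is one for $(G,\phi',k)$.

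For the ``if'' direction, take a solution $S$ for $(G,\phi',k)$. The requirements at vertices $u\neq v$ with $u\notin S$ are identical in both instances and hence still satisfied, and if $v\in S$ there is nothing more to check, so assume $v\notin S$. I want to show that $v$ has at least $r$ vertex-disjoint paths to $S$; by the Menger-type statement recalled in Section~\ref{section:preliminaries} this is equivalent to showing that every $v,S$-separator has size at least $r$. Suppose for contradiction that $C$ is a $v,S$-separator with $|C|<r$. Since $v\notin C$ and the $P_i$ are disjoint apart from $v$, each vertex of $C$ lies on at most one $P_i$, so $C$ meets fewer than $r$ of the paths and there is an index $j$ with $C\cap P_j=\emptyset$. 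Then $P_j$ witnesses that $w_j$ lies in the connected component of $v$ in $G-C$; in particular $w_j\notin C$, and $w_j\notin S$ because an element of $S\setminus C$ reachable from $v$ in $G-C$ would contradict $C$ being a $v,S$-separator.

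The key step is then to observe that $C$ is also a $w_j,S$-separator: it avoids $w_j$, and since every vertex reachable from $w_j$ in $G-C$ is reachable from $v$ in $G-C$ while $C$ separates $v$ from $S$, it separates $w_j$ from $S$ as well. But $|C|<r\le\phi'(w_j)$ and $w_j\notin S$, so $w_j$ violates its requirement in $(G,\phi',k)$, contradicting the choice of $S$. Hence no $v,S$-separator of size below $r$ exists, and $S$ is feasible for $(G,\phi,k)$. I expect this last observation --- transferring the separator $C$ from $v$ to the endpoint $w_j$ of an uncut path --- to be the only genuinely non-routine point; the remainder is a pigeonhole count on the paths $P_i$ together with bookkeeping on which demands the rule changes.
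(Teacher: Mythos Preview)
Your proof is correct and essentially the same as the paper's: both assume a small $v,S$-separator $C$ and exploit the observation that $C$ would also be a small $w_j,S$-separator for any $w_j$ lying in $v$'s component of $G-C$. The only cosmetic difference is that the paper argues every $w_i$ must lie \emph{outside} that component (hence $C$ separates $v$ from all $w_i$, contradicting the $r$ paths), while you use pigeonhole on the paths to place one $w_j$ \emph{inside} and contradict its demand directly---these are contrapositive renderings of the same key step.
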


\begin{longproof}
Let $v$ denote the vertex whose demand was set to zero by the reduction rule and define $r:=\phi(v)$. Clearly, $\phi(u)=\phi'(u)$ for all vertices $u\in V(G)\setminus\{v\}$, and $\phi'(v)=0$. It suffices to show that if $S$ fulfills demands according to $\phi'$ then $S$ fulfills also demands according to $\phi$ since $\phi(u)\geq \phi'(u)$ for all $u\in V(G)$. This in turn comes down to proving that $S$ fulfills the demand of $r$ at $v$ assuming that it fulfills demands according to $\phi'$.
If $v\in S$ then the demand at $v$ is trivially fulfilled so henceforth assume that $v\notin S$.

Let $w_1,\ldots,w_r$ denote vertices different from $v$ with demand each at least $r$ such that there exist~$r$ vertex-disjoint paths from $v$ to $\{w_1,\ldots,w_r\}$, i.e., a single path to each $w_i$. Existence of such vertices is required for the application of the rule.

Assume for contradiction that $S$ does not satisfy the demand of $r$ at $v$ (recall that $v\notin S$, by assumption), i.e., that there are no $r$ vertex-disjoint paths from $v$ to $S$ that overlap only in $v$. It follows directly that there is a $v,S$-separator $C$ of size at most $r-1$. (Recall that $C$ may contain vertices of $S$ but not the vertex $v$.) Let~$R$ denote the connected component of~$v$ in~$G-C$, then the following holds for each vertex $w_i\in\{w_1,\ldots,w_r\}$:
\begin{enumerate}
 \item If $w_i\in S$ then $w_i\notin R$: Otherwise, we would have $S\cap R\supseteq\{w_i\}\neq\emptyset$ contradicting the fact that $v$ can reach no vertex of $S$ in $G-C$.
 \item If $w_i\notin S$ then $w_i\notin R$: Since $S$ fulfills demands according to $\phi'$ there must be at least $r$ vertex-disjoint paths from $w_i$ to $S$ that overlap only in $w_i$. However, since $w_i\in R$ the set $C$ is also a $w_i,S$-separator; a contradiction since $C$ has size less than $r$.
\end{enumerate}
Thus, no vertex from~$w_1,\ldots,w_r$ is contained in~$R$. This, however, implies that~$C$ separates~$v$ from~$\{w_1,\ldots,w_r\}$, contradicting the fact that there are~$r$ vertex-disjoint paths from~$v$ to~$\{w_1,\ldots,w_r\}$ that overlap only in $v$. It follows that no such $v,S$-separator~$C$ can exist, and, hence, that there are at least~$r=\phi(v)$ vertex-disjoint paths from~$v$ to~$S$, as claimed. Thus, $S$ fulfills the demand of $r$ at $v$ and hence all demand according to $\phi$. (Recall that the converse is trivial since $\phi(u)\geq \phi'(u)$ for all vertices $u\in V(G)$.)\qed
\end{longproof}

We have established that applications of Rule~\ref{rule:reducedemand} do not affect the solution space of an instance while reducing the number of vertices with nonzero demand. %

\begin{lemma}\label{lemma:reducedemand-poly}
Rule~\ref{rule:reducedemand} can be exhaustively applied in polynomial time. 
\end{lemma}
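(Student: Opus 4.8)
The plan is to combine two observations: first, that testing whether Rule~\ref{rule:reducedemand} applies at a fixed vertex is a single maximum-flow computation, hence polynomial; and second, that the rule can fire only polynomially often before reaching a fixpoint (in fact a single sweep suffices), so that exhaustive application amounts to polynomially many such tests.

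For the first part, fix a vertex $v$ and let $r := \phi(v)$. By definition the rule applies at $v$ if and only if there are $r$ paths from $v$ to $T_r := \{u \in V(G) \setminus \{v\} : \phi(u) \ge r\}$ that are pairwise vertex-disjoint except for sharing $v$. By the version of Menger's theorem recalled in Section~\ref{section:preliminaries}, the maximum number of such paths equals the minimum size of a $v,T_r$-separator, and this value is computable in polynomial time by a standard maximum-flow formulation: split every vertex $u \ne v$ into an arc $u_{\mathrm{in}} \to u_{\mathrm{out}}$ of capacity~$1$, keep $v$ uncapacitated, replace each edge by the two infinite-capacity arcs between the corresponding out- and in-copies, add a sink $t$ together with an infinite-capacity arc $u_{\mathrm{out}} \to t$ for every $u \in T_r$, and compare the value of a maximum $v$-$t$ flow with $r$ (if $T_r=\emptyset$ or $r=0$ the answer is immediate). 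Performing this test for each $v \in V(G)$ in turn either identifies a vertex at which the rule applies or certifies that the instance is already reduced.

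The crux is the second part, bounding the number of applications; this is also the only subtlety to get right. Note that an application only ever lowers a demand, from $r$ to $0$, and never raises one. Consequently, for every vertex $w$ the target set $T_{\phi(w)}$ used in its applicability test can only shrink when the rule is applied at some other vertex, and shrinking the target set cannot increase the number of disjoint paths from $w$ to it; hence an application can never make the rule \emph{newly} applicable at a vertex where it did not apply before, while a vertex whose demand has been zeroed is irrelevant from then on. Therefore it already yields a fixpoint to compute, in the input instance, the set $R$ of all vertices at which the rule applies and then zero the demand of every vertex of $R$ simultaneously: any remaining nonzero-demand vertex was not in $R$, so by this monotonicity the rule still does not apply to it. Equivalently, applying the rule one vertex at a time, each nonvacuous application permanently removes one vertex from the set of nonzero-demand vertices, so at most $|V(G)|$ applications occur. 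Either way, exhaustive application consists of $\Oh(|V(G)|)$ rounds, each doing $\Oh(|V(G)|)$ maximum-flow computations, which is polynomial overall. \qed
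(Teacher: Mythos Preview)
Your sequential argument---each application zeroes one nonzero-demand vertex, so at most $|V(G)|$ applications, each found by at most $|V(G)|$ flow computations---is correct and is exactly what the paper does.

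The ``single sweep'' alternative, however, is wrong, and the word ``Equivalently'' is not justified. Zeroing all of $R$ simultaneously is \emph{not} in general a sequence of valid applications of Rule~\ref{rule:reducedemand}, and it can change the solution space. Take a triangle on $v_1,v_2,v_3$ with $\phi(v_i)=2$ for all $i$. Each $v_i$ has two disjoint paths to the other two, so $R=\{v_1,v_2,v_3\}$ and your sweep sets all demands to~$0$, making $S=\emptyset$ feasible. But any feasible solution for the original instance must contain at least two of the three vertices, so the sweep destroyed solutions. Sequentially, after zeroing $v_1$ the rule no longer applies to $v_2$ or $v_3$ (each now has only one target of demand $\ge 2$, and one cannot route two vertex-disjoint paths into a single vertex), so exhaustive application stops after one step. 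Thus the two procedures are not equivalent, and the sweep does not implement ``exhaustive application of Rule~\ref{rule:reducedemand}'' in the sense guaranteed safe by Lemma~\ref{lemma:reducedemand-safe}. Your monotonicity observation (the rule never becomes \emph{newly} applicable) is correct, but it only tells you that the sequential process terminates quickly; it does not let you batch applications. Drop the single-sweep paragraph and keep the sequential bound.
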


\begin{longproof}
To check whether the rule applies to some vertex~$v$ with demand~$r$ it suffices to perform one maximum flow computation for source~$v$ (with unbounded capacity) and using all vertices with demand at least~$r$ as sinks (with capacity one). Finding a vertex suitable for applying Rule~\ref{rule:reducedemand} thus takes only polynomial time and each application reduces the number of nonzero demand vertices decreases by one limiting the number of iterations to~$|V(G)|$.\qed
\end{longproof}

To analyze the impact of Rule~\ref{rule:reducedemand} we will now bound the number of nonzero demand vertices in an exhaustively reduced instance in terms of the optimum solution size $\opt$ and the maximum demand $d$. To this end, we require the following technical lemma about the structure of reduced instances as well as some notation.
If $(G,\phi,k)$ is reduced according to Rule~\ref{rule:reducedemand} then for each vertex~$v$ with demand~$r=\phi(v)\geq 1$ there is a cut set $C$ of size at most~$r-1$ that separates $v$ from all other vertices with demand at least~$r$. We fix for each vertex $v$ with demand at least one a vertex set $C$, denoted $C(v)$, by picking the unique closest minimum $v,D_v$-separator where $D_v=\{u\in V\setminus\{v\}\mid \phi(u)\geq\phi(v)\}$. Furthermore, for such vertices $v$, let $R(v)$ denote the connected component of $v$ in $G-C(v)$.

Intuitively, any solution $S$ must intersect every set $R(v)$ since $|C(v)|<\phi(v)$. The following lemma shows implicitly that Rule~\ref{rule:reducedemand} limits the amount of overlap of sets $R(v)$.

\begin{lemma}\label{lemma:cuts}
Let $(G,\phi,k)$ be reduced under Rule~\ref{rule:reducedemand}.
Let $u,v\in V(G)$ be distinct vertices with $\phi(u)=\phi(v)\geq 1$. If $R(u)\cap R(v)\neq\emptyset$ then $u\in C(v)$ or $v\in C(u)$.
\end{lemma}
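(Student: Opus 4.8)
The plan is to argue by contradiction: assume $u\notin C(v)$, $v\notin C(u)$, and yet $R(u)\cap R(v)\neq\emptyset$, and derive a clash from Proposition~\ref{proposition:closestsets:properties}(\ref{claim:basic}). Write $A:=R(u)$, $B:=R(v)$, fix $w\in A\cap B$, and recall that $N(A)=C(u)$ and $N(B)=C(v)$ because $C(u),C(v)$ are minimum separators. Since $\phi(u)=\phi(v)$ we have $v\in D_u$ and $u\in D_v$, so $C(u)$ separates $u$ from $v$ and $C(v)$ separates $v$ from $u$; together with $v\notin C(u)$ and $u\notin C(v)$ this yields $v\notin A$ and $u\notin B$. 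Consequently the component $B'$ of $u$ in $G-C(v)$ is distinct from, hence disjoint from, $B$, and symmetrically the component $A'$ of $v$ in $G-C(u)$ is disjoint from $A$.

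The key step is to apply Proposition~\ref{proposition:closestsets:properties}(\ref{claim:basic}) to the closest separator $C(u)$ with the proper subset $A_0$, defined as the connected component of $u$ in $G-(C(u)\cup C(v))$ (and symmetrically to $C(v)$ with $B_0$, the component of $v$ in that graph). Removing extra vertices only shrinks the component of $u$, so $A_0\subseteq B'$ and $A_0\subseteq A$; since $w\in B$ we have $w\notin B'\supseteq A_0$, hence $u\in A_0\subsetneq A$ and Proposition~\ref{proposition:closestsets:properties}(\ref{claim:basic}) gives $|N(A_0)|>|N(A)|=|C(u)|$.

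It then remains to bound $|N(A_0)|$ from above. Because $A_0$ is a component of $G-(C(u)\cup C(v))$ we have $N(A_0)\subseteq C(u)\cup C(v)$, and because $A_0\subseteq A$ we have $N(A_0)\subseteq A\cup C(u)$; combining these gives $N(A_0)\setminus C(u)\subseteq A\cap C(v)$. For a vertex $c\in N(A_0)\cap C(u)$: it has a neighbour in $A_0\subseteq B'$, so either $c\in B'$, or $c$ lies outside the component $B'$ of $G-C(v)$ but is adjacent to it and thus $c\in C(v)$; in both cases $c\notin B$, using $B'\cap B=\emptyset$ and $C(v)=N(B)$. Hence $|C(u)|<|N(A_0)|\le|C(u)\setminus B|+|A\cap C(v)|=|C(u)|-|C(u)\cap B|+|A\cap C(v)|$, that is, $|A\cap C(v)|>|C(u)\cap B|$. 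Running the identical computation for $B_0$ with the roles of $u$ and $v$ exchanged gives $|B\cap C(u)|>|C(v)\cap A|$, and since $|C(u)\cap B|=|B\cap C(u)|$ and $|A\cap C(v)|=|C(v)\cap A|$ these two strict inequalities contradict each other, proving the lemma.

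The main obstacle is guessing the correct proper subset of $R(u)$ to feed into Proposition~\ref{proposition:closestsets:properties}(\ref{claim:basic}): the obvious candidates $R(u)\setminus R(v)$ and $R(u)\setminus N[R(v)]$ only re-derive $R(u)\cap C(v)\neq\emptyset$ and do not close the loop, whereas intersecting with the complement of $C(v)$ makes $N(A_0)$ split into the two controllable pieces $C(u)\setminus R(v)$ and $R(u)\cap C(v)$; after that, symmetry finishes the job and the rest is bookkeeping with the disjointness of $R(v)$, $B'$, and $C(v)$. (Note that this argument uses only that $C(u),C(v)$ are closest minimum separators, not the reducedness bound $|C(\cdot)|\le\phi(\cdot)-1$.)
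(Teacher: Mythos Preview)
Your proof is correct and follows essentially the same strategy as the paper's: both pass to the connected components of $u$ and $v$ in $G-(C(u)\cup C(v))$ (your $A_0,B_0$ are the paper's $I,J$), observe these are proper subsets of $R(u),R(v)$ via the common point $w$, and invoke the closest-separator property (Proposition~\ref{proposition:closestsets:properties}(\ref{claim:basic})) to reach a contradiction.

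The bookkeeping differs. The paper proves the single combined inequality $|N(I)|+|N(J)|\le |C(u)|+|C(v)|$ by noting that any vertex in $N(I)\cap N(J)$ sits on a $u$--$v$ path whose interior avoids $C(u)\cup C(v)$ and hence must lie in $C(u)\cap C(v)$; then WLOG $|N(I)|\le|C(u)|$, and one application of closeness finishes it. You instead bound $|N(A_0)|$ by splitting it into $N(A_0)\cap C(u)\subseteq C(u)\setminus B$ (using the auxiliary component $B'$) and $N(A_0)\setminus C(u)\subseteq A\cap C(v)$, apply closeness on both sides, and obtain the pair of strict inequalities $|A\cap C(v)|>|B\cap C(u)|>|A\cap C(v)|$. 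Your route avoids the WLOG and makes the cross terms $|A\cap C(v)|$, $|B\cap C(u)|$ explicit, at the cost of introducing $B'$ and a slightly longer case check; the paper's route is a touch shorter. Both arguments use only that $C(u),C(v)$ are closest minimum separators, as you note.
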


\begin{longproof}
Assume for contradiction that we have $u,v$ with $\phi(u)=\phi(v)\geq 1$ and with $R(u)\cap R(v)\neq\emptyset$ and $u\notin C(v)$ and $v\notin C(u)$. We will show that this implies that at least one of $C(u)$ and $C(v)$ is not a closest minimum cut, giving a contradiction. By definition of cuts $C(u)$ and $C(v)$ as separating $u$ resp.\ $v$ from all other vertices of at least the same demand, we have $u\notin R(v)$ and $v\notin R(u)$; furthermore $u\notin C(u)$ and $v\notin C(v)$, by definition.

Let $C=C(u)\cup C(v)$ and note that $u,v\notin C$. Let $I,J$ denote the connected components of $u,v$ in $G-C$. Note that $I\subseteq R(u)$ since $C\supseteq C(u)$ and, thus, $v\notin I$. Similarly we have $u\notin J$ and thus $I$ and $J$ are two different connected components in $G-C$. As the next step, we show that
\begin{align}
|C(u)|+|C(v)|\geq |N(I)|+|N(J)|.\label{equation:improvement}
\end{align}
To this end, let us first note that $N(I)\cup N(J)\subseteq C=C(u)\cup C(v)$ by definition of $I$ and $J$ as connected components of $G-C$. Thus, every vertex $p$ that appears in exactly one of $N(I)$ and $N(J)$ contributes value one to the right-hand side of \eref{equation:improvement} and at least value one to the left-hand side since it must be contained in $C(u)\cup C(v)$. Now, for vertices $p\in N(I)\cap N(J)$ we see that they contribute value two to the right-hand side of \eref{equation:improvement}. Note that each such vertex is contained in a path from $u$ to $v$ whose other interior vertices are disjoint from $C=C(u)\cup C(v)$. Thus,~$p$ must be contained in both $C(u)$ and $C(v)$ since otherwise the corresponding set would fail to separate $u$ from $v$ (or vice versa), which is required since~$\phi(u)=\phi(v)$. Therefore, if any vertex contributes a value of two to the right-hand side, then it also contributes two to the left-hand side. This establishes Equation \eref{equation:improvement}.

Now, from \eref{equation:improvement} we immediately get that at least one of $|N(I)|\leq |C(u)|$ or $|N(J)|\leq |C(v)|$ must hold. W.l.o.g., let~$|N(I)|\leq |C(u)|$. Recall that $u\in I$. Furthermore, we can see that $I\subsetneq R(u)$ by using the fact that~$R(u)\cap R(v)\neq \emptyset$: Let $q\in R(u)\cap R(v)\subseteq R(u)$. If $q\in I$ then $u$ can reach $q$ in~$G-C$ but, in particular, also in~$G-C(v)$. By definition of~$R(v)$ and using~$q\in R(v)$, we know that~$v$ can reach~$q$ in~$G-C(v)$, implying that there is a path from~$u$ to~$v$ in~$G-C(v)$ (since there is a walk through~$q$), violating the fact that~$C(v)$ separates~$v$ from~$u$ (amongst others). Thus~$q\notin I$ and since $I\subseteq R(u)$ follows from $C\supseteq C(u)$, we get that~$I\subsetneq R(u)$. Since~$|N(I)|\leq|C(u)|$ we find that~$N(I)$ is of at most the same size as~$C(u)$ but with a smaller connected component $I$ for~$u$, contradicting the fact that~$C(u)$ is the unique minimum closest set that separates~$u$ from all other vertices of demand at least~$\phi(u)$. This completes the proof of the lemma.\qed
\end{longproof}

Now, we can give the promised bound on the number of nonzero demand vertices. %

\begin{lemma}\label{lemma:bounddemand:mk2}
Let $(G,\phi,k)$ be an instance of \VCon that is reduced according to Rule~\ref{rule:reducedemand} and let \opt denote the minimum size of feasible solutions $S\subseteq V$ for this instance. Then there are at most $d^2\opt$ nonzero demand vertices in $G$. 
\end{lemma}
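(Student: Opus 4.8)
The plan is a charging argument against a fixed optimal solution. Let $S$ be a feasible vector connectivity set with $|S|=\opt$. I want to map every vertex of nonzero demand to some vertex of $S$ so that at most $d^2$ vertices are mapped to any single $s\in S$; summing over $S$ then gives the bound.

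The first step is to observe that every vertex $v$ with $r:=\phi(v)\ge 1$ satisfies $R(v)\cap S\neq\emptyset$. Since $v\in R(v)$ this is immediate when $v\in S$. If $v\notin S$ and $R(v)\cap S=\emptyset$, then, because $R(v)$ is exactly the set of vertices reachable from $v$ in $G-C(v)$ and $R(v)\cap C(v)=\emptyset$, the set $C(v)$ is a $v,S$-separator; but $|C(v)|\le r-1<\phi(v)$, so by Menger $v$ has fewer than $\phi(v)$ vertex-disjoint paths to $S$, contradicting feasibility. Hence for each nonzero demand vertex $v$ we may fix a vertex $s(v)\in R(v)\cap S$.

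The core step bounds, for a fixed $s\in S$ and a fixed demand value $r\in\{1,\dots,d\}$, the size of $V_r^s:=\{v:\phi(v)=r,\ s\in R(v)\}$, which clearly contains $\{v:\phi(v)=r,\ s(v)=s\}$. For distinct $u,v\in V_r^s$ we have $s\in R(u)\cap R(v)$, so $R(u)\cap R(v)\neq\emptyset$, and Lemma~\ref{lemma:cuts} yields $u\in C(v)$ or $v\in C(u)$. Build a digraph on $V_r^s$ with an arc $u\to v$ whenever $u\in C(v)$: then every pair of vertices is joined by an arc in at least one direction, so the digraph has at least $\binom{|V_r^s|}{2}$ arcs, while the in-degree of each $v$ equals $|\{u\in V_r^s:u\in C(v)\}|\le|C(v)|\le r-1$, so the digraph has at most $(r-1)\,|V_r^s|$ arcs. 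Comparing, $\binom{|V_r^s|}{2}\le(r-1)\,|V_r^s|$, which forces $|V_r^s|\le 2r-1$.

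Finally, since $v\mapsto s(v)$ assigns each nonzero demand vertex to exactly one element of $S$,
\[
|\{v\in V:\phi(v)\ge 1\}|=\sum_{s\in S}\ \sum_{r=1}^{d}|\{v:\phi(v)=r,\ s(v)=s\}|\ \le\ \sum_{s\in S}\ \sum_{r=1}^{d}(2r-1)\ =\ \opt\cdot d^2,
\]
using $\sum_{r=1}^{d}(2r-1)=d^2$. The routine parts are the separator observation in the first step and the summation in the last; the one step that needs an idea is the third paragraph, namely recognizing that within a single demand class, restricting to those cut-components $R(v)$ that share a common vertex $s$ of the solution makes Lemma~\ref{lemma:cuts} force a digraph whose underlying graph is complete yet whose in-degrees are capped by $r-1$, so the clash between a quadratic number of pairs and a linear number of arc-slots pins its order at $2r-1$. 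I expect that this is the only genuine obstacle.
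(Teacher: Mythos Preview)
Your proof is correct and follows essentially the same approach as the paper: fix an optimal solution $S$, use $|C(v)|<\phi(v)$ to find a witness in $R(v)\cap S$, apply Lemma~\ref{lemma:cuts} to pairs of same-demand vertices whose sets $R(\cdot)$ share a common point of $S$, and derive $|V_r^s|\le 2r-1$ from the tension between $\binom{|V_r^s|}{2}$ pairs and at most $(r-1)|V_r^s|$ cut-memberships. The only cosmetic differences are that you phrase the final summation as a charging via an explicit assignment $s(v)$ (the paper uses a covering argument over all $p\in S$) and you package the counting as a digraph in-degree bound (the paper writes $\sum_i |C(v_i)|\ge\binom{\ell}{2}$ directly); both are the same computation.
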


\begin{longproof}
For analysis, let $S\subseteq V$ denote any feasible solution of size $\opt$, i.e., such that every $v$ with $\phi(v)\geq 1$ has $v\in S$ or there are $\phi(v)$ vertex-disjoint paths from $v$ to $S$ that overlap only in $v$. We will prove that for all $r\in\{1,\ldots,d\}$ there are at most $2r-1$ vertices of demand $r$ in $G$ (according to $\phi$). 
Fix some $r\in\{1,\ldots,d\}$ and let $D_r$ denote the set of vertices with demand exactly $r$. For each $v\in D_r$ the solution $S$ must contain at least one vertex of $R(v)$ since $C(v)=N(R(v))$ has size at most $r-1$. (Otherwise, $C(v)$ would be a $v,S$-separator of size less than $r$.) Fix some vertex $p\in S$ and let $v_1,\ldots,v_\ell$ denote all vertices of demand $r$ that have $p\in R(v_i)$. We will prove that $\ell\leq(2r-1)$ and $|D_r|\leq \opt(2r-1)$.

At most $r-1$ vertices are contained in $C(v_i)$ for every $i\in\{1,\ldots,\ell\}$. Thus, on the one hand the total size $\sum|C(v_i)|$ of these sets is at most $(r-1)\ell$. On the other hand, for every pair $v_i,v_j$ with $1\leq i<j\leq \ell$ we know that $v_i\in C(v_j)$ or $v_j\in C(v_i)$ by Lemma~\ref{lemma:cuts}, since $R(v_i)\cap R(v_j)\supseteq\{p\}\neq\emptyset$. Thus, every pair contributes at least a value of one to the total size of the sets $C(v_i)$. (To see that different pairs have disjoint contributions note that, e.g., $v_i\in C(v_j)$ uniquely defines pair $v_i,v_j$.) We get the following inequality:
\begin{align*}
(r-1)\ell\geq \sum_{i=1}^\ell |C(v_i)|\geq \binom{\ell}{2}\text{.}
\end{align*}
Thus, $\frac{1}{2}\ell(\ell-1)\leq (r-1)\ell$, implying that $\ell\leq 2r-1$. Since there are exactly \opt choices for $p\in S$ and every set $R(v)$ for $v\in D_r$ must be intersected by $S$, we get an upper bound of
\[
\opt\cdot \ell\leq \opt(2r-1) 
\]
for the size of $D_r$. If we sum this over all choices of $r\in\{1,\ldots,d\}$ we get an upper bound of
\[
\sum_{r=1}^d \opt(2r-1)=\opt\sum_{r=1}^d 2r-1=\opt\cdot d^2
\]
for the number of vertices with nonzero demand. This completes the proof.\qed
\end{longproof}

Lemma~\ref{lemma:bounddemand:mk2} directly implies reduction rules for \VdConk and \VConk: For the former, if there are more than $d^2k$ vertices then $\opt$ must exceed $k$ and we can safely reject the instance. For the latter, there can be at most $k$ vertices of demand greater than $k$ since those must be in the solution. Additionally, if $\opt\leq k$ then there are at most $d^2\opt\leq k^3$ vertices of demand at most $d=k$, for a total of $k^3+k$.

We only spell out the rule for \VdConk because it will be used in our kernelization. The bound of $k^3+k$ for \VConk will be used for the FPT-algorithm in Section~\ref{section:fpt}.

\begin{redrule}\label{rule:bounddemand}
Let~$(G,\phi,k)$ be reduced according to Rule~\ref{rule:reducedemand}, with $\phi\colon V(G)\to\{0,\ldots,d\}$. If there are more than~$d^2k$ vertices of nonzero demand return \no. %
\end{redrule}

\section{Approximation algorithm}\label{section:approximation}

In this section we discuss the approximability of \VdCon. We know from Lemma~\ref{lemma:bounddemand:mk2} that the number of vertices with nonzero demand is at most $d^2\opt$ where $\opt$ denotes the minimum size solution for the instance in question. This directly implies a factor $d^2$ approximation because taking all nonzero demand vertices constitutes a feasible solution. We now show that we can improve on this and develop a factor $d$ approximation for \VdCon.

The approximation algorithm will work as follows: We maintain a partial solution $S_0\subseteq V$, which is initially empty. In each round, we will add at most $d$ vertices to $S_0$ and show that this always brings us at least one step closer to a solution, i.e., the number of additional vertices that need to be added to $S_0$ shrinks by at least one.
To achieve this, we need to update Rule~\ref{rule:reducedemand} to take the partial solution~$S_0$ into account.

\begin{redrule}\label{rule:reducedemand:partialsolution}
Let $(G,\phi,k)$ be an instance of \VCon and let $S_0\subseteq V(G)$.
If there is a vertex~$v$ with non-zero demand and a vertex set~$W$ not containing~$v$ such that each vertex in~$W$ has demand at least~$\phi(v)$ and $v$~has at least $\phi(v)$ vertex-disjoint paths to $S_0\cup W$, then set the demand of $v$ to zero.  Similarly, if $v\in S_0$ then also set its demand to zero.
\end{redrule}

Intuitively, vertices in $S_0$ get the same status as vertices with demand at least $\phi(v)$ for applying the reduction argument. The proof of correctness now has to take into account that we seek a solution that includes $S_0$ but the argument stays essentially the same.

\begin{lemma}\label{lemma:reducedemand:partialsolution:safe}
Let $(G,\phi,k)$ be an instance of \VCon, let $S_0\subseteq V(G)$, and let $(G,\phi',k)$ be the instance obtained via a single application of Rule~\ref{rule:reducedemand}. For every $S\subseteq V(G)$ it holds that $S\cup S_0$ is a solution for $(G,\phi,k)$ if and only if $S\cup S_0$ is a solution for $(G,\phi',k)$.
\end{lemma}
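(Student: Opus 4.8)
The plan is to mirror the proof of Lemma~\ref{lemma:reducedemand-safe}, with the partial solution playing the role of an extra set of ``sinks''. Write $T := S\cup S_0$, let $v$ be the vertex whose demand is set to zero, and put $r:=\phi(v)$; then $\phi'(u)=\phi(u)$ for all $u\neq v$ and $\phi'(v)=0$. (Here I read the statement as applying Rule~\ref{rule:reducedemand:partialsolution}.) One direction is immediate: since $\phi\geq\phi'$ pointwise, any set fulfilling all $\phi$-demands also fulfills all $\phi'$-demands, so if $T$ is a solution for $(G,\phi,k)$ it is one for $(G,\phi',k)$. For the converse it suffices, exactly as in Lemma~\ref{lemma:reducedemand-safe}, to assume that $T$ fulfils all $\phi'$-demands and to prove that then $T$ fulfils the demand $r$ at $v$, since every other demand is unchanged.

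If $v\in T$ this is trivial; note this covers the case $v\in S_0$ handled by the last sentence of the rule. So assume $v\notin T$, hence $v\notin S_0$, and the rule fired because of a set $W$ with $v\notin W$, every vertex of $W$ having demand at least $r$, and $v$ having $r$ vertex-disjoint paths to $S_0\cup W$. Suppose for contradiction that $T$ does not fulfil demand $r$ at $v$; then there is a $v,T$-separator $C$ with $|C|\leq r-1$. Let $R$ be the connected component of $v$ in $G-C$, so $R\cap C=\emptyset$ and, since $v$ reaches no vertex of $T\setminus C$ in $G-C$, also $R\cap T=\emptyset$. The key step is to show $R\cap(S_0\cup W)=\emptyset$, in analogy with the set $\{w_1,\dots,w_r\}$ in Lemma~\ref{lemma:reducedemand-safe}. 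Take any $z\in S_0\cup W$. If $z\in T$ (which includes every $z\in S_0$, as $S_0\subseteq T$), then $z\in R\cap T=\emptyset$ is impossible, so $z\notin R$. Otherwise $z\in W\setminus T$, and since $z\neq v$ we have $\phi'(z)=\phi(z)\geq r$; if $z\in R$ then $z\notin C$ and $z$'s component in $G-C$ equals $R$, which is disjoint from $T$, so $C$ is a $z,T$-separator of size $<r$, contradicting that $T$ fulfils demand $\phi'(z)\geq r$ at $z$. Hence $R\cap(S_0\cup W)=\emptyset$, so $C$ (which does not contain $v$, and $v\notin S_0\cup W$) separates $v$ from $S_0\cup W$ with $|C|<r$, contradicting the $r$ vertex-disjoint paths from $v$ to $S_0\cup W$. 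Therefore $T$ fulfils demand $r$ at $v$, hence all $\phi$-demands.

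Since this is structurally the proof of Lemma~\ref{lemma:reducedemand-safe} with $\{w_1,\dots,w_r\}$ replaced by $S_0\cup W$, I expect no genuine obstacle; the only care needed is bookkeeping. One must keep the roles of the distinguished vertex consistent when reusing $C$ (it is a $v,T$-separator, and for $z\in R$ also a $z,T$-separator, which is legitimate because $z\in R$ forces $z\notin C$), check that the case split on whether $z\in T$ is exhaustive over $S_0\cup W$ (it is, since $S_0\subseteq T$), and observe that vertices of $W$ that happen to lie in $T$ are harmless, being absorbed into the first case. The remark that $\phi'(z)=\phi(z)$ for $z\in W$ uses $v\notin W$, and the final contradiction uses $v\notin S_0\cup W$, both guaranteed by the hypotheses of the rule in the case $v\notin T$.
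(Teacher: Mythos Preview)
Your proof is correct and follows essentially the same approach as the paper's own proof. The only cosmetic differences are notational: you write $T=S\cup S_0$ and argue about the whole target set $S_0\cup W$, whereas the paper fixes $r$ explicit endpoint vertices $w_1,\ldots,w_r$ of the assumed paths (each in $S_0$ or of demand at least $r$) and shows none of them lies in $R$; the case split and the contradiction via a too-small separator are identical.
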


\begin{longproof}
Let $v$ denote the vertex whose demand was set to zero by the reduction rule and define $r:=\phi(v)$. Clearly, $\phi(u)=\phi'(u)$ for all vertices $u\in V(G)\setminus\{v\}$, and $\phi'(v)=0$. It suffices to show that if $S\cup S_0$ fulfills demands according to~$\phi'$ then $S\cup S_0$ fulfills also demands according to~$\phi$ since $\phi(u)\geq \phi'(u)$ for all $u\in V(G)$. This in turn comes down to proving that $S_0$ fulfills the demand of $r$ at $v$ assuming that it fulfills demands according to~$\phi'$.
If $v\in S\cup S_0$ then the demand at $v$ is trivially fulfilled; this is holds for all $S$ when $v\in S_0$. Thus, we assume henceforth that $v\notin S\cup S_0$. (In particular, the case that $v\in S_0$ is done.)

Let $w_1,\ldots,w_r$ denote the $r$ vertices to which we have assumed disjoint paths from~$v$ to exist. Each of those vertices is in $S_0$ or it has demand at least $r$. Existence of these paths is required to apply the reduction rule when $v\notin S_0$.

Assume for the sake of contradiction that $S\cup S_0$ does not satisfy the demand of $r$ at $v$ (recall that $v\notin S\cup S_0$, by assumption). That is, there are fewer than $r$~vertex-disjoint paths from $v$ to $S\cup S_0$ that overlap only in $v$. It follows directly that there is a $v,S\cup S_0$-separator $C$ of size at most $r-1$. (Recall that $C$ may contain vertices of $S\cup S_0$ but not the vertex $v$.) Let~$R$ denote the connected component of~$v$ in~$G-C$. Then the following holds for each vertex $w_i\in\{w_1,\ldots,w_r\}$:
\begin{enumerate}
 \item If $w_i\in S\cup S_0$ then $w_i\notin R$: Otherwise, we would have $S\cap R\supseteq\{w_i\}\neq\emptyset$ contradicting the fact that $v$ can reach no vertex of $S\cup S_0$ in $G-C$.
 \item If $w_i\notin S\cup S_0$ then $w_i\notin R$: Since $S\cup S_0$ fulfills demands according to $\phi'$, and $w_i\notin S\cup S_0$ there must be at least $r$ vertex-disjoint paths from $w_i$ to $S\cup S_0$ that overlap only in $w_i$. However, if $w_i\in R$, then the set $C$ is also a $w_i,S\cup S_0$-separator; a contradiction since $C$ has size less than $r$.
\end{enumerate}
Thus, no vertex from~$w_1,\ldots,w_r$ is contained in~$R$. This, however, implies that~$C$ separates~$v$ from~$\{w_1,\ldots,w_r\}$, contradicting the fact that there are~$r$ vertex-disjoint paths from~$v$ to~$\{w_1,\ldots,w_r\}$ that overlap only in $v$. It follows that no such $v,S\cup S_0$-separator~$C$ can exist, and, hence, that there are at least~$r=\phi(v)$ vertex-disjoint paths from~$v$ to~$S\cup S_0$, as claimed. Thus, $S\cup S_0$ fulfills the demand of $r$ at $v$ and hence all demand according to $\phi$. (Recall that the converse is trivial since $\phi(u)\geq \phi'(u)$ for all vertices $u\in V(G)$.)\qed
\end{longproof}

It follows, that we can safely apply Rule~\ref{rule:reducedemand:partialsolution}, as a variant of Rule~\ref{rule:reducedemand}, in the presence of a partial solution $S_0$. It is easy to see that also Rule~\ref{rule:reducedemand:partialsolution} can be applied exhaustively in polynomial time because testing for any vertex $v$ is a single two-way min-cut computation and each successful application lowers the number of nonzero demand vertices by one.

We now describe our approximation algorithm. The algorithm maintains an instance $(G,\phi)$, a set $S_0\subseteq V(G)$, and an integer $\ell\in\N$. Given an instance $(G,\phi)$ the algorithm proceeds in rounds to build $S_0$, which will eventually be a complete (approximate) solution. We start with $S_0=\emptyset$ and $\ell=0$. In any single round, for given $(G,\phi)$, set $S_0$, and integer $\ell$ the algorithm proceeds as follows:

\begin{enumerate}
 \item Exhaustively apply Rule~\ref{rule:reducedemand:partialsolution} to $(G,\phi)$ and $S_0$, possibly changing $\phi$.\label{step:reduction}
 \item If $S_0$ satisfies all demands of $(G,\phi)$ then return $S_0$ as a solution (and stop).\label{step:done}
 \item Otherwise, pick a vertex $v\in V(G)$ \emph{of minimum nonzero demand}. Because we have exhaustively applied Rule~\ref{rule:reducedemand:partialsolution} there must be a set $C$ of less than $\phi(v)\leq d$ vertices that separates $v$ from $S_0$ and all vertices of demand at least $\phi(v)$.
 Add $\{v\}\cup C$ to $S_0$ and increase $\ell$ by one. Note that we add at most $\phi(v)\leq d$ vertices to $S_0$ because $|C|<\phi(v)$.\label{step:main}
 \item Repeat, i.e., start over with Step~\ref{step:reduction}.
\end{enumerate}

We claim that the algorithm preserves the following invariant.
\begin{invariant}\label{invariant:one}
  There exists a set $S_1$ of at most $\opt-\ell$ vertices such that $S_0\cup S_1$ is a feasible solution for $(G,\phi)$.
\end{invariant}
We note that the function $\phi$ may be changed throughout the algorithm, due to applications of Rule~\ref{rule:reducedemand:partialsolution}. Observe that the invariant holds trivially in the beginning, as then $S_0=\emptyset$ and $\ell=0$. We now prove that each round of our algorithm preserves the invariant. 

\begin{lemma}
Each round of the algorithm above preserves Invariant~\ref{invariant:one}.
\end{lemma}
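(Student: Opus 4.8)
The plan is to follow the state $(G,\phi,S_0,\ell)$ together with a hypothetical completion $S_1$ through a single round and to maintain the property that $S_0\cup S_1$ is feasible for the current $(G,\phi)$ with $|S_1|\le\opt-\ell$. Step~\ref{step:reduction} changes only $\phi$, into some $\phi'$, while $S_0$, $\ell$, and $\opt$ stay the same; applying Lemma~\ref{lemma:reducedemand:partialsolution:safe} once per application of Rule~\ref{rule:reducedemand:partialsolution} (always with the same set $S_0$) shows that ``$S_0\cup S_1$ is feasible'' is preserved, so after Step~\ref{step:reduction} Invariant~\ref{invariant:one} still holds with the same $S_1$. If Step~\ref{step:done} returns there is nothing left to do; otherwise $S_0$ is not yet feasible and the round reaches Step~\ref{step:main}, where all the work lies.

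In Step~\ref{step:main} let $r:=\phi'(v)$, the minimum nonzero demand. By exhaustiveness of Rule~\ref{rule:reducedemand:partialsolution} we have $v\notin S_0$, and instantiating the rule with $W:=\{u\in V\setminus\{v\}\mid\phi'(u)\ge r\}$ shows that $v$ has fewer than $r$ vertex-disjoint paths to $S_0\cup W$; hence some $v,(S_0\cup W)$-separator $C$ with $|C|\le r-1$ exists, which we take as the set $C$ of Step~\ref{step:main}. Let $R$ be the connected component of $v$ in $G-C$; then $R\cap S_0=\emptyset$ and $R\cap W=\emptyset$. After the step, $S_0':=S_0\cup\{v\}\cup C$ and $\ell':=\ell+1$, so I must exhibit $S_1'$ with $|S_1'|\le\opt-\ell-1$ and $S_0'\cup S_1'$ feasible for $(G,\phi')$, starting from the $S_1$ delivered by the invariant ($S_0\cup S_1$ feasible, $|S_1|\le\opt-\ell$). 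If $(\{v\}\cup C)\cap S_1\neq\emptyset$, set $S_1':=S_1\setminus(\{v\}\cup C)$; then $S_0'\cup S_1'=S_0\cup S_1\cup(\{v\}\cup C)$ is a superset of the feasible set $S_0\cup S_1$, hence feasible, and $|S_1'|\le|S_1|-1$.

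The interesting case is $(\{v\}\cup C)\cap S_1=\emptyset$. Then $v\notin S_0\cup S_1=:T$, so feasibility of $T$ gives $r$ vertex-disjoint $v$--$T$ paths; as $|C|\le r-1$, one of them, $P_j$, avoids $C$ and therefore lies in $R$, so its endpoint $t_j$ lies in $R\cap T$. Since $R\cap S_0=\emptyset$ we get $t_j\in S_1$, and since $R\cap W=\emptyset$ and $t_j\neq v$ we get $\phi'(t_j)<r$, hence $\phi'(t_j)=0$ by minimality of $r$. I would set $S_1':=S_1\setminus\{t_j\}$ and $T':=(T\cup\{v\}\cup C)\setminus\{t_j\}=S_0'\cup S_1'$, and check feasibility of $T'$: given any $u$ with $\phi'(u)\ge1$ we have $u\neq t_j$ (as $\phi'(t_j)=0$), so either $u\in T'$ and its demand is trivially met, or $u\notin T\cup\{v\}\cup C$, and in particular $u\notin T$, so $T$-feasibility gives $\phi'(u)$ disjoint $u$--$T$ paths meeting $T$ only in their endpoints. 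At most one such path ends at $t_j$; this path must contain a vertex of $C$, since otherwise it would lie in $R$, forcing $u\in R\setminus\{v\}$ and thus $\phi'(u)=0$; replacing it by its prefix up to the first vertex of $C\subseteq T'$ yields $\phi'(u)$ disjoint paths from $u$ to $T'$. Hence $T'$ is feasible and $|S_1'|=|S_1|-1\le\opt-\ell'$, so the invariant is restored.

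The main obstacle is this last case: deleting the ``freed'' vertex $t_j$ must not damage any other demand. This is exactly where the counter-intuitive choice of a vertex of \emph{minimum} demand is used — it forces $t_j$, and in fact every vertex of $R\setminus\{v\}$, to have demand $0$, so the interior of $R$ (apart from $v$ itself, which we add to the solution) can be sacrificed; the only delicate point is rerouting those outside paths that happened to end in $t_j$, handled by redirecting them into $C$.
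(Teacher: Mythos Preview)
Your proof is correct and follows essentially the same approach as the paper: both arguments identify a vertex of $S_1$ inside the component $R$ (using that $|C|<\phi(v)$ and $R\cap S_0=\emptyset$), delete it, and reroute any affected path through $C$, exploiting minimality of the demand to ensure $R\setminus\{v\}$ contains no nonzero-demand vertex. The only cosmetic difference is your explicit case split on whether $(\{v\}\cup C)\cap S_1\neq\emptyset$; the paper handles both cases uniformly by directly picking $p\in S_1\cap R$ (note that $v\in R$, so your Case~1 with $v\in S_1$ is subsumed, and if some $c\in C\cap S_1$ then one still finds a separate $p\in R$), but your split is equally valid and arguably more explicit.
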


\begin{proof}
Clearly, by Lemma~\ref{lemma:reducedemand:partialsolution:safe}, the invariant is preserved in Step~\ref{step:reduction} because the sets $S$ that extend $S_0$ to a solution stay the same. If the algorithm terminates in Step~\ref{step:done} then no more changes are made to $S_0$ or $\ell$ so the invariant still holds. It remains to discuss the interesting case that Step~\ref{step:main} happens and we add $\{v\}\cup C$ to $S_0$ and increase $\ell$ by one.

For ease of discussion let $S'_0$ and $\ell'$ denote $S_0$ and $\ell$ from before Step~\ref{step:main}. Similarly, fix a set~$S'_1$ of at most $\opt-\ell'$ vertices such that $S'_0\cup S'_1$ is a feasible solution, as promised by the invariant. We will show that there is a set $S_1$ of at most $\opt-\ell=\opt-\ell'-1$ vertices that extends $S_0=S'_0\cup\{v\}\cup C$ to a feasible solution.

Let $R$ the connected component of $v$ in $G-C$ and recall that $C$ separates $v$ from all vertices in $S_0$ and all other vertices of demand at least $\phi(v)$. Because we picked $v$ with minimum nonzero demand, $C$ must in fact separate $v$ from all other nonzero demand vertices. Thus, since Rule~\ref{rule:reducedemand:partialsolution} has been applied exhaustively, in $R$ there is no vertex of $S_0$ and no other nonzero demand vertex. The former implies, because $N(R)\subseteq C$ and $|C|<\phi(v)$ that $S'_1$ must contain at least one vertex of $R$, say $p\in S'_1\cap R$. (The latter will be used in a moment.)

We set $S_1=S'_1\setminus\{p\}$, noting $|S_1|=|S'_1|-1$, and claim that $S_0\cup S_1$ is a feasible solution; this would establish that Invariant~\ref{invariant:one} holds after Step~\ref{step:main}. Let us consider an arbitrary nonzero demand vertex $w$ and check that its demand is satisfied by $S_0\cup S_1$.
\begin{itemize}
 \item If $w\in C$ then $w\in S_0\subseteq S_0\cup S_1$ and its demand is trivially satisfied.
 \item If $w\in V\setminus (R\cup C)$ then $w\in S_0\cup S_1$ if and only if $w\in S'_0\cup S'_1$ because all changes to these sets are in $R\cup C$. If $w\notin S_0\cup S_1$ then also $w\notin S'_0\cup S'_1$ and there must be $r=\phi(w)$ vertex disjoint paths from $w$ to $S'_0\cup S'_1$, say $P'_1,\ldots,P'_r$. We change the paths to end in $S_0\cup S_1$: All paths that intersect $C$ can be shortened to end in $C$. Afterwards, no paths ends in $p$ because it would have to pass $C$ first. Thus all obtained paths, say $P_1,\ldots,P_r$ go from $w$ to $S_0\cup S_1$, and they are vertex-disjoint because they are subpaths of the vertex disjoint paths $P'_1,\ldots,P'_r$ (apart from sharing $w$, of course).
 \item Finally, if $w\in R$ then we recall that $R$ contains no other nonzero demand vertices except for $v$. Thus $w=v$ and its demand is fulfilled by $v\in S_0\subseteq S_0\cup S_1$.
\end{itemize}

We find that all steps of our algorithm maintain the invariant, as claimed.\qed
\end{proof}

Now we can wrap up the section.

\begin{theorem}\label{theorem:betterapprox}
The \VdCon problem admits a polynomial-time factor $d$ approximation.
\end{theorem}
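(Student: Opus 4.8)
The plan is to bundle together three ingredients: the already-proved Invariant~\ref{invariant:one}, a termination and running-time analysis of the round-based algorithm, and a short argument that the set returned by the algorithm is feasible for the \emph{original} instance and not merely for the reduced instance that is present when the algorithm stops.

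First I would bound the number of rounds and the cost of one round. Each round either halts in Step~\ref{step:done} or executes Step~\ref{step:main}, which adds at least the vertex $v$ to $S_0$; since $v$ has nonzero demand it is not in $S_0$ (exhausting Rule~\ref{rule:reducedemand:partialsolution} zeroes the demand of every vertex in $S_0$), so $S_0$ strictly grows and $S_0\subseteq V(G)$ caps the number of executions of Step~\ref{step:main}. Moreover, whenever control reaches Step~\ref{step:done} the invariant holds, and since the promised set $S_1$ has nonnegative size we get $\ell\le\opt\le|V(G)|$, which again bounds the number of rounds. A single round is polynomial: Rule~\ref{rule:reducedemand:partialsolution} is exhaustively applicable in polynomial time (testing a vertex $v$ is one min-cut computation with $v$ as an uncapacitated source and $S_0$ together with all vertices of demand at least $\phi(v)$ as unit-capacity sinks, and each successful application drops the number of nonzero-demand vertices by one); the test in Step~\ref{step:done} whether $S_0$ satisfies all demands is a handful of max-flow computations; and in Step~\ref{step:main} finding a minimum-demand vertex $v$ together with a separator $C$ with $|C|<\phi(v)$ is again a max-flow computation, whose existence is guaranteed precisely because Rule~\ref{rule:reducedemand:partialsolution} has been exhausted. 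Hence the whole algorithm runs in polynomial time.

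Next comes the approximation ratio. Step~\ref{step:main} adds at most $\phi(v)\le d$ vertices ($v$ plus the set $C$ of size $<\phi(v)$) to $S_0$ while increasing $\ell$ by one, so at every point $|S_0|\le d\ell$. When the algorithm returns in Step~\ref{step:done} we have, as noted above, $\ell\le\opt$, so the returned set has size at most $d\cdot\opt$; and by the stopping condition it satisfies all demands of the instance present at termination. The last thing to verify is that it also satisfies all demands of the input instance $(G,\phi)$. The demand function changes only in Step~\ref{step:reduction}, and $S_0$ changes only in Step~\ref{step:main}, so I would argue by a backward induction over the rounds using Lemma~\ref{lemma:reducedemand:partialsolution:safe}: for the value of $S_0$ held fixed during one round's reductions, a superset $T\supseteq S_0$ is feasible for the post-reduction instance iff it is feasible for the pre-reduction instance (apply the lemma repeatedly, once per single rule application). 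Instantiating $T$ with the final $S_0$ and composing these equivalences across all rounds transfers feasibility from the terminal instance back to $(G,\phi)$.

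I expect the only place that needs genuine care is this last step: since $S_0$ grows during the run, at round $i$ the final $S_0$ is a superset of the $S_0$ that was fixed while round $i$'s reductions were applied, so Lemma~\ref{lemma:reducedemand:partialsolution:safe} must be invoked with that \emph{earlier, smaller} partial solution at each link of the induction chain; once this bookkeeping is set up correctly the transfer is immediate. Combining it with the size bound $|S_0|\le d\cdot\opt$ yields the theorem.
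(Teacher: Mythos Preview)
Your proposal is correct and follows essentially the same approach as the paper: use Invariant~\ref{invariant:one} to bound the number of rounds by $\opt$, note that each round adds at most $d$ vertices to $S_0$, and verify that every step runs in polynomial time. Your additional backward induction via Lemma~\ref{lemma:reducedemand:partialsolution:safe}, transferring feasibility of the final $S_0$ from the terminal reduced instance back to the original $(G,\phi)$, is a point the paper's proof leaves implicit, so you are in fact being slightly more careful here than the paper.
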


\begin{longproof}
The algorithm works as outlined previously in this section. Given an instance $(G,\phi)$ of \VdCon we start with $S_0=\emptyset$ and $\ell=0$ and run the algorithm. We recall that these values of $S_0$ and $\ell$ fulfill Invariant~\ref{invariant:one}, recalling that $\opt$ denotes the optimum value for vector connectivity sets for $(G,\phi)$. In each round, the algorithm adds at most $d$ vertices to $S_0$, increases $\ell$ by one, and always preserves the invariant. Thus, latest when $\ell=\opt$ after $\opt$ rounds, it must stop in Step~\ref{step:done} because the invariant guarantees that some set of at most $0=\opt-\ell$ further vertices gives a solution together with $S_0$, i.e., it must find that $S_0$ is itself a solution. It then outputs $S_0$, which, after at most $\opt$ rounds, has size at most $d\cdot\opt$. This proves the claimed ratio.

We had already briefly argued that Rule~\ref{rule:reducedemand:partialsolution} can be applied exhaustively in polynomial time. Similarly, finding the required cut $C$ for a vertex $v$ of minimum demand is polynomial time, and the same is true for testing whether $S_0$ satisfies all demands. Thus, we indeed have a polynomial-time algorithm that achieves a factor $d$ approximation for \VdCon.\qed
\end{longproof}

We can also derive an approximation algorithm for \VCon, where there is no fixed upper bound on the maximum demand. To this end, we can rerun the previous algorithm for all ``guesses'' of $\opt_0\in\{1,\ldots,n\}$. In each run, we start with $S_0$ containing all vertices of demand greater than the guessed value $\opt_0$, since those must be contained in every solution of total size at most $\opt_0$. Then the maximum demand is $d=\opt_0$ and we get a $d$-approximate set of vertices to add to $S_0$ to get a feasible solution. When $\opt_0=\opt$, then $\opt$ must also include the same set $S_0$ and for the remaining $\opt-|S_0|\leq\opt$ vertices we have a $d$-approximate extension; we get a solution of total size at most $\opt^2$.

\begin{corollary}
The \VCon problem admits a polynomial-time approximation algorithm that returns a solution of size at most $\opt^2$, where $\opt$ denotes the optimum solution size for the input.
\end{corollary}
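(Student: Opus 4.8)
The idea is to bootstrap the factor-$d$ algorithm of Section~\ref{section:approximation} (Theorem~\ref{theorem:betterapprox}) by guessing the optimum. Since $V(G)$ is always a feasible solution we have $\opt\le n$, so it suffices to iterate over all guesses $\opt_0\in\{0,1,\ldots,n\}$. For a fixed guess $\opt_0$, let $F:=\{v\in V(G):\phi(v)>\opt_0\}$; every feasible solution $S$ of size at most $\opt_0$ must contain $F$, since a vertex $v\notin S$ with $\phi(v)>\opt_0$ would need more than $\opt_0$ vertex-disjoint paths to $S$ and hence $|S|>\opt_0$. Now run the round-based algorithm of Section~\ref{section:approximation} on $(G,\phi)$ but initialized with $S_0=F$ and $\ell=0$ (equivalently, first zero the demands of the vertices in $F$ via Rule~\ref{rule:reducedemand:partialsolution}); afterwards every remaining demand is at most $\opt_0$, so each round adds at most $\opt_0$ vertices to $S_0$. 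Every round strictly enlarges $S_0$ (the vertex $v$ chosen in Step~\ref{step:main} has nonzero demand and so cannot lie in $S_0$ after the exhaustive reduction of Step~\ref{step:reduction}), so the run terminates within $n$ rounds with a feasible solution, regardless of the guess. The algorithm returns the smallest feasible solution produced over all guesses $\opt_0$.

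For the size bound it suffices to analyze the single run with $\opt_0=\opt$. There $F$ is contained in some optimal solution $O$, so $|F|\le\opt$, and writing $O=F\cup(O\setminus F)$ shows that Invariant~\ref{invariant:one} holds at the start of this run with the witnessing set $S_1=O\setminus F$ of size $\opt-|F|$ (here $\opt$ is the fixed optimum of the original instance; zeroing the demands of $F\subseteq S_0$ does not change which supersets of $F$ are feasible, by Lemma~\ref{lemma:reducedemand:partialsolution:safe}). The proof that each round preserves the invariant in fact shows that the witnessing set loses exactly one vertex whenever Step~\ref{step:main} is executed, and is unchanged by the reduction in Step~\ref{step:reduction}; hence after at most $\opt-|F|$ rounds the invariant forces $S_0$ itself to be feasible and the algorithm stops. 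Each round adds at most $\opt_0=\opt$ vertices, so the returned set has size at most $|F|+\opt\cdot(\opt-|F|)=\opt^2-|F|(\opt-1)\le\opt^2$, using $|F|\le\opt$ and $\opt\ge1$ (the case $\opt=0$ being trivial since then the empty set is feasible). As the overall output is the smallest solution over all guesses, its size is at most $\opt^2$. Since the polynomial-time subroutine is invoked $n+1$ times, the total running time is polynomial.

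\textbf{Main obstacle.} The delicate point is extracting the bound $\opt^2$ rather than $\opt^2+\opt$: this needs the observation that in the decisive run ($\opt_0=\opt$) the number of rounds is $\opt-|F|$ and not merely $\opt$ — which comes from the witness set of Invariant~\ref{invariant:one} decreasing by exactly one per round — combined with $d=\opt_0=\opt$, so that the $|F|\le\opt$ initially forced vertices are exactly compensated by the rounds saved. One should also check that the algorithm of Section~\ref{section:approximation}, although described with $S_0=\emptyset$, is analyzed entirely through Invariant~\ref{invariant:one} and Lemma~\ref{lemma:reducedemand:partialsolution:safe}, both stated for an arbitrary partial solution, so initializing with $S_0=F$ needs no modification.
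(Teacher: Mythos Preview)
Your argument is correct and follows the same route as the paper: guess $\opt_0$, seed $S_0$ with all vertices of demand exceeding $\opt_0$, and run the round-based algorithm with effective maximum demand $d=\opt_0$; the paper's text before the corollary gives exactly this sketch. You are more explicit than the paper on two points it leaves implicit---termination for wrong guesses and the exact arithmetic $|F|+\opt\cdot(\opt-|F|)\le\opt^2$ showing why the forced vertices do not push the bound above $\opt^2$---but these are elaborations of the same proof rather than a different approach.
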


\section{FPT algorithm for Vector Connectivity($k$)}\label{section:fpt}

In this section we present a randomized FPT-algorithm for \VConk. (We recall that Lokshtanov~\cite{Lokshtanov14} announced this to be FPT.) Recall that the reduction rules in Section~\ref{section:reductionrule} also allow us to reduce the number of nonzero demand vertices to at most $k^3+k$ (or safely reject). Based on this we are able to give a randomized algorithm building on a randomized FPT algorithm of Marx~\cite{Marx09} for intersection of linear matroids. (The randomization comes from the need to construct a representation for the required matroids.) Concretely, this permits us to search for an independent set of size $k$ in $k^3+k$ linear matroids over the same ground set, where the independent set corresponds to the desired solution and each single matroid ensures that one demand vertex is satisfied. (A matroid is linear if it contains as independent sets precisely the sets of linearly independent columns of a fixed matrix. For more information see, e.g., Oxley's book~\cite{Oxley-matroidtheory}.)

\begin{theorem}\label{theorem:vconk:fpt}%
\VConk is randomized fixed-parameter tractable. The error probability is exponentially small in the input size and the error is limited to false negatives.
\end{theorem}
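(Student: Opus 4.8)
The plan is to cast the existence of a size-$k$ solution as a matroid-intersection question over a number of linear matroids bounded in $k$, and then invoke the randomized algorithm of Marx~\cite{Marx09}. First I would preprocess: apply Rule~\ref{rule:reducedemand} exhaustively, so that by Lemma~\ref{lemma:bounddemand:mk2} and the discussion following Rule~\ref{rule:bounddemand} the set $D$ of nonzero-demand vertices has $|D|\le k^3+k$ in any yes-instance --- if it is larger, output \no. Assume $|V(G)|\ge k$, as the instance is trivial otherwise. It remains to decide whether some $S\subseteq V(G)$ with $|S|=k$ satisfies the demand of every $v\in D$; we may insist on $|S|=k$ exactly, since padding a smaller solution with arbitrary further vertices preserves feasibility.

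Next I would attach to every $v\in D$ a linear matroid $M_v$ of rank exactly $k$, all on one common ground set $U$ (morally $V(G)$ with a few auxiliary elements), so that a $k$-element set $S$ is independent in $M_v$ --- equivalently, a basis --- if and only if $S$ satisfies the demand of $v$. Each $M_v$ is built from a \emph{gammoid}: take the standard digraph obtained from $G$ by splitting every vertex into an in- and an out-copy joined by a unit-capacity arc, except that $v$ is replaced by $\phi(v)$ parallel copies, which encodes the ``unbounded capacity at $v$'' convention used throughout the paper; then adjoin auxiliary ``universal'' elements so that $M_v$ has rank exactly $k$ and, among $k$-element sets, independence in $M_v$ is equivalent to $v$ having $\phi(v)$ vertex-disjoint paths to the set (a small gadget handling the cases $v\in S$ and $\phi(v)>k$). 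With this in place, Menger's theorem in the form used in Section~\ref{section:preliminaries} gives the claimed equivalence. All $M_v$ share the ground set $U$, and being gammoids they are representable over one common sufficiently large finite field; I obtain explicit representing matrices by the randomized gammoid-representation construction of Marx~\cite{Marx09}, each faithful except with probability that can be driven exponentially small in the input size, so a union bound over $D$ makes all of them simultaneously faithful with overwhelming probability.

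Then I would run Marx's randomized algorithm~\cite{Marx09} for a common independent set of size $k$ in the family $\{M_v : v\in D\}$ of linear matroids over a common represented field; since $|D|\le k^3+k$, this takes \FPT time $f(k)\cdot\poly(|V(G)|)$. From its output I reconstruct a candidate $S$ (using standard self-reduction if only a yes/no answer is returned) and then verify \emph{directly}, by at most $|D|$ maximum-flow computations, whether $S$ is a vector connectivity set of size at most $k$; I answer \yes exactly when this verification succeeds. For correctness: conditioned on all representations being faithful and on Marx's internal randomness succeeding --- the complementary event having probability exponentially small in the input size after amplification --- a common independent set of size $k$ exists exactly when a genuine solution does (the forward direction uses the padded solution from the preprocessing, the backward direction uses the verification, which also discards any spurious set coming from a faulty representation). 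Moreover the polynomial-identity test underlying Marx's algorithm reports existence only when the relevant determinant is not identically zero, hence never yields a false positive; together with the explicit verification this forces every error to be a false negative, with probability exponentially small in the input size, as claimed.

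The hard part will be the middle step: engineering the gammoids $M_v$ so that, simultaneously, (i) they live on one common ground set over one field, (ii) independence of a $k$-element set is \emph{exactly} the condition ``$v$ has $\phi(v)$ vertex-disjoint paths to $S$,'' with the special role of $v$ and the cases $v\in S$ and $\phi(v)>k$ all handled correctly, and (iii) their representations are produced by a single randomized algorithm uniformly in the maximum demand. By contrast, the preprocessing of Section~\ref{section:reductionrule}, the black-box use of Marx's matroid-intersection routine, and the max-flow verification are all routine.
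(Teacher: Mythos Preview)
Your overall architecture---reduce $|D|$ to $k^3+k$, encode each demand via a gammoid, and feed the resulting family to Marx's matroid-intersection algorithm---is exactly the paper's. The substantive difference is that you try to avoid the paper's outer enumeration of $S_0 := S^*\cap D$ by building, for every $v\in D$, a single matroid $M_v$ on a common ground set (``morally $V(G)$'') whose size-$k$ independent sets are \emph{precisely} the sets satisfying $v$'s demand. That equivalence cannot hold in general: the family of size-$k$ sets $S$ with ``$v\in S$ \emph{or} $v$ has $\phi(v)$ vertex-disjoint paths to $S$'' is not the basis system of any matroid. For a concrete witness take $V(G)=\{v,a,b,c\}$ with edges $\{v,a\},\{v,b\}$ and $c$ isolated, $\phi(v)=2$, $k=2$; the satisfying $2$-sets are $\{v,a\},\{v,b\},\{v,c\},\{a,b\}$, and the basis-exchange axiom fails between $\{v,c\}$ and $\{a,b\}$ (removing $v$ from $\{v,c\}$ forces $\{c,a\}$ or $\{c,b\}$, neither of which satisfies the demand). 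So no ``small gadget'' on ground set $V(G)$ can deliver the if-and-only-if you state in step~(ii), and this is not a detail to be deferred---it is exactly the obstruction the paper sidesteps by first guessing $S_0$ (at most $(k^3+k)^k$ choices), then building gammoids only for $v\in D\setminus S_0$ on ground set $V\setminus D$, where $v$ is absent, and contracting $S_0$.

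Once you reinstate that enumeration step, the remainder of your write-up (randomized gammoid representations over a common field, Marx's algorithm, and the final flow-based verification to rule out false positives) matches the paper's proof. But as stated, the proposal does not go through: the ``hard part'' you flag is not merely hard, it is impossible without the guessing layer.
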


\begin{proof}[Sketch]
We sketch an alternative proof for fixed-parameter tractability of \VConk. W.l.o.g., input instances $(G,\phi,k)$ are already reduced to at most $k^3+k$ nonzero demand vertices (else apply the reduction rules); let $D=\{v\in V(G)\mid \phi(v)\geq 1\}$. Clearly, if the instance is \yes then there exist also solutions of size \emph{exactly $k$} (barring $|V(G)|<k$ which would be trivial).

\emph{Algorithm.}
As a first step, we guess the intersection of a solution $S^*$ of size $k$ with the set $D$; there are at most $(k^3+k)^k$ choices for $S_0=D\cap S^*$. Note that all vertices of demand exceeding $k$ must be contained in $S_0$ for $S^*$ to be a solution (we ignore $S_0$ if this is not true). 

For each $v\in D\setminus S_0$, we construct a matroid $M_v$ over $V'=V\setminus D$ %
as follows.
\begin{itemize}
 \item Build a graph $G_v$ by first adding to $G$ additional $c-1$ copies of $v$, called $v_2,\ldots,v_c$, where $c=\phi(v)$, and use $v_1:=v$ for convenience. Second, add $r=k-c$ universal vertices $w_1,\ldots,w_r$ (i.e., neighborhood $V\cup\{v_2,\ldots,v_c\}$).
 \item Let $M'_v$ denote the gammoid on $G_v$ with source set $T=\{v_1,\ldots,v_c,w_1,\ldots,w_r\}$ and ground set $V'\cup S_0$. Recall that the independent sets of a gammoid are exactly those subsets $I$ of the ground set that have $|I|$ vertex-disjoint paths from the sources to $I$. It is well known that gammoids are linear matroids and that a representation over a sufficiently large field can be found in randomized polynomial time (cf. Marx~\cite{Marx09}).
 \item Create $M_v$ from $M'_v$ by contracting $S_0$, making its ground set $V'$. If $S_0$ is independent in $M'_v$ then any $I$ is independent in $M_v$ if and only if $S_0\cup I$ is independent in $M'_v$.
\end{itemize}
Use Marx' algorithm \cite{Marx09} to search for a set $I^*$ of size $k-|S_0|$ that is independent in each matroid $M_v$ for $v\in D\setminus S_0$. If a set $I^*$ is found then test whether $S_0\cup I^*$ is a vector connectivity set for $(G,\phi,k)$ by appropriate polynomial-time flow computations. If yes then return the solution $S_0\cup I^*$. Otherwise, if $S_0\cup I^*$ is not a solution or if no set $I^*$ was found then try the next set $S_0$, or answer \no if no set $S_0$ is left to check.

It remains to prove that the algorithm is correct and to (briefly) consider the runtime.

\emph{Correctness.} Clearly, if $(G,\phi,k)$ is \no then the algorithm will always answer \no as all possible solutions $S_0\cup I^*$ are tested for feasibility.

Assume now that $(G,\phi,k)$ is \yes, let $S^*$ a solution of size $k$, and let $S_0=D\cap S^*$. Note that $S^*\subseteq V'\cup S_0$. Pick any $v\in D\setminus S_0$. It follows that there are $c=\phi(v)$ paths from $v$ to $S^*$ in $G$ that are vertex-disjoint except for $v$. Thus, in $G_v$ we get $c$ (fully) vertex-disjoint paths from $\{v_1,\ldots,v_c\}$ to $S^*$, by giving each path a private copy of $v$. We get additional $r=k-c$ paths from $\{w_1,\ldots,w_r\}$ to the remaining vertices of $S^*$ since $S^*\subseteq V'\cup S_0\subseteq N(w_i)$. Thus, the set $S^*$ is independent in each gammoid $M'_v$.
Therefore, in each $M'_v$ also $S_0\subseteq S^*$ is independent. This implies that in $M_v$ (obtained by contraction of $S_0$) the set $S^*\setminus S_0$ is independent and has size $k-|S_0|$. Moreover, any $I$ is independent in $M_v$ if and only if $I\cup S_0$ is independent in $M'_v$. It follows, from the former statement, that Marx' algorithm will find \emph{some} set $I$ of size $k-|S_0|$ that is independent in all matroids $M_v$ for $v\in D\setminus S_0$. 

We claim that $I\cup S_0$ is a vector connectivity set for $(G,\phi,k)$.
Let $v\in D\setminus S_0$. We know that $I$ is independent in $M_v$ and, thus, $S:=I\cup S_0$ is independent in $M'_v$. Thus, in $G_v$ there are $|S|=k$ paths from $T$ to $S$. This entails $c=\phi(v)$ vertex-disjoint paths from $\{v_1,\ldots,v_c\}$ to $S$ that each contain no further vertex of $T$ since $|T|=k$. By construction of $G_v$, we directly get $\phi(v)$ paths from $v$ to $S$ in $G$ that are vertex-disjoint except for overlap in $v$. Thus, $S$ satisfies the demand of any $v\in D\setminus S_0$. Since $S\supseteq S_0$, we see that $S$ satisfies all demands. Thus, the algorithm returns a feasible solution, as required.

\emph{Runtime.} Marx' algorithm for finding a set of size~$k'$ that is independent in $\ell$ matroids has \FPT running time with respect to $k'+\ell$. We have $k' \leq k$ and $\ell\leq |D|\leq k^3+k$ in all iterations of the algorithm and there are at most $(k^3+k)^k$ iterations. This gives a total time that is \FPT with respect to $k$, completing the proof sketch.\qed
\end{proof}

\section{Vertex-linear kernelization for constant demand}\label{section:kernelization}

In this section we prove a vertex-linear kernelization for \VdConk, i.e., with $d$ a problem-specific constant and $k$ the parameter. We recall the problem definition.

\parproblemdef{\VdConk}{A graph $G=(V,E)$, a function~$\phi\colon V\to\{0,\ldots,d\}$, and an integer $k\in\N$.}{$k$.}{Is there a set $S$ of at most $k$ vertices such that each vertex $v\in V\setminus S$ has $\phi(v)$ vertex-disjoint paths with endpoints in $S$?}

The starting point for our kernelization are Reduction Rules~\ref{rule:reducedemand} and~\ref{rule:bounddemand}, and a result of Cicalese et al.~\cite{CicaleseMR14} that relates vector connectivity sets for $(G,\phi)$ to hitting sets for a family of connected subgraphs of $G$: Intuitively, if the neighborhood of $X \subseteq V$ in $G$ is smaller than the largest demand of any $v\in X$, then every solution must select at least one vertex in $X$ to satisfy $v$ (by Menger's Theorem).
We begin by introducing notation for such a set family but additionally restrict it to (inclusionwise) \emph{minimal} sets $X$ where the demand of some vertex in $X$ exceeds $|N(X)|$. We then state the result of Cicalese et al.~\cite{CicaleseMR14} using our notation.

\begin{definition}[$\X(G,\phi)$]\label{definition:x}
Let $G=(V,E)$ and let $\phi\colon V\to\N$. The \emph{family $\X(G,\phi)$} contains all \emph{minimal} sets $X\subseteq V$ such that
 \begin{enumerate}
  \item $G[X]$ is connected and
  \item there is a vertex $v\in X$ with $\phi(v)>|N(X)|$.
 \end{enumerate}
\end{definition}

Using this notation, the result of Cicalese et al.~\cite{CicaleseMR14} is as follows.

\begin{proposition}[adapted from Cicalese et al.~{\cite[Proposition 1]{CicaleseMR14}}]\label{prop:HSequiv}
Let $G=(V,E)$, let $\phi\colon V\to\N$, and let $\X:=\X(G,\phi)$. Then every set $S\subseteq V$ is a vector connectivity set for $(G,\phi)$ if and only if it is a hitting set for $\X$, i.e., it has a nonempty intersection with each $X\in\X$.
\end{proposition}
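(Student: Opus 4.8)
The plan is to prove both implications directly, using the correspondence between vertex-disjoint path packings (with $v$ having unbounded capacity) and $v,S$-separators recalled in Section~\ref{section:preliminaries}: the maximum number of $v$-overlapping disjoint paths from $v$ to $S$ equals the minimum size of a $v,S$-separator.

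For the forward direction, suppose $S$ is a vector connectivity set and fix $X\in\X(G,\phi)$ together with a witness vertex $v\in X$ satisfying $\phi(v)>|N(X)|$. If $v\in S$ we are immediately done, since then $v\in S\cap X$. Otherwise $v\notin S$, so $S$ guarantees $\phi(v)$ vertex-disjoint paths from $v$ to $S$ (overlapping only in $v$). I would argue that if $S\cap X=\emptyset$ then $N(X)$ is a $v,S$-separator: the connected component of $v$ in $G-N(X)$ stays inside $X$ (any edge leaving $X$ enters $N(X)$), and $X$ contains no vertex of $S$; moreover $v\notin N(X)$ since $v\in X$, so $N(X)$ is a legitimate separator. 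Hence the minimum $v,S$-separator has size at most $|N(X)|<\phi(v)$, contradicting the existence of $\phi(v)$ disjoint paths. Therefore $S\cap X\neq\emptyset$, i.e., $S$ hits $\X(G,\phi)$.

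For the reverse direction, suppose $S$ is a hitting set for $\X(G,\phi)$ and, for contradiction, that some $v\in V\setminus S$ has fewer than $\phi(v)$ vertex-disjoint paths to $S$. Then there is a $v,S$-separator $C$ with $|C|\leq\phi(v)-1$; let $R$ be the connected component of $v$ in $G-C$. I would observe that $N(R)\subseteq C$, so $|N(R)|<\phi(v)$, and that $R\cap S=\emptyset$ (because $C$ separates $v$ from $S\setminus C$ in $G-C$ and $R\cap C=\emptyset$). Thus $G[R]$ is connected and $R$ contains a vertex, namely $v$, whose demand exceeds $|N(R)|$; hence the finite nonempty family of subsets of $R$ with these two properties has an inclusion-minimal element $X$, which is then minimal in the sense of Definition~\ref{definition:x} as well, since any proper subset of $X$ witnessing the two properties would lie inside $R$ and contradict minimality within $R$. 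So $X\in\X(G,\phi)$ and $X\subseteq R$ is disjoint from $S$, contradicting that $S$ is a hitting set. Hence every $v\in V\setminus S$ has at least $\phi(v)$ disjoint paths to $S$, i.e., $S$ is a vector connectivity set.

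The arguments are short; the only delicate points are the bookkeeping around the special role of $v$ in the definition of $v,S$-separators (in particular that $v\notin N(X)$ and $v\notin C$, so that these sets are admissible separators and Menger applies in the intended asymmetric form) and the passage from the component $R$ to an honestly inclusion-minimal member $X$ of $\X(G,\phi)$. I do not expect any deeper obstacle.
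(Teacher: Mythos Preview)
Your proof is correct and self-contained. The paper takes a different, shorter route: it simply cites Cicalese et al., who proved the equivalence for the larger family $\X^+$ of all (not necessarily minimal) connected sets $X$ with some $v\in X$ satisfying $\phi(v)>|N(X)|$, and then observes that the hitting sets for $\X^+$ and for the minimal-sets family $\X$ coincide (every set in $\X^+\setminus\X$ is a superset of some set in $\X$). In other words, the paper piggybacks on the cited result and only argues that restricting to minimal sets is harmless.

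Your argument instead re-derives the Cicalese et al.\ statement directly via Menger and then handles minimality on the fly by passing from the component $R$ to an inclusion-minimal $X\subseteq R$. This is a genuinely different presentation: it is longer but fully self-contained and makes the role of Menger's theorem explicit, whereas the paper's proof is a two-line reduction that hides the actual combinatorics inside the citation. Both are fine; yours is the more informative one for a reader who has not seen the cited paper.
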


\begin{proof}
Cicalese et al.~\cite{CicaleseMR14} proved Proposition~\ref{prop:HSequiv} without the minimality restriction, allowing for a larger family, say $\X^+\supseteq\X$. Clearly, hitting sets for $\X^+$ are also hitting sets for $\X$. Conversely, since $\X^+\setminus\X$ contains only supersets of sets in $\X$, hitting sets for $\X$ are also hitting sets for $\X^+$.\qed
\end{proof}

Note that for the general case of \VCon with unrestricted demands the size of $\X(G,\phi)$ can be exponential in $|V(G)|$; for \VdCon there is a straightforward bound of $|\X|=\Oh(|V(G)|^d)$ since $|N(X)|\leq d-1$. However, even for \VdCon, the sets $X\in\X$ are not necessarily small and, thus, we will not take a hitting set approach for the kernelization; in fact, we will not even materialize the set $\X$ but use it only for analysis.

We will leverage Reduction Rules~\ref{rule:reducedemand} and~\ref{rule:bounddemand} throughout this section. Hence, we will, sometimes tacitly, assume that all instances of \VCon{} are reduced with respect to these rules.

As a first step, we prove that instances $(G,\phi,k)$ of \VdConk that are reduced under Rule~\ref{rule:reducedemand} have the property that every set $X\in\X(G,\phi)$ contains at most $d^3$ vertices with nonzero demand. For ease of presentation we define $D(G,\phi):=\{v\in V(G)\mid \phi(v)\geq 1\}$, and use the shorthand $D=D(G,\phi)$ whenever $G$ and $\phi$ are clear from context.

\begin{lemma}\label{lemma:bounddemandinx}
  For all $X \in \X$ we have~$|X \cap D| \leq (d - 1)d^2 \leq d^3$.
\end{lemma}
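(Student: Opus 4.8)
The plan is to recast the counting argument of Lemma~\ref{lemma:bounddemand:mk2} with the role of a global solution $S$ taken over by the (small) neighborhood $N(X)$. Since $X\in\X$ some vertex of $X$ has demand exceeding $|N(X)|$, so $|N(X)|\le d-1$. Fix $r\in\{1,\ldots,d\}$ and write $D^X_r=\{v\in X:\phi(v)=r\}$; I will show $|D^X_r|\le(d-1)(2r-1)$, and then $|X\cap D|=\sum_{r=1}^{d}|D^X_r|\le(d-1)\sum_{r=1}^{d}(2r-1)=(d-1)d^2\le d^3$. For $v\in D^X_r$ recall that $|C(v)|\le r-1$, that $C(v)=N(R(v))$, and that $C(v)$ separates $v$ from every other vertex of demand at least $r$. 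Since $G[X]$ is connected and $v\in X$, exactly one of the following holds: $X\subseteq R(v)$, or $X\cap C(v)\neq\emptyset$ (a path in $G[X]$ leaving $R(v)$ must meet $N(R(v))=C(v)$).

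The crux of the proof, and the place where minimality of $X$ is used, is the claim that if $v\in D^X_r$ and $X\cap C(v)\neq\emptyset$, then $R(v)\cap N(X)\neq\emptyset$; I expect this to be the main obstacle. Suppose to the contrary that $R(v)\cap N(X)=\emptyset$, and let $Z$ be the connected component of $v$ in $G[X\setminus C(v)]$. Then $Z\subsetneq X$ (since $X\cap C(v)\neq\emptyset$ and $G[X]$ is connected) and $Z\subseteq R(v)$ (as $Z$ is connected in $G-C(v)$ and contains $v$). A neighbor $w$ of $Z$ inside $X$ must lie in $C(v)$ — otherwise $w$ would be in $Z$ — and a neighbor $w\notin X$ lies in $N(X)$, hence not in $R(v)$ by assumption, hence in $N(R(v))=C(v)$; so $N(Z)\subseteq C(v)$ and $|N(Z)|\le r-1<\phi(v)$. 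Thus $G[Z]$ is connected and $Z$ contains a vertex (namely $v$) of demand larger than $|N(Z)|$, so $Z$ contains a set $Z'\in\X$ with $Z'\subseteq Z\subsetneq X$, contradicting the minimality of $X$.

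To finish, observe first that at most one $v\in D^X_r$ has $X\subseteq R(v)$: two such vertices $u\neq v$ would give $u\in X\subseteq R(v)$ and $v\in X\subseteq R(u)$, hence $R(u)\cap R(v)\neq\emptyset$, and Lemma~\ref{lemma:cuts} would force $u\in C(v)$ or $v\in C(u)$, impossible since $R(v)\cap C(v)=R(u)\cap C(u)=\emptyset$. So all of $D^X_r$ except possibly one special vertex $v^*$ has a point of $N(X)$ in its component $R(\cdot)$. Exactly as in Lemma~\ref{lemma:bounddemand:mk2}, for each fixed $p$ the set $G_p$ of demand-$r$ vertices $v$ with $p\in R(v)$ has $|G_p|\le 2r-1$: any two of its members have intersecting components, so by Lemma~\ref{lemma:cuts} each of the $\binom{|G_p|}{2}$ pairs contributes a distinct unit to $\sum_{v\in G_p}|C(v)|\le(r-1)|G_p|$. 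Covering $D^X_r\setminus\{v^*\}$ by the at most $d-1$ points of $N(X)$ yields $|D^X_r|\le(d-1)(2r-1)+1$; to remove the stray $+1$, note that every $u\in D^X_r\setminus\{v^*\}$ lies in $X\subseteq R(v^*)$, so Lemma~\ref{lemma:cuts} and $u\notin C(v^*)$ give $v^*\in C(u)$, whence $v^*$ can be added to any single group $G_p$ while keeping every pair of the group related as in Lemma~\ref{lemma:cuts}; the same counting then bounds the enlarged group by $2r-1$, giving $|D^X_r|\le|N(X)|\cdot(2r-1)\le(d-1)(2r-1)$ (the degenerate cases $N(X)=\emptyset$ or $D^X_r=\{v^*\}$ give $|D^X_r|\le1$, which suffices when $d\ge2$; for $d\le1$ the statement is trivial). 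Summing over $r$ completes the proof.
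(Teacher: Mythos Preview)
Your proof is correct and uses the same core ingredients as the paper: minimality of $X$, the $2r-1$ counting from Lemma~\ref{lemma:cuts}, and a covering by the at most $d-1$ points of $N(X)$. The difference lies in the case split. You divide according to whether $X\subseteq R(v)$, which forces you to (i) prove the claim via the auxiliary component $Z$, and (ii) deal with a possible special vertex $v^*$ and then argue away the resulting ``$+1$''. The paper instead splits on whether $R(v)\subseteq X$: the case $R(v)\subsetneq X$ is dismissed immediately by minimality (since $R(v)$ itself already witnesses the $\X$-conditions), the case $R(v)=X$ gives $|X\cap D_r|=1$ outright, and otherwise every $v\in D_r\cap X$ has $R(v)\not\subseteq X$, whence $R(v)\cap N(X)\neq\emptyset$ follows in one line from connectivity of $R(v)$. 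In fact your contradiction hypothesis ($X\cap C(v)\neq\emptyset$ and $R(v)\cap N(X)=\emptyset$) forces $R(v)\subsetneq X$, so your set $Z$ coincides with $R(v)$ there; you are rediscovering the paper's direct argument through a detour. Both routes are valid, but the paper's case split avoids the $Z$-construction and the $+1$ bookkeeping entirely.
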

\begin{longproof}
  Recall the definition of~$C(v)$ as the unique closest minimum $v,D'(v)$-separator, where $D'(v)=\{u\in V\setminus\{v\}\mid \phi(u)\geq\phi(v)\}$, and the definition of~$R(v)$ as the connected component of $v$ in $G-C(v)$, from Section~\ref{section:reductionrule}.  
  Fix some $r\in\{1,\ldots,d\}$, define $D_r=\{v\in D\mid \phi(v)=r\}$, and consider the relation of $R(v)$ with $X$ for $v\in D_r\cap X$. Note that $D_r\setminus\{v\}\subseteq D'(v)$.
  
  If $R(v)\subsetneq X$, then this would contradict minimality of $X$: By reducedness under Rule~\ref{rule:reducedemand}, we have $|C(v)|<\phi(v)$ or else the rule would apply to $v$. But then $R(v)$ with $N(R(v))=C(v)$ fulfills the conditions for being in $\X$, except possibly for minimality. This would prevent $X\supsetneq R(v)$ from being included in $\X$. 
  Else, if $R(v)=X$, then no further vertex of $D_r$ is in $X$, since $C(v)$ is also a $v,D_r\setminus\{v\}$-separator as $D_r\setminus\{v\}\subseteq D'(v)$. In this case we get, $|X\cap D_r|=1$.
  
  In the remaining case, there is no $v\in D_r\cap X$ with $R(v)\subseteq X$. It follows that for all $v\in D_r\cap X$ we have $R(v)\cap X\neq\emptyset$ but $R(v)\nsubseteq X$. This implies $R(v)\cap N(X)\neq \emptyset$ since both $G[X]$ and $G[R(v)]$ are connected. We will use this fact to bound the number of vertices with demand $r$ in $X$.  
  Let $w\in N(X)$ and let $W\subseteq D_r\cap X$ contain those vertices $v$ of demand $r$ whose set $R(v)$ contains $w$; each $R(v)$ must contain at least one vertex in $N(X)$. Thus, for any two vertices $u,v\in W$ we find that their sets $R(u)$ and $R(v)$ have a nonempty intersection, since they share at least $w$. We can now repeat the same analysis as used in the proof of Lemma~\ref{lemma:bounddemand:mk2} to get that $|W|\leq 2r-1$. Over all choices of $w$ we get an upper bound of $(d-1)(2r-1)$ vertices of demand $r$ in $X$. 
  
  We showed that for each choice of~$r \in \{1, \ldots, d\}$ we have at most $(d-1)(2r-1)$ vertices of demand $r$ in $X$. Summing over all $r\in\{1,\ldots,d\}$ this yields an upper bound of $(d-1)d^2\leq d^3$ for $|X\cap D|$, as claimed.\qed
\end{longproof}

To arrive at our kernelization we will later establish a reduction rule that shrinks connected subgraphs with small boundary and bounded number of demand vertices to constant size. This is akin to blackbox protrusion-based reduction rules, especially as in \cite{FominLST13}, but we give an explicit algorithm that comes down to elementary two-way flow computations. To get an explicit (linear) bound for the number of subproblems, we introduce a new family $\Y$ with larger but (as we will see) fewer sets, and apply the reduction process to graphs $G[Y]$ with $Y\in\Y$ instead. Alternatively, as pointed out in the introduction, one may use a result of Bollobas for bounding the number of sets in $\X$ once they are small; a caveat is that this bound would depend on the final size of sets in $\X$, whereas we have a direct and explicit bound for $|\Y|$.

\begin{definition}[$\Y(G,\phi,d)$]\label{definition:y}
Let $G=(V,E)$, let $d\in\N$, and let $\phi\colon V\to\{0,\ldots,d\}$. The family $\Y(G,\phi,d)$ contains all sets $Y\subseteq V$ with
 \begin{enumerate}
  \item $G[Y]$ is connected,
  \item $|Y\cap D|\leq \demandiny$, i.e., $Y$ contains at most $\demandiny$ vertices $v$ with nonzero demand $\phi(v)$,
  \item $|N(Y)|\leq d$, i.e., $Y$ has at most $d$ neighbors, and
  \item there is a vertex $v\in Y\cap D$, i.e., $\phi(v)\geq 1$, such that $N(Y)$ is the unique closest minimum $v,D\setminus Y$-separator.  
 \end{enumerate}
\end{definition}

For $(G,\phi,d)$, we relate $\X=\X(G,\phi)$ and $\Y=\Y(G,\phi,d)$ by proving that every set $X\in \X$ is contained in at least one $Y\in\Y$. Intuitively, this proves that all ``interesting'' parts of the instance are contained in subgraphs $G[Y]$.

\begin{lemma}\label{lemma:xiny}
Let $G=(V,E)$ a graph, $d\in\N$, and $\phi\colon V\to\{0,\ldots,d\}$. Let $\X:=\X(G,\phi)$ and $\Y:=\Y(G,\phi,d)$. Then for all $X\in\X$ there exists $Y\in\Y$ with $X\subseteq Y$.
\end{lemma}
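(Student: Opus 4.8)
The plan is to realize the required superset $Y$ as a connected component obtained by deleting a \emph{closest} separator, and then read off the four defining properties of $\Y(G,\phi,d)$ directly from that construction, using only submodularity of the map $U\mapsto|N(U)|$ and the minimality of $X$ in $\X$.

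Concretely, I would fix a vertex $v\in X$ with $\phi(v)>|N(X)|$ (one exists by Definition~\ref{definition:x}), write $r:=\phi(v)$, and set $S:=D\setminus X$. Then $N(X)$ is a $v,S$-separator, because the component of $v$ in $G-N(X)$ is exactly $X$, which is disjoint from $S$; moreover $|N(X)|<r\le d$. Let $C$ be the unique closest minimum $v,S$-separator supplied by Proposition~\ref{proposition:closestsets:properties}, and let $Y$ be the component of $v$ in $G-C$. Since $C$ is a minimum $v,S$-separator one gets $N(Y)=C$ (any vertex of $C$ not adjacent to $Y$ could be dropped, giving a smaller separator), hence $|N(Y)|=|C|\le|N(X)|<r\le d$.

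The crux is to show $X\subseteq Y$, and here I would use submodularity on $X$ and $Y$. First, $X\cup Y$ is disjoint from $S$ (both $X$ and $Y$ are), so $N(X\cup Y)$ is again a $v,S$-separator, whence $|N(X\cup Y)|\ge|C|=|N(Y)|$ by minimality of $C$; plugging this into $|N(X)|+|N(Y)|\ge|N(X\cap Y)|+|N(X\cup Y)|$ yields $|N(X\cap Y)|\le|N(X)|<r$. Now let $Z$ be the component of $v$ in $G-N(X\cap Y)$. One checks $Z\subseteq X\cap Y\subseteq X$ (a path starting at $v$ in $G-N(X\cap Y)$ cannot leave $X\cap Y$) and $N(Z)\subseteq N(X\cap Y)$, so $Z$ is connected, contains $v$, and satisfies $\phi(v)=r>|N(Z)|$; minimality of $X$ in $\X$ then forces $Z=X$, so $X=Z\subseteq X\cap Y\subseteq Y$.

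It remains to verify $Y\in\Y(G,\phi,d)$. Connectedness of $G[Y]$ is immediate and $|N(Y)|=|C|<r\le d$. Because $Y$ is disjoint from $S=D\setminus X$, every demand vertex of $Y$ lies in $X$, so $Y\cap D=X\cap D$ and hence $|Y\cap D|\le d^3$ by Lemma~\ref{lemma:bounddemandinx}; the same identity gives $D\setminus Y=D\setminus X=S$, so $N(Y)=C$ is precisely the unique closest minimum $v,(D\setminus Y)$-separator, with witness $v\in Y\cap D$. I expect the only genuine obstacle to be the step $X\subseteq Y$: the technical nuisance is that $X\cap Y$ need not be connected, so one cannot apply minimality of $X$ to $X\cap Y$ directly and must instead pass to the component $Z$ of $v$ and argue that its neighborhood did not grow beyond $N(X\cap Y)$.
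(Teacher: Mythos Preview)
Your proposal is correct and follows essentially the same approach as the paper: both take $Y$ to be the component of $v$ in $G-C$ for the closest minimum $v,(D\setminus X)$-separator $C$, use submodularity on $X$ and $Y$ to bound $|N(X\cap Y)|\le|N(X)|$, pass to the connected component of $v$ to invoke minimality of $X$, and then read off $Y\in\Y$ from $Y\cap D=X\cap D$ and $D\setminus Y=D\setminus X$. The only cosmetic difference is that the paper frames the $X\subseteq Y$ step as a contradiction argument (assuming $X\nsubseteq Y$), whereas you argue directly that the component $Z$ must equal $X$; your explicit remark about $X\cap Y$ potentially being disconnected is exactly the subtlety the paper handles by the same passage to a connected component.
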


\begin{longproof}
Let $X\in \X$ and pick $v\in X$ with $\phi(v)>|N(X)|$. Let $D_1= D\setminus X$, i.e., those nonzero demand vertices that are not in $X$. Now, let $Z\subseteq V\setminus \{v\}$ the minimum $v,D_1$-separator that is closest to $v$, and let $Y$ be the connected component of $v$ in $G-Z$; thus $Z=N(Y)$. We claim that $X\subseteq Y$ and $Y \in \Y$.

First, assume for contradiction that $X\nsubseteq Y$. We use submodularity of $f\colon 2^V\to \N\colon U\mapsto |N(U)|$, which implies
\begin{align}
f(X)+f(Y)\geq f(X\cap Y) + f(X\cup Y).\label{eq:ineq3}
\end{align}
Note that $D_1\cap (X\cup Y)=(D_1\cap X)\cup(D_1\cap Y)=\emptyset$ and that $v\in X\cup Y$. It follows that $N(X\cup Y)$ is also a $v,D_1$-separator and, using that $Z$ is minimum, we get that $f(X\cup Y)=|N(X\cup Y)|\geq |Z|=|N(Y)|=f(Y)$. Plugging this into \eref{eq:ineq3} we obtain $f(X)\geq f(X\cap Y)$. Note that $v\in X\cap Y$ and $D_1\cap (X\cap Y)=\emptyset$, implying that $N(X\cap Y)$ is also a $v,D_1$-separator of size at most $|N(X)|=f(X)$. From $X\nsubseteq Y$ we get $X\cap Y\subsetneq X$. But then $X\cap Y$ is a proper subset of $X$ containing $v$ and having $|N(X\cap Y)|\leq |N(X)|<\phi(v)$. It follows that the connected component of $v$ in $G[X\cap Y]$ also has neighborhood size (in $G$) less than $\phi(v)$. This contradicts the fact that $X$ is a minimal set fulfilling the properties of Definition~\ref{definition:x}, which is required for $X\in\X$. We conclude that, indeed, $X\subseteq Y$.

Second, let us check that $Y$ fulfills the requirements for $Y\in \Y=\Y(G,\phi,d)$ (as in Definition~\ref{definition:y}). Note that $Z$ separates $v$ from all nonzero demand vertices that are not in $X$, since $D_1=D\setminus X$. Thus, every nonzero demand vertex in $Y$ is also contained in $X$, i.e., $D\cap Y\subseteq D\cap X$, which bounds their number by $d^3$ using Lemma~\ref{lemma:bounddemandinx}. It also follows that $D\setminus X=D\setminus Y$, since $X\subseteq Y$ implies $D\cap X\subseteq D\cap Y$. Furthermore, $G[Y]$ is connected and $N(Y)$ is a minimum $v,D\setminus Y$-separator that is closest to $v$; note that $D_1=D\setminus X=D\setminus Y$. Finally, since $N(X)$ is also a $v,D_1$-separator and $N(Y)$ is minimum, we conclude that $|N(Y)|\leq |N(X)|<\phi(v)\leq d$. Thus, indeed, $Y\in \Y$ as claimed.\qed
\end{longproof}

We prove that the number of sets $Y\in\Y$ is linear in $k$ for every fixed $d$. Thus, by later shrinking the size of sets in $\Y$ to some constant we get $\Oh(k)$ vertices in total over sets $Y\in \Y$.

\begin{lemma}\label{lemma:boundy}
Let $(G,\phi,k)$ an instance of $\VdConk$ with $\phi\colon V(G)\to\{0,\ldots,d\}$ and let $\Y:=\Y(G,\phi,d)$. Then $|\Y|\leq d^2k\cdot 2^{d^3+d}$.
\end{lemma}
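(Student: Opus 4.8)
The plan is to charge each set $Y\in\Y$ to a single demand vertex and then bound, for every demand vertex $v$, the number of sets charged to $v$ by $2^{d^3+d}$. Since Rule~\ref{rule:bounddemand} lets us assume the instance has at most $d^2k$ nonzero demand vertices (otherwise it is rejected and $\Y$ is irrelevant), this yields $|\Y|\le d^2k\cdot 2^{d^3+d}$.

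For the charging, for each $Y\in\Y$ fix a vertex $v=v(Y)\in Y\cap D$ witnessing property~4 of Definition~\ref{definition:y}, i.e.\ such that $N(Y)$ is the unique closest minimum $v,D\setminus Y$-separator. The key observation is that $Y$ is then completely determined by the pair $(v,Y\cap D)$: indeed $D\setminus Y=D\setminus(Y\cap D)$, so property~4 pins down $N(Y)$ as the unique (hence computable) closest minimum $v,(D\setminus(Y\cap D))$-separator, and $Y$ is recovered as the connected component of $v$ in $G-N(Y)$. Consequently the number of $Y\in\Y$ with $v(Y)=v$ equals the number of distinct sets $Y\cap D$ among them; call this family $\mathcal{A}_v$, and note $v\in A$ and $|A|\le d^3$ for all $A\in\mathcal{A}_v$ by property~2.

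It remains to show $|\mathcal{A}_v|\le 2^{d^3+d}$, and here is where the work lies. I would prove that there is a single set $B_v\subseteq V$ with $|B_v|\le d^3+d$ such that $A\subseteq B_v$ for all $A\in\mathcal{A}_v$; then $|\mathcal{A}_v|\le 2^{|B_v|}\le 2^{d^3+d}$. The natural candidate is $B_v:=(Y^\ast\cap D)\cup N(Y^\ast)$ where $Y^\ast$ is a member of $\Y$ with $v(Y^\ast)=v$ chosen to maximize $|Y^\ast\cap D|$; then $|B_v|\le d^3+d$ by properties~2 and~3. To see that every $Y$ charged to $v$ has $Y\cap D\subseteq B_v$, suppose some $u\in Y\cap D$ lies outside $B_v$, hence $u\notin Y^\ast\cup N(Y^\ast)$. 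Using submodularity of $U\mapsto|N(U)|$ exactly as in the proofs of Proposition~\ref{proposition:closestsets:properties} and Lemmas~\ref{lemma:cuts} and~\ref{lemma:xiny}, together with the minimality of $N(Y^\ast)$ as a $v,(D\setminus Y^\ast)$-separator, one gets that $N(Y^\ast\cup Y)$ is a separator of size at most $|N(Y^\ast)|\le d$; passing to the corresponding closest minimum separator should produce a set in $\Y$ still witnessed by $v$ that contains $u$ in addition to all of $Y^\ast\cap D$, contradicting the maximality of $|Y^\ast\cap D|$. I expect this last step to be the main obstacle: the set $Y^\ast\cup Y$ itself need not lie in $\Y$ because its number of demand vertices may exceed $d^3$ and the target sets of the relevant separators combine asymmetrically under submodularity, so the argument has to be organized carefully — for instance by extending $Y^\ast$ only towards $u$, or by invoking reducedness and the counting behind Lemma~\ref{lemma:bounddemandinx} to keep the demand content bounded — in order to land back inside $\Y$.

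Finally, combining the pieces, the sets $\{Y\in\Y:v(Y)=v\}$ partition $\Y$ as $v$ ranges over $D$, so $|\Y|=\sum_{v\in D}|\mathcal{A}_v|\le |D|\cdot 2^{d^3+d}\le d^2k\cdot 2^{d^3+d}$, as claimed.
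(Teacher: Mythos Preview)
Your opening reduction is sound and in fact coincides with the paper's first move: charge each $Y\in\Y$ to a witness $v(Y)\in D$ and observe that $Y$ is determined by the pair $(v,Y\cap D)$, so it suffices to bound $|\mathcal{A}_v|$ for each $v\in D$.

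The gap is exactly where you flag it. Your container claim---that all sets in $\mathcal{A}_v$ lie inside a single $B_v$ of size at most $d^3+d$---is not established, and your sketched submodularity argument does not close. Concretely, from $Y^\ast$ and $Y$ both witnessed by $v$ you only get that $N(Y^\ast\cup Y)$ is a $v,\,D\setminus(Y^\ast\cup Y)$-separator; since $D\setminus(Y^\ast\cup Y)\subsetneq D\setminus Y^\ast$ whenever $Y$ contains a demand vertex outside $Y^\ast$, you cannot compare $|N(Y^\ast\cup Y)|$ to the minimum $v,\,D\setminus Y^\ast$-separator $N(Y^\ast)$, so the inequality $|N(Y^\ast\cup Y)|\le|N(Y^\ast)|$ is unjustified. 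Even granting a small neighborhood for the union, you still need the resulting set to sit in $\Y$, in particular to have at most $d^3$ demand vertices; invoking Lemma~\ref{lemma:bounddemandinx} does not help here because that lemma relies on the \emph{minimality} of sets in $\X$, which the candidate enlargement of $Y^\ast$ has no reason to satisfy. In short, the maximality-plus-submodularity route needs an idea you have not supplied, and it is not clear the container claim is even true.

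The paper bounds $|\mathcal{A}_v|$ by a different mechanism that avoids any container. It runs a branching process: maintain disjoint $D_0,D_1\subseteq D$ (initially $D_0=\{v\}$, $D_1=\emptyset$) together with the closest minimum $v,D_1$-separator $Z$; at each step pick a demand vertex reachable from $v$ in $G-Z$ and branch on putting it into $D_0$ or $D_1$. The key lemma (a standard closest-cut/submodularity step) is that adding a currently reachable vertex to $D_1$ \emph{strictly increases} $|Z|$, so every branching step increases the potential $|D_0|+|Z|$ by at least one. Since every $Y\in\Y$ witnessed by $v$ appears at a leaf with $|D_0|\le d^3$ and $|Z|\le d$, the relevant subtree has depth at most $d^3+d$ and hence at most $2^{d^3+d}$ leaves, giving $|\mathcal{A}_v|\le 2^{d^3+d}$ without ever asserting that the sets $Y\cap D$ share a small common container.
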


\begin{longproof}
We prove the lemma by giving a branching process that enumerates all sets $Y\in \Y$ within the leaves of a branching tree and by showing that the tree has at most $d^2k\cdot 2^{d^3+d}$ leaves in which sets $Y$ are found. Given $G=(V,E)$, $\phi\colon V\to\{0,\ldots,d\}$, and $k\in\N$, the process works as follows. (Recall $D=\{v\in V\mid \phi(v)\geq 1\}$.)
\begin{enumerate}
 \item As a first step, \emph{branch} on choosing one vertex $v\in D$. Recall that an instance reduced with respect to Rule~\ref{rule:bounddemand} has $|D|\leq d^2k$. Hence, this branching step incurs a factor of $|D|\leq d^2k$ to the number of leaves and creates one node for each choice.
 \item Maintain disjoint sets $D_0,D_1\subseteq D$, starting with $D_0:=\{v\}$ and $D_1:=\emptyset$, and the minimum~$v,D_1$-separator~$Z$ that is closest to $v$; initially~$Z=\emptyset$. Throughout, use~$Y$ to refer to the connected component of $v$ in $G-Z$.
 \item All further \emph{branchings} work as follows: Pick an arbitrary vertex $p\in D\setminus(D_0\cup D_1)$ that is reachable from $v$ in $G-Z$. \emph{Branch} on either adding this vertex to $D_0$ or to $D_1$, creating a child node for each choice. In the branch where $p$ is added to $D_1$ update the minimum $v,D_1$-separator $Z$ closest to $v$ and update the connected component $Y$ of $v$ in $G-Z$.
 \item Terminate a branch if any of the following three \emph{termination conditions} occurs.
 \begin{enumerate}
  \item The size of $D_0$ exceeds $d^3$.
  \item The size of $Z$ exceeds $d$.
  \item No vertex of $D\setminus(D_0\cup D_1)$ is reachable from $v$ in $G-Z$.
 \end{enumerate}
\end{enumerate} 

We will now analyze this process. First, we show that every set of $\Y$ occurs as the set $Y$ in some leaf node of the process, i.e., a node to which a termination condition applies. Second, we show that the number of such leaf nodes is bounded by $|D|\cdot 2^{d^3+d}\leq d^2k\cdot 2^{d^3+d}$.

\emph{\textbf{Each set of $\Y$ occurs in some leaf.}}
We show that \emph{every} set $Y^*\in\Y$ is found as the set $Y$ of at least one leaf of the branching tree. To this end, fix an arbitrary set $Y^*\in\Y$ and let $Z^*:=N_G(Y^*)$. Furthermore, let $D^*_0:=Y^*\cap D$. Let $v\in Y^*\cap D$ such that $Z^*=N(Y^*)$ is the unique minimum, closest $v,D\setminus Y^*$-separator.
In the first branching step the process can clearly pick $v$ for its choice of vertex in $D$; in this case it continues with $D_0=\{v\}$, $D_1=\emptyset$, $Z=\emptyset$, and $Y$ is the connected component of $v$ in $G$.
Consider nodes in the branching process with current sets $Y$, $D_0$, $D_1$, and $Z=N(Y)$ fulfilling the \emph{requirements} that
\begin{itemize}
 \item $Y\supseteq Y^*$ and
 \item $D_0\subseteq D^*_0$ and $D_1\cap D^*_0=\emptyset$.
\end{itemize}
Among such nodes pick one that is either a leaf or such that neither child node fulfills the requirements.
Clearly, the node with $D_0=\{v\}$, $D_1=\emptyset$, $Z=\emptyset$, and $Y$ equal to the component of $v$ in $G$ fulfills the requirements, so we can indeed always find such a node by starting at this one and following child nodes fulfilling the requirements until reaching a leaf or until both child nodes do not fulfill the requirements.

\emph{Leaf node fulfilling the requirements.}
If the chosen node is a leaf then one of the three termination conditions must hold. Clearly, we cannot have $|D_0|>d^3$ since that would imply $|D^*_0|>d^3$, violating the definition of $\Y$ and the sets therein. Similarly, we cannot have $|Z|>d$: In this regard, note that $D_1\cap D^*_0=\emptyset$ implies that $D_1\subseteq D\setminus D^*_0=D\setminus (D\cap Y^*)=D\setminus Y^*$. It follows directly that $Z^*$, which separates $v$ from $D\setminus Y^*$, also separates $v$ from $D_1$. But then the size of $Z^*$ is an upper bound for the minimum cut size for $v,D_1$-separators and $|Z|>d$ would imply $|Z^*|\geq |Z|>d$, again violating the definition of $\Y$.

Thus, in case of a leaf node the only remaining option is that no further vertex of $D\setminus (D_0\cup D_1)$ is reachable from $v$ in $G-Z=G-N(Y)$. Recall that $Z$ is the minimum closest $v,D_1$-separator. By termination $Z$ also separates $v$ from $D\setminus (D_0\cup D_1)$, making it a closest $v,D\setminus D_0$-separator. Since $D_0\subseteq D^*_0$, it follows that $D\setminus Y^*= D\setminus D^*_0\subseteq D\setminus D_0$. Thus, the size of the minimum $v,D\setminus Y^*$-separator $Z^*$ is upper-bounded by $|Z|$ since $Z$ is also a $v,D\setminus Y^*$-separator.
Recall that we have $Y \supseteq Y^*$, which implies that $Z^*=N(Y^*)$ also separates $v$ from $Z=N(Y)$. Now, because $Z$ is closest to $v$, it is the unique $v,Z$-separator of size at most $|Z|$, which implies $Z=Z^*$ since we just derived that $|Z^*|\leq |Z|$. Thus, $Y=Y^*$ since both are identified as the connected component of $v$ in $G-Z=G-Z^*$. We conclude that $Y^*$ is equal to $Y$ in the chosen node if it is a leaf.

\emph{Internal node fulfilling the requirements.}
Now, let us consider the case that the chosen node is not a leaf of the branching tree. We want to check that at least one possible branch must lead us to a child node that also fulfills our restrictions; this would contradict our choice of node that is either a leaf or such that neither child node fulfills the requirements, implying that we necessarily pick a leaf node. Thus, for any $Y^*\in\Y$ there is a leaf of the branching tree with $Y=Y^*$. For clarity, in the following discussion we will use $Y$, $D_0$, etc. for the current node and $Y'$, $D'_0$, etc. for the considered child node in the branching tree.

Since we are not in a leaf in this case, the process chooses an arbitrary vertex $p\in D\setminus (D_0\cup D_1)$ that is reachable from $v$ in $G-Z$ to branch on. If $p\in D^*_0$ then the child node corresponding to adding $p$ to $D_0$ has $D'_0=D_0\cup\{p\}\subseteq D^*_0$ and $D'_1=D_1$. Thus, $Z'=Z$ and, hence, $Y'=Y\supseteq Y^*$. Thus, the child node fulfills all requirements; a contradiction.

Otherwise, if $p\notin D^*_0$, then $p\in D\setminus D^*_0=D\setminus Y^*$. Thus, the child node corresponding to adding $p$ to $D_1$ has $D'_0=D_0\subseteq D^*_0$ and $D'_1=D_1\cup\{p\}$, implying that $D'_1\cap D^*_0=D_1\cap D^*_0=\emptyset$. For the desired contradiction it remains to prove that $Y'\supseteq Y^*$ since we assumed that neither child fulfills the requirements.

Assume that $Y^*\nsubseteq Y'$. Thus, $Y^*\cap Y'\subsetneq Y^*$. We again use submodularity of the function $f\colon 2^V\to \N\colon U\mapsto |N(U)|$ and get
\begin{align}
f(Y^*)+f(Y')\geq f(Y^*\cap Y')+f(Y^*\cup Y'). \label{ineq2}
\end{align}
Since $v\in Y^*\cap Y'\subsetneq Y^*$ and $N(Y^*)$ is closest to $v$ it follows that $f(Y^*\cap Y')=|N(Y^*\cap Y')|> |N(Y^*)|=f(Y^*)$, by Proposition~\ref{proposition:closestsets:properties}.
Plugging this into \eref{ineq2} yields $f(Y^*\cup Y')<f(Y')$; let us check that this violates the fact that $Z'=N(Y')$ is a minimum $v,D'_1$-separator: We have $v\in Y^*\cup Y'$ and 
\[
D'_1\cap (Y^*\cup Y')=\underbrace{(D'_1\cap Y^*)}_{\subseteq D}\cup \underbrace{(D'_1\cap Y')}_{=\emptyset}=(D'_1\cap Y^*)\cap D=D'_1\cap D^*_0=\emptyset.
\]
Thus, indeed, we find that $N(Y^*\cup Y')$ is a $v,D'_1$-separator and we know from $f(Y^*\cup Y')<f(Y')$ that it is smaller than the assumed minimum $v,D'_1$-separator $Z'=N(Y')$; a contradiction.
Hence, by our choice of node that fulfills the requirements and is a leaf or neither child fulfills the requirements, we must obtain a leaf with set $Y$ equal to $Y^*$.

\emph{\textbf{Number of leaves containing some $Y^*\in \Y$.}} We have seen that every set $Y^*\in\Y$ is equal to the set $Y$ of some leaf node fulfilling certain requirements. Furthermore, leaves with $|D_0|>d^3$ or $|Z|>d$ were shown not to correspond to any $Y\in \Y$. We will now analyze the number of leaf nodes with $|D_0|\leq d^3$ and $|Z|\leq d$.

Crucially, we show that each branching increases $|D_0|+|Z|$. For adding $p$ to $D_0$ this is obvious, for adding $p$ to $D_1$ we prove this next. Concretely, we prove that adding $p$ to $D_1$ increases the minimum size of $v,D_1$-separators by at least one. (This is essentially folklore but we provide a proof for completeness.) Use $D'_1=D_1\cup\{p\}$ and let $Z$ and $Z'$ denote minimum $v,D_1$- and $v,D'_1$-separators closest to $v$; use $Y$ and $Y'$ for the components of $v$ in $G-Z$ and $G-Z'$, respectively. We use again the submodular function $f\colon 2^V\to\N\colon  U\to |N_G(U)|$ and obtain
\begin{align}
f(Y)+f(Y')\geq f(Y\cup Y')+f(Y\cap Y').\label{ineq1}
\end{align}
Both $Z$ and $Z'$ separate $v$ from $D_1$. Thus,~$D_1\cap(Y\cup Y')=\emptyset$ and $N(Y\cup Y')$ is also a $v,D_1$-separator, since it creates the component $Y\cup Y'$ for $v$ in $G-N(Y\cup Y')$. Since $Z$ is a minimum~$v,D_1$-separator we must have $f(Y\cup Y')=|N(Y\cup Y')|\geq |Z|=f(Y)$. Plugging this into \eref{ineq1} yields $f(Y\cap Y')\leq f(Y')$. If $f(Y')=|Z'|\leq |Z|=f(Y)$, i.e., if adding $p$ to $D_1$ does not increase the minimum size of $v,D_1$-separators,\footnote{These values coincide with $f(Y)$ and $f(Y')$ since we chose $Z=N(Y)$ and $Z'=N(Y')$ as minimum $v,D_1$- and $v,D'_1$-separators.} then~$f(Y\cap Y')\leq |Z|$ and $N(Y\cap Y')$ is also a $v,D_1$-separator of size at most $|Z|$. However, as $p\notin Y'$, since $Z'$ separates $v$ from $D'_1=D_1\cup\{p\}$, the set $Y\cap Y'$ is a strict subset of $Y$; this is a contradiction to $Z$ being closest to $v$. As a consequence, the size of closest, minimum $v,D_1$-separators increases whenever we branch into adding a so far reachable vertex to $D_1$.

Thus, every child node has $|D'_0|+|Z'|\geq |D_0|+|Z|+1$ and, clearly, neither value decreases when branching. Thus, the process can only reach leaf nodes with $|D_0|\leq d^3$ and $|Z|\leq d$ via internal nodes where $|D_0|+|Z|\leq d^3+d$. It follows that the number of such leaf nodes is upper bounded by $|D|\cdot 2^{d^3+d}\leq d^2k\cdot 2^{d^3+d}$. (Once $|D_0|\geq d^3$ or $|Z|\geq d$ at most one child node can lead to such a leaf, since the other child violates the restriction on $|D_0|$ or $|Z|$; these branches are not counted.)
Thus, $|\Y|\leq d^2k\cdot 2^{d^3+d}$, as claimed.\qed
\end{longproof}

\paragraph{Reducing the size of sets in \Y.} In this part, we explain and prove how to reduce the size of sets $Y\in \Y$ through modifications on the graph $G$. At a high level, this will be achieved by replacing subgraphs $G[Y]$ by ``equivalent'' subgraphs of bounded size. When this is done, we know that the total number of vertices in sets $Y\in \Y$ is $\Oh(k)$.
Since this part is somewhat technical and long, let us try to illustrate it first.

Consider a set $Y\in \Y$ and its (small) neighborhood $Z:=N_G(Y)$. Think of deciding whether~$(G, \phi, k)$ is \yes as a game between two players, Alice and Bob. Alice sees only~$G[Y\cup Z]$ and wants to satisfy the demands of all vertices in~$Y$, and Bob sees only~$G - Y$ and wants to satisfy the demands of the vertices in~$V\setminus Y$. To achieve a small solution the players must cooperate and exchange information about paths between vertices in $Z$, or between $Z$ and vertices of a partial solution, that they can \emph{provide} or that they \emph{require}.

Since our goal is to simplify $G[Y]$, all notation is given using Alice's perspective. Crucially, we know that there are only constantly many nonzero demand vertices in $Y$, which can be seen to imply that the intersection of optimal solutions with $Y$ is bounded (Lemma~\ref{lemma:small-solution-in-Y} below). Thus, Alice can try all partial solutions $S_Y\subseteq Y$ of bounded size and determine what \emph{facilities} each $S_Y$ provides for Bob, and what \emph{requirements} she has on Bob to satisfy her demand vertices using $S_Y$ and further paths through $G-Y$.

Let us be slightly more concrete about facilities and requirements, before making them fully formal. If we fix some partial solution $S_Y\subseteq Y$ then Alice can offer (as facilities) to Bob to connect some subsets of $Z$ to $S_Y$ by disjoint paths in $G[Y\cup Z]$, and to, additionally, provide paths connecting certain sets of vertices in $Z$. There can be a large number of such options for each $S_Y$. Similarly, to fulfill demands in $Y$, Alice may need (as requirements) that Bob can provide paths from certain subsets of $Z$ to a solution and, additionally, paths connecting sets of vertices in $Z$. (Note that there is some asymmetry here since Bob's part is too large to fully analyze, but this will be no problem.) Fortunately, while there may be many choices for $S_Y$, and many facilities and requirements relative to a single $S_Y$, it will turn out that the overall number of things to keep track of is bounded (in $d$); this will be called the \emph{signature} of $G[Y\cup Z]$. Ultimately, we will be able to replace $G[Y\cup Z]$ by a bounded-size graph with the same signature.

We now make our approach formal. For convenience, let us introduce the following notation. %
\newcommand{\indep}[1]{\ensuremath{#1}\nobreakdash-independent\xspace}%
\newcommand{\indepc}[1]{\ensuremath{#1}\nobreakdash-independence\xspace}%
A \emph{separation} of a graph~$G$ is a tuple $(T, U)$ of two vertex subsets $T, U \subseteq V(G)$ such that $T \cup U = V(G)$ and there is no edge between $T \setminus U$ and $U \setminus T$ in $G$. The \emph{order} of a separation~$(T, U)$ is $|T \cap U|$. 
We call a set of paths to be \emph{\indep{v}} if each pair of paths is vertex-disjoint except for possibly sharing $v$ as an endpoint. For a graph $G$, a vertex~$v$, an integer~$i$, and two vertex subsets $A, B \subseteq V(G)$ we define a \emph{$(v, i, A, B)$-constrained path packing} as a set of $i + |A|$ \indep{v} paths from $A \cup \{v\}$ to $B$ in $G$. Herein we explicitely allow $v \notin V(G)$ if $i = 0$.  If $i = 0$, we simplify the notation and speak of $(A, B)$-constrained path packings instead. Note that, regardless of whether $v \in V(G)$, the paths saturate each vertex in~$A$. Furthermore, we tacitly assume that, if $A \cap B \neq \emptyset$ then the paths corresponding to $A \cap B$ in the packing are of length zero, that is, they each comprise a single vertex.

Let us now begin with the definition of signatures. In the following, let~$G$ and $\phi$ %
represent the graph and demand function of an instance of \VdConk. To decrease the necessary notation we also fix a set~$Y \subseteq V(G)$ such that~$|Y \cap D| \leq \demandiny$ and $|N(Y)| \leq d$. (In the kernelization procedure, the role of~$Y$ will be assumed by some set in~$\Y$.) We denote the neighborhood~$N(Y)$ by~$Z$. 

We will first take care of the requirements that Alice has. The facilities will be treated later. As mentioned above, a requirement comprises several sets of paths outside of~$G[Y]$ one set of which needs to be provided by Bob (and his part of the solution) in order to satisfy the demand of a vertex~$v \in D \cap Y$. To this end, we define a \emph{satisfying connector}. In the following $Y \uplus Z$ denotes the disjoint union of the sets $Y$ and~$Z$.

\begin{definition}[Satisfying connector]
  Let~$H$ be a graph on vertex set $Y \uplus Z$, let $v \in Y$, let $v$ have positive demand~$d_v$, and let $S_{Y} \subseteq Y$ be a partial solution. A tuple $(A, B, C)$ with $A, B, C\subseteq Z$, pairwise disjoint, is a \emph{satisfying connector for~$v$ with respect to~$S_Y$ in~$H$} if either~$v \in S_Y$ or there is a $(v, d_v, A, B \cup S_Y)$-constrained path packing in $H - C$. The set of all satisfying connectors for~$v$ with respect to the partial solution~$S_Y$ is denoted by~$\satcon(H, Z, d_v, S_Y, v)$.
\end{definition}

Now we can formally define the requirement of Alice. Intuitively, Bob
should provide paths that ``hit'' each set of satisfying connectors
induced by a vertex with positive demand.

\begin{definition}[Requirement]
  Let~$H$ be a graph on vertex set~$Y \uplus Z$, let~$\phi$ be a demand function on~$Y$, and let~$S_Y \subseteq Y$ be a partial solution. The \emph{requirement}~$\req(H, Z, \phi, S_Y)$ is the collection~$\{\satcon(H,Z,\phi(v),S_Y,v) \mid v \in
  D(H, \phi) \cap Y\}$.
\end{definition}

After the definition of signatures we will prove that the demand of each vertex in~$D(\phi) \cap Y$ can be met if and only if the requirement of Alice is met by a suitable path packing provided by Bob.

Note that, in order to be able to replace~$G[Y]$ by a different graph, we need to know which requirements it imposes for any relevant choice of the partial solution~$S_Y$. Since we are aiming for a polynomial-time kernelization, we also need to be able to compute them in polynomial time. For this, we first bound the size of~$S_Y$.

\begin{lemma}\label{lemma:small-solution-in-Y}
  For each vector connectivity set~$S$ of~$(G, \phi)$, there is a vector connectivity set~$S'$ such that~$|S'| \leq |S|$ and such that~$|Y \cap S'| \leq \demandiny + d$.
\end{lemma}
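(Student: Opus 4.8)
The plan is to start from an arbitrary vector connectivity set~$S$ and, in the only interesting case~$|S \cap Y| > \demandiny + d$, rebuild~$S$ \emph{inside}~$Y$ (if~$|S \cap Y| \leq \demandiny + d$ we may simply take~$S' := S$). So assume~$|S \cap Y| > \demandiny + d$ and write~$Z := N(Y)$, so~$|Z| \leq d$.

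I would replace~$S \cap Y$ by a set~$S'_Y \subseteq Y$ of the form~$S'_Y := (D \cap Y) \cup T$ and set~$S' := (S \setminus Y) \cup S'_Y$. Putting \emph{all} of~$D \cap Y$ into the solution is essentially free: it costs at most~$\demandiny$ vertices by Lemma~\ref{lemma:bounddemandinx} and makes the demand of every vertex inside~$Y$ trivially satisfied, so the whole difficulty is to pick a small~$T \subseteq Y$ that takes over the role of~$S \cap Y$ \emph{for the vertices outside~$Y$}. Here the crucial structural fact is the bottleneck~$Z$ with~$|Z| \leq d$: any path from a vertex~$u \notin Y$ that ends inside~$Y$ must enter~$Y$ through~$Z$. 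Hence the only information about~$S \cap Y$ that is visible from the outside is the connectivity between~$Z$ and~$S \cap Y$ \emph{within}~$H := G[Y \cup Z]$, and because~$|Z| \leq d$ this connectivity has ``rank'' at most~$d$. Concretely I would let~$T$ be the~$Y$-endpoints of a maximum collection of vertex-disjoint~$Z$--$(S \cap Y)$ paths in~$H$ (equivalently, a maximal independent subset of~$S \cap Y$ in the gammoid on~$H$ with source set~$Z$); then~$|T| \leq |Z| \leq d$, so~$|S'_Y| \leq \demandiny + d$, and since~$|S \cap Y| > \demandiny + d$ we get~$|S'| = |S \setminus Y| + |S'_Y| \leq |S \setminus Y| + |S \cap Y| = |S|$.

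It then remains to show that~$S'$ is a vector connectivity set. Vertices of~$D \cap Y$ lie in~$S'$, so their demand is met. For a demand vertex~$u \notin Y$ I would take its~$\phi(u)$ $u$-independent paths to~$S$, keep all those that already end in~$S' \supseteq (S \setminus Y) \cup (D \cap Y)$, and reroute each remaining one --- which ends at some zero-demand vertex~$s \in (S \cap Y) \setminus D$ --- so that it instead ends in~$T$: such a path enters~$Y$ through some vertex of~$Z$, so its tail is contained in~$H$ and runs from a vertex of~$Z$ to~$s$, and by the choice of~$T$ this tail can be redirected to end at a vertex of~$T$. The redirected family stays~$u$-independent because it is obtained from~$u$-independent paths by taking subpaths and by reroutings confined to~$H$. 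Ranging over all such~$u$ establishes feasibility of~$S'$.

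The step I expect to be the main obstacle is exactly this rerouting: I need a \emph{single} set~$T$ of size at most~$d$ that works simultaneously for every outside demand vertex and every way its path system can meet~$Y$, even though the portions of~$u$'s other paths inside~$Y$ that must be avoided are not of bounded size. This is a representative-set / Menger statement for the~$Z$-rooted gammoid on~$H$, in the spirit of the matroid machinery used in Section~\ref{section:fpt}: because every relevant path factors through the at-most-$d$-element bottleneck~$Z$, a subset of~$S \cap Y$ that preserves the~$Z$-connectivity profile inside~$H$ preserves it robustly enough to reroute individual outside paths. Making this precise --- in particular arguing that spanning~$S \cap Y$ in the~$Z$-gammoid of~$H$ suffices to span it in the~$u$-rooted gammoid of the whole graph --- is the technical heart; the remainder is bookkeeping about which vertex of~$Z$ a path first uses and truncating paths there.
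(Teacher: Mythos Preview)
Your overall shape---replace $S\cap Y$ by $(D\cap Y)$ plus something of size at most $d$---matches the paper, but you put the extra $d$ vertices in the wrong place, and this is exactly what creates your acknowledged gap. The paper does not pick a representative $T\subseteq Y$; it simply sets
\[
S' := \bigl(S\setminus(Y\cup Z)\bigr)\cup(D\cap Y)\cup Z.
\]
With $Z\subseteq S'$, any path from a demand vertex $u\notin Y\cup Z$ to $S$ that enters $Y$ can be \emph{truncated} at its first $Z$-vertex, and truncation of vertex-disjoint paths trivially preserves disjointness. No rerouting, no gammoid argument, no worry about other paths' excursions through $Y$---the whole proof is three lines. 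Demands in $Y\cup Z$ are handled because $(D\cap Y)\cup Z\subseteq S'$.

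Your plan instead requires showing that a basis $T$ of $S\cap Y$ in the $Z$-rooted gammoid on $H$ lets you reroute simultaneously, respecting all the $Y$-portions of paths you keep. As you suspected, ``reroute each bad path individually'' fails already on small examples (e.g.\ $Z=\{z_1,z_2\}$, $Y$-vertices $s_1,s_2,s_3$ with $s_1\sim z_1$, $s_2\sim z_2$, $s_3\sim z_1,z_2$, $T=\{s_1,s_3\}$: the pair $u\!-\!z_1\!-\!s_3$, $u\!-\!z_2\!-\!s_2$ forces a global reshuffle). A correct version needs a genuine representative-set lemma for gammoids. That may well be provable, but it is precisely the ``technical heart'' you left open---whereas the paper sidesteps it entirely by spending the $d$ budget on $Z$ rather than inside $Y$.
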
%
\begin{longproof}
  Assume that~$|Y \cap S| \geq \demandiny + d$. Remove all vertices in~$(Y \cup Z) \cap S$ from~$S$ and add all vertices in~$D \cap Y$ as well as all vertices in~$Z$. Denote the resulting vertex set by~$S'$. Recall that $|D \cap Y| \leq \demandiny$ by definition of~$Y$. Hence, $|Y \cap S'| \leq |D \cap Y| + |N_G(Y)| \leq \demandiny + d$ and, thus, $|S'| \leq |S|$. Clearly, $S'$ satisfies all demands of vertices in~$Y$ and $Z$ since $(D\cap Y)\cup Z\subseteq S'$. For vertices $v\in D\setminus(Y\cup Z)$, note that at most $|Z|$ paths used for reaching $S$ from $v$ can traverse $Z$, and all of those can be shortened to end in $Z\subseteq S'$; all other paths avoid $Z$ and thereby $Y\cup Z$, implying that they still end in vertices of $S\setminus(Y\cup Z)=S'\setminus(Y\cup Z)$.\qed
\end{longproof}

We are almost ready to compute the requirements; crucially, we need to check whether there are suitable $(v, d, A, B)$-constrained path packings in polynomial time.

\begin{lemma}\label{lemma:path-packing-poly}
  Let $G$ be a graph, $v \in V(G)$, $d \in \mathbb{N}$, and $A, B \subseteq V(G)$. It is possible to check in polynomial time, whether there is a $(v, d, A, B)$-constrained path packing in $G$.
\end{lemma}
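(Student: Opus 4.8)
The plan is to reduce the question to a single maximum-flow computation in an auxiliary network built from $G$ by the standard vertex-splitting construction, adjusted to reflect the special role of~$v$. Concretely, build a digraph~$G'$ as follows: replace every vertex $u \in V(G) \setminus \{v\}$ by two vertices $u_{\mathrm{in}}, u_{\mathrm{out}}$ joined by an arc $u_{\mathrm{in}} \to u_{\mathrm{out}}$ of capacity~$1$; replace every edge $\{x,y\}$ of~$G$ by the two arcs $x_{\mathrm{out}} \to y_{\mathrm{in}}$ and $y_{\mathrm{out}} \to x_{\mathrm{in}}$ of capacity~$1$; and, when $i \geq 1$, keep $v$ as a single vertex of capacity~$i$. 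Add a source~$s$ with an arc $s \to v$ of capacity~$i$ and arcs $s \to a_{\mathrm{in}}$ of capacity~$1$ for each $a \in A$, and a sink~$t$ with arcs $b_{\mathrm{out}} \to t$ of capacity~$1$ for each $b \in B$ (adding also $v \to t$ of capacity~$i$ if $v \in B$). When $i = 0$ the vertex~$v$, if present, is treated like any other vertex, which also covers the case $v \notin V(G)$; and the capacity-$1$ internal arc of a vertex in $A \cap B$ is exactly what realizes the required length-zero path there. Since the total capacity leaving~$s$ is exactly $i + |A|$, a maximum $s$--$t$ flow in~$G'$ has value at most $i + |A|$, with equality if and only if every arc out of~$s$ is saturated.

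I claim a $(v, i, A, B)$-constrained path packing exists in~$G$ if and only if the maximum $s$--$t$ flow in~$G'$ has value $i + |A|$. For the forward direction, push one unit of flow along the $G'$-image of each of the $i + |A|$ paths: $v$-independence means no vertex other than~$v$ lies on two paths, so the unit vertex-capacities are respected, and at most~$i$ paths (those with endpoint~$v$) meet~$v$, so its capacity is respected too; this gives a flow of value $i + |A|$. For the converse, pick an integral maximum flow of value $i + |A|$ (one exists by integrality of capacities) and decompose it into $i + |A|$ arc-disjoint $s$--$t$ paths plus closed walks; because all arcs out of~$s$ are saturated, exactly~$i$ of these paths begin $s \to v$ and exactly one begins $s \to a_{\mathrm{in}}$ for each $a \in A$. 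Discarding the closed walks and stripping~$s$, $t$, and the in/out split turns each path into a walk in~$G$ from~$v$, resp.\ from the corresponding vertex of~$A$, to a vertex of~$B$, which we shorten to a simple path. The unit vertex-capacities force these paths to be pairwise vertex-disjoint except at~$v$, and the capacity-$1$ arcs $b_{\mathrm{out}} \to t$ force each vertex of~$B$ to be the endpoint of at most one of them, so we recover a $(v, i, A, B)$-constrained path packing. As $G'$ has $\Oh(|V(G)| + |E(G)|)$ vertices and arcs with polynomially bounded integer capacities, a maximum flow is computable in polynomial time, and the claim gives the lemma.

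The step requiring the most care is the bookkeeping around the special vertex~$v$: one has to pick its capacity so that a decomposed integral flow yields a genuinely $v$-\emph{independent} path family rather than a merely arc-disjoint one, and one has to treat uniformly the degenerate situations the definition permits --- $i = 0$ (where $v$ may be absent and the packing is just a family of vertex-disjoint $A$--$B$ paths), $A \cap B \neq \emptyset$ (where the corresponding paths must have length zero, modeled exactly by the saturated internal arc of that vertex), and $v \in B$. Each is handled by the capacity choices above, but verifying all of them is the bulk of a fully rigorous proof; the flow machinery itself is entirely standard.
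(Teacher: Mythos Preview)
Your proposal is correct and follows essentially the same approach as the paper: reduce the existence of a $(v,i,A,B)$-constrained path packing to a single max-flow computation via the standard vertex-splitting trick, with special handling of~$v$. The only cosmetic difference is that the paper realises the special role of~$v$ by adding $d$ fresh copies of~$v$ (and pre-removes $A\cap B$), whereas you keep~$v$ as a single high-capacity vertex and let the internal arc of each $a\in A\cap B$ model the length-zero path; both are equivalent standard devices.
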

\begin{proof}[Sketch]
  We reduce the task to computing a maximum flow in a modified graph: First, remove each vertex in $A \cap B$ from the graph---we can assume that they represent paths of length zero in the desired path packing. Then, add $d$ copies of $v$ to $G$, each adjacent to all neighbors of~$v$. Then, attach a new vertex~$s$ to each vertex in~$A$ and the copies of $v$, and attach a new vertex $t$ to each vertex in $B$. It is not hard to check that there are $d + |A \setminus B|$ internally vertex-disjoint paths from $s$ to $t$ in the modified graph if and only if there is a $(v, d, A, B)$-constrained path packing in the original graph. 

  Checking whether there are enough internally vertex-disjoint $s$-$t$ paths can be done using a folklore reduction to maximum flow: Create a flow network by introducing two vertices $v_{\text{in}}$ and $v_{\text{out}}$ for each vertex~$v \in V(G)$, and also an arc $(v_{\text{in}}, v_{\text{out}})$ of capacity one. Then, for each edge $\{u, v\}$, add two arcs $(u_{\text{out}}, v_{\text{in}}), (v_{\text{out}}, u_{\text{in}})$ with capacity infinity. One can check that there is a flow of value $d + |A \setminus B|$ between $s_{\text{out}}$ and $t_{\text{in}}$ if and only if the desired path packing exists.\qed
\end{proof}

\begin{lemma}\label{lemma:req-poly}
  Let $Y \in \Y$, $Z = N(Y)$, and $H = G[Y \cup Z]$. The set $\{\req(H, Z, \phi|_{Y}, S_Y) \mid S_Y \subseteq Y \wedge |S_Y| \leq \demandiny + d\}$ is computable in polynomial time. Herein, $\phi|_{Y}$ denotes $\phi$ restricted to~$Y$.
\end{lemma}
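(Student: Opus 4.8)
The plan is to observe that, once $d$ is treated as a constant, every object appearing in the statement is either of constant size or polynomially bounded, so a direct enumeration suffices. First I would record the two bounds coming from $Y \in \Y$: by Definition~\ref{definition:y} we have $|Z| = |N(Y)| \leq d$ and $|Y \cap D| \leq \demandiny$. Hence there are at most $4^{|Z|} \leq 4^d$ pairwise-disjoint triples $(A,B,C)$ of subsets of~$Z$ (each vertex of~$Z$ lands in~$A$, in~$B$, in~$C$, or in none of them), and at most $\demandiny$ vertices $v \in D \cap Y$. Consequently each individual requirement $\req(H, Z, \phi|_Y, S_Y)$ is a collection of at most $\demandiny$ sets, each listing at most $4^d$ triples, so its description has size bounded by a function of~$d$ only.

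Next I would enumerate the partial solutions. Since $Y \subseteq V(G)$, there are at most $\binom{|V(G)|}{\demandiny + d} = \Oh(|V(G)|^{\demandiny + d})$ subsets $S_Y \subseteq Y$ with $|S_Y| \leq \demandiny + d$, which is polynomial for fixed~$d$. For each such $S_Y$ and each $v \in D \cap Y$, I would compute $\satcon(H, Z, \phi(v), S_Y, v)$ by iterating over all pairwise-disjoint triples $(A,B,C)$ with $A, B, C \subseteq Z$ and, for each, testing the definition directly: the triple belongs to the set iff $v \in S_Y$, or $H - C$ admits a $(v, \phi(v), A, B \cup S_Y)$-constrained path packing. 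The latter test runs in polynomial time by Lemma~\ref{lemma:path-packing-poly}, applied to the graph $H - C$ with target set $B \cup S_Y$ (and note $v \notin C$ since $v \in Y$ while $C \subseteq Z$, so $H - C$ still contains~$v$, and $B, S_Y$ are disjoint from~$C$). Collecting the surviving triples yields $\satcon(\cdot)$; ranging over $v \in D \cap Y$ yields $\req(H, Z, \phi|_Y, S_Y)$; and ranging over all admissible~$S_Y$ and discarding duplicates yields the claimed set.

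The running-time bookkeeping is then routine: the outer loop has $\Oh(|V(G)|^{\demandiny + d})$ iterations, the two inner loops contribute only a constant factor $\demandiny \cdot 4^d$, each innermost step is a single polynomial-time flow computation, and the concluding deduplication among $\Oh(|V(G)|^{\demandiny + d})$ items of constant size is again polynomial; hence the total is $\poly(|V(G)|)$ for every fixed~$d$. I do not expect a genuine obstacle here. The only points that need a little care are that the brute force over~$S_Y$ stays polynomial precisely because $\demandiny + d = \Oh(1)$ once~$d$ is fixed, and that the feasibility test hidden in the definition of a satisfying connector is exactly an instance of the constrained-path-packing problem already solved by Lemma~\ref{lemma:path-packing-poly}.
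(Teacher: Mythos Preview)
Your proposal is correct and follows essentially the same approach as the paper: enumerate the polynomially many choices of $S_Y$, then for each $S_Y$ and each $v\in D\cap Y$ iterate over the $2^{\Oh(d)}$ disjoint triples $(A,B,C)$ and decide membership in $\satcon$ via the constrained path-packing test of Lemma~\ref{lemma:path-packing-poly}. The additional remarks you give (the $4^{|Z|}$ count, the observation $v\notin C$, and the final deduplication step) are minor elaborations not spelled out in the paper but consistent with its argument.
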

\begin{longproof}
  By enumerating all $S_Y \subseteq Y$ of size at most~$\demandiny +
  d$ in $n^{\Oh(d^3)}$~time, i.e. polynomial time, the task reduces to
  computing~$\req(H, Z, \phi|_{Y}, S_Y)$ for a given~$S_Y$. To do
  this, we compute the set of satisfying connectors for each vertex
  $v$ in~$D \cap Y$. This in turn we do by simply iterating over all
  tuples~$(A, B, C)$ with $A, B, C \subseteq Z$, pairwise disjoint,
  and we check whether it is contained in~$\satcon(H, Z, \phi(v), S_Y,
  v)$. Note that~$|Z| \leq d$. Hence, there are at most~$2^{\Oh(d)}
  \in \Oh(1)$ different tuples to check. Thus, it remains to check
  whether in~$H - C$ there is a $(v, \phi(v), A, B \cup
  S_Y)$-constrained path packing. This can be done using
  Lemma~\ref{lemma:path-packing-poly}.\qed
\end{longproof}

So far we have only talked about the requirements that Alice has on Bob. Now let us come to the facilities that Alice provides. As mentioned, a facility models the sets of paths inside of~$G[Y]$ that Alice, using her part of the solution, provides to Bob so to satisfy the demand of each vertex in~$D \setminus Y$. Let us first focus on vertices in~$D \setminus (Y \cup Z)$.

\begin{definition}[Provided connector]
  Let $H$ be a graph on vertex set~$Y \uplus Z$, and let~$S_Y \subseteq Y$ be a partial solution. A tuple $(A, B, C)$ with $A, B, C \subseteq Z$, pairwise disjoint, is a \emph{provided connector of~$S_Y$ in~$H$} if there is a $(A,B \cup S_Y)$-constrained path packing in $H - C$.
\end{definition}

We prove below that all demands of vertices in $D \setminus (Y \cup Z)$ can be met if and only if Alice provides suitable path packings to Bob. We take special care of vertices in~$D \cap Z$, as they may need multiple paths into~$G[Y]$.

\begin{definition}[Provided special connector]\label{definition:providedspecialconnector:mk2}
  Let $H$ be a graph on vertex set $Y \uplus Z$ and let $S_Y\subseteq Y$. A tuple $(z,i,A,B,C)$ with $z\in Z$, $A,B,C\subseteq Z \setminus \{z\}$, pairwise disjoint, and $i\in\{0,\ldots,d\}$ is a \emph{provided special connector of $S_Y$ in $H$} if there is a $(z, i, A, B \cup S_Y)$-constrained path packing in $H - C$.
\end{definition}

We are now ready to give a formal definition of the facilities provided by Alice.

\begin{definition}[Facility]
  Let $H$ be a graph on vertex set~$Y \uplus Z$, and let~$S_Y \subseteq Y$ be a partial solution. The \emph{facility~$\fac(H, Z, S_Y)$ of $S_Y$ in $H$} is the set of all provided connectors and provided special connectors of~$S_Y$.
\end{definition}

Similarly to requirements, we need an efficient algorithm for computing the facilities; this basically follows from Lemma~\ref{lemma:path-packing-poly}.

\begin{lemma}\label{lemma:fac-poly}
  Let $Y \in \Y$, $Z = N(Y)$, and $H = G[Y \cup Z]$. The set~$\{\fac(H, Z, S_Y) \mid S_Y \subseteq Y \wedge |S_Y| \leq \demandiny + d\}$ is computable in polynomial time.
\end{lemma}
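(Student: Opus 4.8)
The plan is to mirror the structure of the proof of Lemma~\ref{lemma:req-poly}, exploiting the fact that $|N(Y)| \leq d$ keeps all the combinatorial objects we need to enumerate of constant size. First I would iterate over all partial solutions $S_Y \subseteq Y$ with $|S_Y| \leq \demandiny + d$; since $|Y|$ is at most $n$, there are $n^{\Oh(d^3)}$ such sets, which is polynomial for fixed $d$. Thus it suffices to show that for a fixed $S_Y$ we can compute $\fac(H, Z, S_Y)$ in polynomial time.

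For a fixed $S_Y$, the facility is by definition the set of all provided connectors together with all provided special connectors. For the provided connectors $(A, B, C)$ we have $A, B, C \subseteq Z$ pairwise disjoint, and since $|Z| \leq d$ there are at most $4^d \in \Oh(1)$ such tuples to consider; for each we must decide whether there is an $(A, B \cup S_Y)$-constrained path packing in $H - C$, which is exactly an $(v, 0, A, B\cup S_Y)$-constrained path packing and hence is decidable in polynomial time by Lemma~\ref{lemma:path-packing-poly}. For the provided special connectors $(z, i, A, B, C)$ we additionally choose $z \in Z$ (at most $d$ choices) and $i \in \{0, \ldots, d\}$ (at most $d+1$ choices), so again only $\Oh(1)$ tuples in total, and deciding membership amounts to checking for a $(z, i, A, B \cup S_Y)$-constrained path packing in $H - C$, once more handled in polynomial time via Lemma~\ref{lemma:path-packing-poly}. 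Collecting the tuples that pass the test yields $\fac(H, Z, S_Y)$.

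I do not expect any real obstacle here: the only point that needs a sentence of care is bookkeeping the total running time, namely that we multiply the $n^{\Oh(d^3)}$ choices of $S_Y$ by the $\Oh(1)$ tuples and by the polynomial cost of each path-packing check from Lemma~\ref{lemma:path-packing-poly}, so the whole computation is polynomial for every fixed $d$. One should also note explicitly that, although $(z, i, A, B, C)$ in Definition~\ref{definition:providedspecialconnector:mk2} permits $z \notin V(H)$ only implicitly through the $i=0$ convention of constrained path packings, this causes no issue since $z \in Z \subseteq V(H)$ always holds here. With these remarks the proof is complete.

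\begin{proof}
  As in the proof of Lemma~\ref{lemma:req-poly}, we enumerate all sets $S_Y \subseteq Y$ with $|S_Y| \leq \demandiny + d$ in $n^{\Oh(d^3)}$ time, so the task reduces to computing $\fac(H, Z, S_Y)$ for a fixed $S_Y$. By definition, $\fac(H, Z, S_Y)$ consists of all provided connectors and provided special connectors of $S_Y$ in $H$.

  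To find the provided connectors, we iterate over all tuples $(A, B, C)$ with $A, B, C \subseteq Z$ pairwise disjoint; since $|Z| \leq d$ there are at most $4^d \in \Oh(1)$ such tuples. For each tuple, we must decide whether there is an $(A, B \cup S_Y)$-constrained path packing in $H - C$. This is exactly a $(v, 0, A, B \cup S_Y)$-constrained path packing for an arbitrary $v$ (the case $i = 0$ of Lemma~\ref{lemma:path-packing-poly}, which explicitly permits $v \notin V(H - C)$), and hence is decidable in polynomial time by Lemma~\ref{lemma:path-packing-poly}.

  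To find the provided special connectors, we iterate over all tuples $(z, i, A, B, C)$ with $z \in Z$, $A, B, C \subseteq Z \setminus \{z\}$ pairwise disjoint, and $i \in \{0, \ldots, d\}$. The number of choices for $z$ is at most $d$, for $i$ at most $d + 1$, and for $(A, B, C)$ at most $4^d$, so the total is $\Oh(1)$. For each such tuple we check, using Lemma~\ref{lemma:path-packing-poly}, whether there is a $(z, i, A, B \cup S_Y)$-constrained path packing in $H - C$, again in polynomial time. Here $z \in Z \subseteq V(H)$, so the path packing is well-defined in the ordinary sense.

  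Collecting all tuples passing the respective tests yields $\fac(H, Z, S_Y)$. The total running time is $n^{\Oh(d^3)}$ (choices of $S_Y$) times $\Oh(1)$ (tuples) times the polynomial cost of each check from Lemma~\ref{lemma:path-packing-poly}, which is polynomial for every fixed $d$.\qed
\end{proof}
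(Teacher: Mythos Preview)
Your proposal is correct and follows essentially the same approach as the paper's own proof: enumerate all $S_Y\subseteq Y$ of size at most $\demandiny+d$ in $n^{\Oh(d^3)}$ time, then for each $S_Y$ iterate over the $\Oh(1)$ candidate tuples $(A,B,C)$ and $(z,i,A,B,C)$ and test each via Lemma~\ref{lemma:path-packing-poly}. Your write-up is slightly more detailed (explicit $4^d$ counts and the remark on the $i=0$ case), but the argument is the same.
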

\begin{longproof}
  We first enumerate all $S_Y \subseteq Y$ with $|S_Y| \leq \demandiny + d$ in $n^{\Oh(d^3)}$~time and, for each such $S_Y$, we compute all provided connectors and provided special connectors. This is done by iterating over all possible tuples~$(A, B, C)$ and~$(z, i, A, B, C)$ (there are at most $d^2 \cdot 2^{\Oh(d)} \in \Oh(1)$ of them) and checking whether they indeed are provided (special) connectors. This is done using Lemma~\ref{lemma:path-packing-poly}.\qed
\end{longproof}

Now we can precisely define the signature of~$G[Y\cup Z]$ that we mentioned earlier.

\begin{definition}[Signature]
  Let $H$ be a graph on vertex set~$Y \uplus Z$, let $\phi$ be a demand function on~$Y$, and let~$S_Y \subseteq Y$ be a partial solution. The \emph{signature} of~$H$ is the set~$$\sig(H, Z, \phi) := \{(|S_Y|, \req(H, Z, \phi, S_Y), \fac(H, Z, S_Y))\}\text{,}$$
  where $S_Y$ ranges over all~$S_Y \subseteq Y$ such that~$|S_Y| \leq \demandiny + d$.
\end{definition}

We show below that we can safely replace~$G_Z[Y \cup Z]$ with any graph $G'[Y' \cup Z]$ that has the same signature. Let us now prove that there is a graph~$G'[Y' \cup Z]$ of constant size with the same signature.

\begin{lemma}\label{lemma:replacement-Y}
  There is a polynomial-time algorithm that receives $Y \in \Y$, $Z = N(Y)$, $G[Y \cup Z]$, and~$\phi|_{Y}$ as input and computes a graph~$G'$ on vertex set~$Y' \cup Z$ such that $G'[Z] = G[Z]$ and a demand function~$\phi' \colon Y' \to \{0, \ldots, d\}$ such that~$\sig(G[Y \cup Z], Z, \phi|_{Y}) = \sig(G', Z, \phi')$ and $|D(G', \phi')| \leq \demandiny$. Moreover, the resulting $G'$ and~$\phi'$ are encoded using at most~$f(d)$ bits and $G'$ has at most $f(d)$ vertices, for some computable function~$f$ depending only on~$d$.
\end{lemma}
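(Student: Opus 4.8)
\medskip
\noindent\emph{Proof plan.}
The plan is to reduce the lemma to an exhaustive search over bounded-size candidate graphs, guided by the signature. First I would compute $\sigma:=\sig(G[Y\cup Z],Z,\phi|_Y)$, which by definition amounts to computing, for every $S_Y\subseteq Y$ with $|S_Y|\leq\demandiny+d$, the pair $(\req(G[Y\cup Z],Z,\phi|_Y,S_Y),\fac(G[Y\cup Z],Z,S_Y))$; this is polynomial time by Lemmas~\ref{lemma:req-poly} and~\ref{lemma:fac-poly}, whose proofs only use $|Y\cap D|\leq\demandiny$ and $|Z|\leq d$. The algorithm then enumerates, for $m=0,1,2,\ldots$, all graphs $G'$ on a vertex set $Y'\uplus Z$ (with $Y'$ disjoint from $Z$) having $|Y'|=m$ and $G'[Z]=G[Z]$, together with all demand functions $\phi'\colon Y'\to\{0,\ldots,d\}$ with $|D(G',\phi')|\leq\demandiny$; for each such pair it computes $\sig(G',Z,\phi')$ by the very same procedure (again applicable since $\sig$, $\req$, $\fac$ are defined for arbitrary graphs on $Y'\uplus Z$, and the relevant counts are still bounded by $d$), and it outputs the first pair whose signature equals $\sigma$. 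Since $(G[Y\cup Z],\phi|_Y)$ is itself such a pair, the search always succeeds; the only thing left to justify is that it succeeds already for some $|Y'|\leq f(d)$ with $f$ computable, for then the search inspects at most $f(d)^{\Oh(1)}$ candidates, each in time polynomial in $|Y'|+|Z|=\Oh(f(d))$, so the whole procedure runs in polynomial time and the output $G'$ has at most $f(d)+d$ vertices and is encoded in $\Oh(f(d))$ bits; renaming $f$ yields the stated form.

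To get at such a bound $f(d)$ I would first observe that the number of \emph{distinct} signatures possible is bounded by a computable function $g(d)$: since $|Z|\leq d$ there are only $2^{\Oh(d)}$ triples $(A,B,C)$ and $d^2\cdot 2^{\Oh(d)}$ tuples $(z,i,A,B,C)$ over $Z$, hence at most $2^{2^{\Oh(d)}}$ possible facilities, at most $2^{2^{\Oh(d)}}$ possible sets $\satcon(\cdot,v)$, hence at most $2^{2^{2^{\Oh(d)}}}$ possible requirements, and $|S_Y|$ ranges over $\{0,\ldots,\demandiny+d\}$; so a signature is a subset of a universe whose size depends only on $d$. Consequently the relation ``having the same signature'' has at most $g(d)$ classes on the set of boundaried graphs with boundary $Z$ of size at most $d$, at most $\demandiny$ demand vertices, and a fixed induced boundary graph $H[Z]$.

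The remaining and genuinely hard step is to show that every such class contains a representative on at most $f(d)$ vertices with $f$ computable; plugging this $f$ into the search above then completes the proof. The cleaner route I would take here is a cut-preservation (mimicking-network) argument: by Lemma~\ref{lemma:path-packing-poly} the pair $(\req,\fac)$ attached to a partial solution $S_Y$ is determined solely by which constrained path packings exist between subsets of the terminal set $T:=Z\cup(D\cap Y)$, evaluated in $H=G[Y\cup Z]$ and in each subgraph $H-C$ with $C\subseteq Z$, and a partial solution of size at most $\demandiny+d$ influences these packings only through its routing profile to the at most $d$ vertices of $Z$, of which there are only boundedly many types in terms of $d$; hence $H$ may be replaced by a graph of size $2^{2^{\Oh(d^3)}}$ (a mimicking network for the at most $d+\demandiny$ terminals of $T$, augmented to record the finitely many partial-solution profiles) without changing the signature. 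The main obstacle will be carrying out this invariance check with care: one has to handle (i) the non-unit ``source capacity'' $d_v$ at a demand vertex $v$, encoded by its $d_v$ copies; (ii) the deletions of arbitrary $C\subseteq Z$ inside the queries; and (iii) partial solutions $S_Y$ that may contain non-terminal vertices of $Y$. Alternatively one can appeal to finite integer index of $\sig$ in the style of Fomin et al.~\cite{FominLST13} to get the existence of bounded-size representatives, using the computable bound $g(d)$ on the number of classes to make the search depth explicit; but the mimicking-network route has the advantage of yielding, as desired, a single flow-based algorithm valid for all~$d$.
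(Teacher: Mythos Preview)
Your proposal is correct and matches the paper's approach at the algorithmic level: compute $\sigma=\sig(G[Y\cup Z],Z,\phi|_Y)$ via Lemmas~\ref{lemma:req-poly} and~\ref{lemma:fac-poly}, then enumerate boundaried graphs (with the fixed $G[Z]$) in increasing size and output the first one whose signature equals~$\sigma$, relying on $(G[Y\cup Z],\phi|_Y)$ itself as a termination witness. The divergence is only in how you bound where the search terminates.

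You label this ``the remaining and genuinely hard step'' and propose to resolve it by a mimicking-network construction over the terminal set $Z\cup(D\cap Y)$, with extra bookkeeping for partial-solution profiles. The paper's argument is much shorter. Having noted---exactly as you do---that $\sigma$ together with $G[Z]$ (after identifying $Z$ with $\{1,\ldots,|Z|\}$) can be encoded in at most $g(d)$ bits for a computable~$g$, the paper simply observes that \emph{this bounded-size object is the only input to the search procedure}; the original graph is used only to produce $\sigma$ and is then forgotten. Since the search halts on every input it is ever given, and its behaviour is fully determined by at most $g(d)$ bits, its running time is bounded by some $f'(g(d))$. No explicit small representative is ever built; the bound comes abstractly from the bounded encoding size of signatures. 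Your alternative appeal to finite integer index is closer in spirit to this, but even that is more machinery than the paper invokes.

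Your mimicking-network route would also succeed and would give an explicit bound, but the obstacles you correctly flag---the non-unit source capacity at demand vertices, the deletions of arbitrary $C\subseteq Z$, and especially partial solutions $S_Y$ containing non-terminal vertices of~$Y$---are real and would require genuine work; the paper's abstract argument sidesteps all of them.
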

\begin{longproof}
  The algorithm is as follows. First, compute $\sig(G[Y \cup Z], Z, \phi|_{Y})$ in polynomial time using Lemmas~\ref{lemma:req-poly} and~\ref{lemma:fac-poly}. Then, generate all graphs~$G'$ with $G'[Z] = G[Z]$ in the order of increasing number of vertices (break the remaining ties arbitrarily). For each graph~$G'$, iterate over all possible~$\phi' \colon V(G') \setminus Z \to \{0, \ldots, d\}$ and check whether~$\sig(G[Y \cup Z], Z, \phi|_{Y}) = \sig(G', Z, \phi')$ as well as $|D(G', \phi')| \leq \demandiny$. Clearly, this procedure terminates and finds the required tuple~$(G', \phi')$, because~$(G[Y \cup Z], \phi|_Y)$ witnesses its existence.

  Without loss of generality, we may assume~$Z = \{1, \ldots, |Z|\}$. 
  Now note that both requirements and facilities contain only set systems over~$Z$, tuples of elements of~$Z$, numbers in~$\{1,\ldots,\Oh(d^3)\}$, and the number of these entities is bounded by a function of~$d$. Hence, $\sig(G[Y \cup Z], Z, \phi|_{Y})$ can be encoded using at most~$g(|Z|) \leq g(d)$ bits, where~$g$ is some monotone ascending, computable function. Thus, since~$\sig(G[Y \cup Z], Z, \phi|_{Y})$ is the only input to the procedure that finds~$G'$ and~$\phi'$, the procedure terminates after at most $f'(g(d))$~steps for some computable function~$f'$, meaning that~$(G', \phi')$ is of size at most~$f'(g(d))$.\qed
\end{longproof}

Now we can make the notion of replacing~$G[Y]$ more precise; it involves an operation commonly referred to as glueing of graphs.

\begin{definition}[Glueing]
  Let $G_1, G_2$ be two graphs, both containing $Z$ as a subset of their vertices. \emph{Glueing $G_1$ and~$G_2$ on $Z$} results in a graph denoted by~$G_1 \oplus_Z G_2 := (V(G_1) \cup V(G_2), E(G_1) \cup E(G_2))$, where~$V(G_1)$ and~$V(G_2)$ are treated as being disjoint except for~$Z$.
\end{definition}

Below we only glue on~$Z$ for some vertex set~$Z$ defined in the context, so we will omit the index~$Z$ in the~$\oplus$ operation.

We arrive at the reduction rule aiming at reducing the size of $Y$.

\begin{redrule}\label{rule:replace-Y}
  Let~$(G, \phi, k)$ be an instance of \VdConk{} that is reduced with respect to Reduction Rules~\ref{rule:reducedemand} and~\ref{rule:bounddemand}. Let~$Y \in \Y$, where $\Y$ is as in Definition~\ref{definition:y}, and let~$Z = N_G(Y)$. Furthermore, let $(G', \phi')$ be as in Lemma~\ref{lemma:replacement-Y}. If~$|Y| > |V(G') \setminus Z|$, then replace~$G$ by~$G' \oplus (G - Y)$ and replace~$\phi$ by~$\phi|_{V(G) \setminus Y} \cup \phi'$.
\end{redrule}

Before proving that Rule~\ref{rule:replace-Y} is safe, we need a technical lemma that shows how paths from a demand vertex to a vertex-connectivity set are split over a separator.

\begin{lemma}\label{lem:path-packings-over-separator}
  Let $G$ be a graph, $(T, U)$ a separation of $G$, $Z = T \cap U$, $S \subseteq V(G)$, $v \in V(G) \setminus S$, and $d \in \mathbb{N}$. There are $d$ \indep{v} paths from~$v$ to $S$ in $G$ if and only if there is an integer~$i \in \{0, \ldots, d\}$ and a partition of $Z\setminus \{v\}$ into four vertex sets $A, B, C, D$ such that
  \begin{enumerate}
  \item if $v \in T \setminus U$ then $i = d$, and if $v \in U \setminus T$ then $i = 0$,
  \item there is a $(v, i, A, B \cup (S \setminus U))$-constrained path packing in~$G[T \setminus C]$, and %
  \item there is a $(v, d - i, B, A \cup (S \cap U))$-constrained path packing in~$G[U \setminus D]$.%
  \end{enumerate}
\end{lemma}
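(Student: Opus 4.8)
The plan is to prove both directions by standard surgery on path packings across the separator $(T,U)$, using the fact that every $v,S$-path in $G$ crosses $Z = T\cap U$ in a controlled way. For the forward direction, suppose we are given $d$ \indep{v} paths $P_1,\ldots,P_d$ from $v$ to $S$ in $G$. Cut each $P_j$ at its \emph{first} vertex of $Z$ (if any), reading from the $v$-end; this splits $P_j$ into an initial segment lying in one side and a tail lying in the other. For each $z\in Z\setminus\{v\}$ put $z$ into the set $C$ if $z$ is hit by none of the paths, into $A$ if the first path through $z$ reaches it from the $T$-side (so its tail continues in $U$), into $B$ if it reaches it from the $U$-side, and into $D$ if $z$ itself is the endpoint in $S$ or is used only as an endpoint; the precise bookkeeping of the four classes will be dictated by which of the two required constrained path packings we want the relevant segment to land in. Setting $i$ to the number of the $P_j$ whose first edge leaves $v$ into $T\setminus U$ (with $i=d$ forced if $v\in T\setminus U$ and $i=0$ if $v\in U\setminus T$, giving item~1), the $T$-segments together with the already-present vertices of $A$ form a $(v,i,A,B\cup(S\setminus U))$-constrained path packing in $G[T\setminus C]$, and symmetrically for $U$. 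The key points to verify are that the segments are still \indep{v}, that they saturate $A$ (resp.\ $B$) by construction, and that no segment meets $C$.

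For the reverse direction, take the integer $i$, the partition $A,B,C,D$, and the two constrained path packings $\mathcal{P}_T$ in $G[T\setminus C]$ and $\mathcal{P}_U$ in $G[U\setminus D]$. Concatenate: every path of $\mathcal{P}_T$ that ends in $B$ is prolonged by the unique path of $\mathcal{P}_U$ starting at that same vertex of $B$ (recall $B$ is saturated by $\mathcal{P}_U$ as a source set of the constrained packing), and symmetrically paths of $\mathcal{P}_U$ ending in $A$ are prolonged by the matching path of $\mathcal{P}_T$; paths ending directly in $S\setminus U$ or $S\cap U$ are kept as they are. Because $A,B,C,D$ partition $Z\setminus\{v\}$, the glued curves meet $Z$ only at the designated handover vertices, so distinct glued curves share no vertex except possibly $v$; after discarding cycles and shortcutting any accidental repetitions we obtain genuine \indep{v} paths. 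A short count shows their number is $i+|A| + (d-i)+|B| - (\text{the }|A|+|B|\text{ handovers}) = d$, so we get exactly $d$ \indep{v} paths from $v$ to $S$ in $G$.

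The main obstacle I expect is the combinatorial accounting of the four-way partition $A,B,C,D$ and the matching integer $i$ so that the two halves really compose into \indep{v} paths with the right multiplicities: one must be careful that a vertex of $Z$ appearing in $A$ is used as a \emph{sink} of $\mathcal{P}_U$ and a \emph{source} of $\mathcal{P}_T$ (hence saturated on both sides, so the concatenation uses it exactly once), that $C$ collects exactly the $Z$-vertices forbidden on the $T$-side and $D$ those forbidden on the $U$-side, and that the special cases $v\in Z$, $v\in T\setminus U$, $v\in U\setminus T$ are handled uniformly — this is where the constraint $i=d$ or $i=0$ in item~1 comes from. The role of the length-zero convention for sources lying in the sink set (from the definition of constrained path packing) also needs to be invoked when $A\cap(S\cap U)\neq\emptyset$ or $B\cap(S\setminus U)\neq\emptyset$. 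Everything else is routine Menger-style path surgery, and submodularity is not needed here.
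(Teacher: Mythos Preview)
Your overall plan---split the paths along the separator, classify the separator vertices, and glue back---is the same as the paper's, but your forward direction has a real gap.

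You propose to cut each path $P_j$ at its \emph{first} vertex of $Z$, asserting that this yields an initial segment ``lying in one side and a tail lying in the other.'' That is false in general: a $v,S$-path may cross $Z$ several times, so the tail after the first $Z$-vertex need not stay in $U$. Concretely, with $v\in T\setminus U$, $z_1,z_2\in Z$, $t\in T\setminus U$, $s\in S\cap(U\setminus T)$ and the single path $v\,z_1\,t\,z_2\,s$, your ``tail'' $z_1\,t\,z_2\,s$ contains $t\notin U$, so it is not a path in $G[U\setminus D]$ for any $D$. Consequently your classification of $Z\setminus\{v\}$ into $A,B,C,D$ also breaks down: the role you assign to $D$ (``endpoint in $S$ or used only as an endpoint'') is not what is needed. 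The correct fix, which the paper carries out, is to take \emph{all} maximal directed subpaths of $\P$ that live on the $T$-side (after deleting the edges inside $Z$), and then classify each $z\in Z\setminus\{v\}$ according to whether it is a source, a sink, an internal vertex, or unused in that family of subpaths; these four cases give $A,B,D,C$ respectively. In the example above this puts $z_1\in D$ (internal on the $T$-side, hence removed from the $U$-side) and $z_2\in B$, and both required packings exist.

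Your reverse direction is closer, but the bookkeeping ``$i+|A|+(d-i)+|B|-(|A|+|B|)=d$'' does not establish what you need: it tacitly assumes every vertex of $A$ receives a $\P_U$-path and every vertex of $B$ receives a $\P_T$-path, which is not guaranteed (a vertex of $A$ may have in-degree~$0$ in the union and then launches a path not starting at $v$). The robust argument, as in the paper, is to look at the directed graph formed by the union of the two packings, observe that every vertex other than $v$ has in- and out-degree at most one, that $v$ has out-degree exactly $d$, and that in the weakly connected component of $v$ the only source is $v$ and every sink lies in $S$; the $d$ out-arcs at $v$ then trace $d$ \indep{v} paths to $S$. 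Your ``discard cycles'' remark points in the right direction but does not replace this degree count.
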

\begin{longproof}
  ($\Rightarrow$): Assume first that there are $d$ \indep{v} paths from $v$ to $S$ in $G$ and let $\P$ be a corresponding path packing (with overlap only in start vertex $v$). We may safely assume that paths in $\P$ have no vertices of $S$ as internal vertices; else they could be shortened. We will select $A,B,C,D\subseteq Z \setminus \{v\}$ and $i\in\{0,\ldots,d\}$ such that the path packings exist as stated in the lemma. For the purpose of getting a clear partitioning of the edges contained in $Z$, we show that one of the packings exists in $G[T \setminus C] - E(G[Z])$ and one of them in the remainder of the graph. Let us shorthand $H$ for~$G[T \setminus C] - E(G[Z])$.
  
  \newcommand{\dir}[1]{\ensuremath{\vec{#1}}\xspace}
  Consider all paths in $\P$ as being directed from $v$ towards $S$, and consider the set $\P_H$ of maximal, directed subpaths in~$H$ of paths in~$\P$ such that each path in $\P_H$ contains at least one arc. Denote by \dir{H} the directed subgraph of $H$ induced by~$\P_H$. That is, \dir{H} contains precisely the vertices and arcs also contained in the paths in~$\P_H$. We can now pick the sets $A,B,C,D\subseteq Z\setminus \{v\}$. The source vertices in~\dir{H} not equal to~$v$ form the set~$A$. Note that all source vertices except possibly $v$ are contained in~$Z$ as each vertex on a path in $\P_H$ but not in~$Z \cup \{v\}$ must have a predecessor. Similarly, sink vertices in~\dir{H} are contained in $Z \cup S$; we put those sink vertices that are contained in~$Z \setminus \{v\}$ into~$B$. Note that, as each path in $\P_H$ has length at least one, there are no vertices of in- and outdegree zero and hence $A \cap B = \emptyset$. Vertices in~$Z$ that are used by paths in $\P_H$, but that are neither sources nor sinks, are put into~$D$. Vertices of $Z\setminus \{v\}$ that are not on any path in $\P_H$ are put into~$C$. Clearly, $A, B, C, D$ is a partition of~$Z \setminus \{v\}$. Finally, we define $i = d$ if $v \in T \setminus U$, $i = 0$ if $v \in U \setminus T$, and if $v \in Z$, then $i$ is defined as the outdegree of~$v$ in~\dir{H}. The condition on $d$ and $i$ in the lemma is clearly fulfilled. We claim that $\P_H$ is the desired path packing in~$G[T \setminus C]$.

  \emph{Showing that $\P_H$ is a $(v, i, A, B \cup (S \setminus U))$-constrained path packing in $G[T \setminus C]$:} Clearly, $H$ is a subgraph of $G[T \setminus C]$ and hence $\P_H$ is contained in $G[T \setminus C]$. Observe that, since the paths in $\P_H$ are vertex-disjoint (except for $v$) and each path has length at least one, sources and sinks in \dir{H} correspond to endpoints of these paths. Combining this with our observation from above that sources and sinks in \dir{H} are in $A \cup\{v\}$ and $B \cup S$, respectively, we infer that each path in $\P_H$ starts in either~$v$ or $A$, and ends in~$B \cup (S \setminus U)$. By definition of~$A$, each vertex~$w\in A$ has a path in $\P_H$ starting in~$w$ and, furthermore, by the definition of $i$, there are $i$~paths in~$\P_H$ that start in~$v$. Hence, $\P_H$~witnesses that there are $|A| + i$~paths from~$A \cup \{v\}$ to $B \cup (S \setminus U)$ in~$H$; moreover, these paths do not touch~$C$ by definition. As the paths in~$\P$ are \indep{v}, so are the paths in~$\P_H$. Hence, $\P_H$~is a $(v, i, A, B \cup (S \setminus U))$-constrained path packing in $G[T \setminus C]$.

  \emph{Showing existence of a $(v, d - i, B, A \cup (S \cap U))$-constrained path packing in $G[U \setminus D]$:} Take the path packing~$\P$ and define a path packing $\P'$ that contains all maximal subpaths of $\P$ in~$U \setminus D$. Note that $\P'$ may contain paths of length zero. We consider also $\P'$ as a set of directed paths, each arc inheriting its direction from~$\P$. Clearly, $\P'$ is contained in $G[U \setminus D]$. We claim that $\P'$ is also $(v, d - i, B, A \cup (S \cap U))$-constrained.

  Consider the directed subgraph \dir{G'} of~$G[U]$ induced by~$\P'$. Let us find the endpoints of the paths in $\P'$. Clearly, if $v \in U \setminus D$, then $v$ is such an endpoint. For the remaining endpoints, first, consider a path of length zero, represented by a vertex~$w \neq v$. Since each path in $\P$ has length at least one and since $w \neq v$, $w$ has a predecessor~$u$ on a path in $\P$. Since $w$ represents a path of length zero, $u \in (T \setminus U) \cup D$. By definition of $D$ (since $H$ does not contain any edges in $Z$), each successor of a vertex in $D$ on a path of $\P$ is contained in $T \setminus U$. Hence, in fact $u \in T \setminus U$. The only vertices in $U$ that have neighbors in $T \setminus U$ are contained in $T \cap U = Z$. This implies that $w \in Z$ and hence $w \in Z \setminus C$. Since $\P'$ has empty intersection with $D$, we moreover have $w \notin D$. Hence, only two possibilities remain: $w \in A$ and $w \in B$. Assume that $w \in A$. Since the vertices in $A$ are sources of $\dir{H}$, this implies that there is a path starting in $w \neq v$ in $\P$, a contradiction. It follows that $w \in B$. Since $B$ represents sinks in $\dir{H}$, we moreover have $w \in S$ as, otherwise, there would be a path in $\P$ ending in a vertex not contained in $S$. Thus, as also $w \in U$, each path of length zero in $\P'$ ends in $S \cap U$ (and starts in~$B$). 

  Next, consider paths of length at least one in $\P'$. Since they are pairwise vertex-disjoint (except for $v$), their endpoints correspond to the sources and sinks in \dir{G'}. Let $w$ be a sink in \dir{G'} that is not contained in~$S$. Since $w$ is not in~$S$, it has a successor~$x$ on a path in $\P$. As above, by the definition of $D$, each predecessor of a vertex in $D$ on a path in $\P$ is contained in~$T \setminus U$. Hence, in fact $x \in T \setminus U$. The only vertices in~$U$ that have neighbors in~$T \setminus U$ are contained in $T \cap U = Z$. Hence, we have~$w \in Z$. Observe that~$w \neq v$ as, otherwise, $\P$ contains a cycle. The paths in~$\P$ are vertex-disjoint, thus, $w$ does not have any incoming arcs in~\dir{H}, meaning that it is a source in \dir{H}. This implies $w \in A$ by definition of~$A$. Thus we obtain that $\P'$ is a packing of paths, each of which ends in~$A \cup (S \cap U)$.

  It remains to prove that $\P'$ contains $|B| + d - i$ paths that start in~$\{v\} \cup B$; their \indepc{v} is implied by the fact that these paths are subpaths of paths in~$\P$. We claim that there are $d - i$ paths in $\P'$ starting in~$v$. First, if $v \notin U$ then $i = d$ by definition; hence, the claim is trivially true. If $v \in U \setminus T$ then, $i = 0$ and, clearly, each path in $\P$ that starts in $v$ induces one such path in $\P'$. Thus, the claim holds also in this case. Finally, if $v \in Z$, then $i$ is the outdegree of $v$ in $\dir{H}$. Recall that $H$ does not contain any edge in $Z$. Hence, also in the final case there are~$d - i$~paths in~$\P'$ that start in~$v$. 
  
  To find the remaining $|B|$ paths, consider a vertex~$w \in B$. Note that $w \neq v$ because $v \notin B$. By definition, $w$~is a sink in~\dir{H} and, since it is a part of a path in~$\P$ reaching $S$, it either is contained in~$S$ or has a successor on~$\P$ which is not contained in~$T \setminus C$. In the first case, $w$ represents a length-zero path starting in~$B$ in $\P'$. In the second case, $w$ is a source in \dir{G'} by the vertex-disjointness of the paths in~$\P$. Since the choice of~$w$ is arbitrary, and since the paths in $\P$ are vertex-disjoint, each vertex in $B \setminus S$ is a source in $\dir{G'}$ and hence has a path in $\P'$ starting in this vertex. Thus, overall, there are $|B| + d - i$ paths from $\{v\} \cup B$ to $A \cup (S \cap U)$ in $G[U \setminus D]$ which are \indep{v}, as required. This completes the ``if'' part of the 
  lemma.
  
  ($\Leftarrow$): Assume that there is a partition $A, B, C, D$ of $Z \setminus \{v\}$ and $i \in \mathbb{N}$ as described in the lemma and fix a $(v, i, A, B \cup (S \setminus U))$-constrained path packing $\P_T$ in $G[T \setminus C]$ and a $(v, d - i, B, A \cup (S \cap U))$-constrained path packing $\P_U$ in $G[U \setminus D]$. Consider the paths in $\P_T$ as directed from $\{v\} \cup A$ to $B \cup (S \setminus U)$ and the paths in $\P_U$ as directed from $\{v\} \cup B$ to $A \cup (S \cap U)$. Let us show that there is a packing of $d$ \indep{v} paths from $v$ to $S$ in~$G$.

  Observe that $\P_T$ and $\P_U$ may overlap only in $A \cup B \cup (\{v\} \cap Z)$, as the graphs they are contained in overlap precisely in this vertex set. Consider the directed graph induced by the union of $\P_T$ and $\P_U$. Denote by $K$ the (weakly) connected component of this graph that contains~$v$. By definition of~$\P_T$ and $\P_U$, vertex~$v$ is a source vertex. Let us first show that $v$ has outdegree $d$ in $K$. Otherwise, $v$ must have a successor~$w$ in either $A$ or $B$ in both $\P_T$ and $\P_U$. However, as the paths in $\P_T$ start in~$A$ and the paths in $\P_U$ start in~$B$ in both cases we get a contradiction. Thus, $v$ is a source with precisely~$d$ outgoing arcs in~$K$. 

  We claim that $v$ is the only source in~$K$. To see this, we first derive in- and outdegrees of all vertices other than $v$ in~$K$. Clearly, each vertex in $K - (A \cup B \cup \{v\})$ is either in~$S$---and has indegree one and outdegree zero in this case---or has in- and outdegree exactly one. We claim that each vertex in $A \cup B$ has indegree at most one in $K$. This is clear for vertices in $A$, as only $\P_U$ sends paths to $A$. Both packings $\P_T$ and $\P_U$ may send paths to a vertex $w \in B$ in the case that $w \in S$. Then, however, $w$ is in a path of length zero in $\P_U$,\footnote{Recall that in a $(v, i, A, B)$-constrained path packing, we assume each path with endpoints in $A \cap B$ to be of length zero.} implying that indeed each vertex in $A \cup B$ has indegree at most one in $K$. Now for the sake of contradiction assume that there are two sources in~$K$ and consider a path in the underlying undirected graph of~$K$ between these two sources. On this path, there is a vertex with indegree at least two; a contradiction. Thus, $v$ is the only source in~$K$. 

  It now suffices to show that each vertex in $A \cup B$ either has in- and outdegree one in $K$ or is a sink contained in~$S$. As we have derived the same for the vertices in $V(K) \setminus (A \cup B \cup \{v\})$ above, and since the sum of all indegrees equals the sum of all outdegrees in~$K$, this then implies that there are $d$~vertex disjoint paths from~$v$ to~$S$. Let thus prove that, indeed, each vertex in $A \cup B$ has either in- and outdegree one in $K$ or is a sink contained in $S$.

  We have shown above that each vertex in $A \cup B$ has indegree at most one in~$K$. Since $K$ has $v$ as its only source, each of the vertices in $A \cup B$ has also indegree at least one in $K$. Since only $\P_T$ has paths starting in~$A$ and only $\P_U$ has paths starting in~$B$, the outdegree of the vertices in $A \cup B$ is at most one in $K$. Now consider a sink $w \in V(K) \cap (A \cup B)$. Recall that $\P_T$ contains a path starting in each vertex of $A$. Since $A \cap (B \cup (S \setminus U)) = \emptyset$, each of these paths has length at least one. Hence, $w \in B$. Since also $\P_U$ has a path starting in~$w$, it must be of length zero and thus $w \in S$ because the paths in $\P_U$ end in $B \cup (S \cap U)$ and $A \cap B = \emptyset$. Thus we have shown that each vertex in $K$ is either the source~$v$ with outdegree~$d$, has in- and outdegree exactly one, or is a sink contained in $S$ and has indegree exactly one. This means that there are $d$ \indep{v} paths from $v$ to $S$ in $G$.\qed
\end{longproof}

We are ready to show that Rule~\ref{rule:replace-Y} respects \yes and \no instances.

\begin{lemma}
  Rule~\ref{rule:replace-Y} is safe.
\end{lemma}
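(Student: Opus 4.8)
The plan is to prove that $(G,\phi,k)$ and the instance $(\widehat G,\widehat\phi,k)$ produced by the rule --- with $\widehat G=G'\oplus(G-Y)$ and $\widehat\phi=\phi|_{V(G)\setminus Y}\cup\phi'$ --- have vector connectivity sets of exactly the same minimum size; since $k$ is left untouched, this yields safety. The two directions are symmetric: one uses the signature equality $\sig(G[Y\cup Z],Z,\phi|_Y)=\sig(G',Z,\phi')$ from Lemma~\ref{lemma:replacement-Y} together with Lemma~\ref{lemma:small-solution-in-Y} for $(G,\phi)$ and $Y$; the other reads that equality backwards and uses the analogue of Lemma~\ref{lemma:small-solution-in-Y} for $(\widehat G,\widehat\phi)$ and $Y'$ (its proof needs only $|D\cap Y'|\le\demandiny$ and $|N(Y')|=|Z|\le d$, both guaranteed). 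So I will only describe turning a solution $S$ of $(G,\phi,k)$ into one $\widehat S$ of $(\widehat G,\widehat\phi,k)$ with $|\widehat S|\le|S|$.

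By Lemma~\ref{lemma:small-solution-in-Y} I may assume $|S\cap Y|\le\demandiny+d$; write $S=S_Y\uplus S_{\mathrm{out}}$ with $S_Y=S\cap Y$, $S_{\mathrm{out}}=S\setminus Y$ (so $S\cap Z\subseteq S_{\mathrm{out}}$). The triple $(|S_Y|,\req(G[Y\cup Z],Z,\phi|_Y,S_Y),\fac(G[Y\cup Z],Z,S_Y))$ lies in $\sig(G[Y\cup Z],Z,\phi|_Y)=\sig(G',Z,\phi')$, so there is $S_{Y'}\subseteq Y'$ with $|S_{Y'}|=|S_Y|$, $\req(G',Z,\phi',S_{Y'})=\req(G[Y\cup Z],Z,\phi|_Y,S_Y)$, and $\fac(G',Z,S_{Y'})=\fac(G[Y\cup Z],Z,S_Y)$. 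Put $\widehat S:=S_{Y'}\uplus S_{\mathrm{out}}$, so $|\widehat S|=|S_{Y'}|+|S_{\mathrm{out}}|=|S|$, and it remains to check that $\widehat S$ satisfies all demands of $(\widehat G,\widehat\phi)$.

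Here Lemma~\ref{lem:path-packings-over-separator} does the work, applied to the separation $(Y'\cup Z,\,V(\widehat G)\setminus Y')$ of $\widehat G$; since $G'[Z]=G[Z]$ we have $\widehat G[(Y'\cup Z)\setminus C]=G'-C$ and $\widehat G[(V(\widehat G)\setminus Y')\setminus D]=(G-Y)-D$ for all $C,D\subseteq Z$. Let $w$ be a demand vertex of $(\widehat G,\widehat\phi)$ with $w\notin\widehat S$; there are three cases.
\begin{itemize}
 \item $w\in V(G)\setminus(Y\cup Z)$: its demand and its structure in $G-Y$ are unchanged and $w\notin S$. Lemma~\ref{lem:path-packings-over-separator} on $G$ (forcing $i=0$) gives a partition $A,B,C,D$ of $Z$ where $(A,B,C)$ is a provided connector of $S_Y$ in $G[Y\cup Z]$ and a $(w,\phi(w),B,A\cup S_{\mathrm{out}})$-constrained packing exists in $(G-Y)-D$. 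The $\fac$-equality makes $(A,B,C)$ a provided connector of $S_{Y'}$ in $G'$; the ``Bob side'' packing is unchanged; the reverse direction of Lemma~\ref{lem:path-packings-over-separator} then reconstructs $\phi(w)$ paths to $\widehat S$ in $\widehat G$.
 \item $w\in Z$: the same, except the ``Alice side'' datum is a provided special connector $(w,i,A,B,C)$ of $S_Y$ in $G[Y\cup Z]$ for some $i\in\{0,\dots,\phi(w)\}$ (transferred to $G'$ via the $\fac$-equality), and the ``Bob side'' is a $(w,\phi(w)-i,B,A\cup S_{\mathrm{out}})$-constrained packing in the unchanged $(G-Y)-D$.
 \item $w\in Y'$ with $\phi'(w)\ge1$: now $\widehat S$ must supply the $Y'$-side packing. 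The $\req$-equality yields $v\in D\cap Y$ with $\satcon(G',Z,\phi'(w),S_{Y'},w)=\satcon(G[Y\cup Z],Z,\phi(v),S_Y,v)$. If $v\notin S_Y$, Lemma~\ref{lem:path-packings-over-separator} on $G$ (forcing $i=\phi(v)$) gives a partition $A,B,C,D$ of $Z$ with $(A,B,C)\in\satcon(G[Y\cup Z],Z,\phi(v),S_Y,v)$ and a $(B,A\cup S_{\mathrm{out}})$-constrained packing in $(G-Y)-D$ --- this ``Bob side'' no longer mentions $v$ or $\phi(v)$; if $v\in S_Y$ the same holds with the partition $(\emptyset,\emptyset,\emptyset,Z)$ and the empty packing. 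Either way $(A,B,C)\in\satcon(G',Z,\phi'(w),S_{Y'},w)$, so since $w\notin S_{Y'}$ there is a $(w,\phi'(w),A,B\cup S_{Y'})$-constrained packing in $G'-C$; combining it with the unchanged ``Bob side'' via Lemma~\ref{lem:path-packings-over-separator} gives $\phi'(w)$ paths to $\widehat S$ in $\widehat G$.
\end{itemize}
Vertices of $\widehat S$ are trivially satisfied, so $\widehat S$ is a solution of size $|S|$; the symmetric construction handles the converse direction, which completes the proof.

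I expect the main obstacle to be the bookkeeping in the $w\in Y'$ case: one must match the collection $\req(G',Z,\phi',S_{Y'})$ with $\req(G[Y\cup Z],Z,\phi|_Y,S_Y)$ at the level of individual satisfying-connector sets, observe that the ``Bob side'' packing produced by Lemma~\ref{lem:path-packings-over-separator} for a $Y$-internal demand vertex depends only on the chosen partition and on $S_{\mathrm{out}}$ (not on the vertex or its demand), and dispatch the degenerate subcase $v\in S_Y$ via the trivial connector. A secondary point is to confirm that $G'\oplus(G-Y)$ genuinely gives a graph on which the separation-splitting lemma applies with the claimed induced subgraphs, which is precisely where $G'[Z]=G[Z]$ from Lemma~\ref{lemma:replacement-Y} is used.
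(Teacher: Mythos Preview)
Your proof is correct and follows essentially the same approach as the paper: split a solution across the separator $Z$, use Lemma~\ref{lemma:small-solution-in-Y} to bound $|S\cap Y|$, invoke the signature equality to swap the inside partial solution, and verify each demand vertex via Lemma~\ref{lem:path-packings-over-separator} in the three cases (outside, on $Z$, inside). The only cosmetic differences are that the paper states the argument for an abstract pair $(G_1,G_2)$ before specializing, and handles the degenerate subcase $v\in S_Y$ with the connector $(\emptyset,\emptyset,Z)$ rather than your $(\emptyset,\emptyset,\emptyset)$; both choices work.
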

\begin{longproof}
  We claim that an even stronger statement holds. Namely, let $G_1, G_2, \hat{G}$~be three graphs, each containing~$Z$ as a vertex subset such that $G_1[Z] = G_2[Z] = \hat{G}[Z]$. Furthermore, let~$\phi_1 \colon V(G_1)\setminus Z \to \{0, \ldots, d\}$, $\phi_2 \colon V(G_2) \setminus Z \to \{0, \ldots, d\}$, and~$\hat{\phi} \colon V(\hat{G}) \to \{0, \ldots, d\}$ be three demand functions, such that $|D(G_1, \phi_1)|, |D(G_2, \phi_2)| \leq \demandiny$ and such that $\sig(G_1, Z, \phi_1) = \sig(G_2, Z, \phi_2)$. Then $G_1 \oplus \hat{G}$ has a vector connectivity set of size~$k$ with respect to~$\phi_1 \cup \hat{\phi}$ if and only if~$G_2 \oplus \hat{G}$ has a vector connectivity set of size~$k$ with respect to~$\phi_2 \cup \hat{\phi}$. 

  To see that our claim implies the lemma, set~$G_1:=G[Y \cup Z]$, $G_2:=G'$, $\hat{G}:=G - Y$ and define the demand functions~$\phi_1, \phi_2$ and $\hat{\phi}$ accordingly. Then our claim implies that $G_1 \oplus \hat{G} = G[Y \cup Z] \oplus (G - Y) = G$ has a vector connectivity set of size~$k$ if and only if~$G_2 \oplus \hat{G} = G' \oplus (G - Y)$ has such a set. That is, it implies that Rule~\ref{rule:replace-Y} is safe.

  Let us prove the claim. Note that, by swapping the names of~$G_1$ and~$G_2$, as well as~$\phi_1$ and~$\phi_2$, it suffices to prove one direction. Assume hence that~$G_1 \oplus \hat{G}$ has a vector connectivity set~$S$ of size~$k$ with respect to~$\phi_1 \cup \hat{\phi}$. Since $|D(G_1, \phi_1)| \leq \demandiny$ and~$|N_{G_1 \oplus \hat{G}}(V(G_1) \setminus Z)| \leq |Z| \leq d$ we may apply Lemma~\ref{lemma:small-solution-in-Y} with $Y = V(G_1) \setminus Z$ and hence, we may assume that $S_1:=(V(G_1) \cap S) \setminus Z$ contains at most~$\demandiny + d$ elements. Thus, we have 
  \[
  (|S_1|, \req(G_1, Z, \phi_1, S_1), \fac(G_1, Z, S_1)) \in \sig(G_1, Z, \phi_1)\text{.}
  \]
  Since~$\sig(G_1, Z, \phi_1) = \sig(G_2, Z, \phi_2)$, there is a set $S_2 \subseteq V(G_2) \setminus Z$ with~$|S_1| = |S_2|$, $\req(G_1, Z, \phi_1, S_1) = \req(G_2, Z, \phi_2, S_2)$ and $\fac(G_1, Z, S_1) = \fac(G_2, Z, S_2)$. We claim that~$S' := (S \setminus S_1) \cup S_2$ is a vector connectivity set for~$G_2 \oplus \hat{G}$ with respect to~$\phi_2 \cup \hat{\phi}$; clearly~$|S| = |S'|$.

  Let us check that this is true indeed. We consider vertices in $D(G_2, \phi_2)$, $D(\hat{G}[Z], \hat{\phi})$, and vertices in $D(\hat{G} - Z, \hat{\phi})$ individually. Let us start with $D(G_2, \phi_2)$. For each vertex~$v_2 \in D(G_2, \phi_2)$, there is at least one vertex~$v_1 \in D(G_1, \phi_1)$ with~$\satcon(G_1, Z, \phi_1, S_1, v_1) = \satcon(G_2, Z, \phi_2(v_2), S_2, v_2)$. Let us show that the demand of~$v_2$ is satisfied in~$G_2 \oplus \hat{G}$ by $S'$.

  Consider first the case that $v_1 \in S_1$. Then $(A, B, C)$ is a satisfying connector for $v_1$ in $G_1$ for any three sets $A, B, C \subseteq Z$, mutually disjoint. In particular $(\emptyset, \emptyset, Z)$ is a satisfying connector for~$v_1$ and since the set of satisfying connectors for $v_1$ equals the one for $v_2$, $(\emptyset, \emptyset, Z)$ is a satisfying connector for~$v_2$. By the definition of satisfying connector either $v_2 \in S_2$, or there is a $(v_2, \phi_2(v_2), \emptyset, S_2)$-constrained path packing in $G_2 - Z$, meaning that there are $\phi_2(v_2)$ \indep{v_2} paths from $v_2$ to $S_2$ in $G_2 - Z$. Hence, the demand of $v_2$ is satisfied if $v_1 \in S_1$.

  Now assume that $v_1 \notin S_1$ and observe that $(V(G_1), V(\hat{G}))$ is a separation of $\hat{G} \oplus G_1$. Since there are $\phi_1(v_1)$ \indep{v_1} paths from $v_1$ to~$S$ in $\hat{G} \oplus G_1$, by Lemma~\ref{lem:path-packings-over-separator} and since $v_1 \in V(G_1) \setminus V(\hat{G})$, there is a partition of $Z$ into four sets $A, B, C, D$ such that there is a $(v_1, \phi_1(v_1), A, B \cup (S \setminus V(\hat{G})))$-constrained path packing~$\P_1$ in $G_1 - C$ and a $(B, A \cup (S \cap V(\hat{G})))$-constrained path packing~$\hat{\P}$ in~$\hat{G} - D$. Because $S \setminus V(\hat{G}) = S_1$, packing $\P_1$ is also $(v_1, \phi_1(v_1), A, B \cup S_1)$-constrained. Thus $\P_1$ witnesses that $(A, B, C) \in \satcon(G_1, Z, \phi_1(v_1), S_1, v_1)$, meaning that also $(A, B, C) \in \satcon(G_2, Z, \phi_2(v_2), S_2, v_2)$. Applying the definition of satisfying connector again, we have a $(v_2, \phi_2(v_2), A, B \cup S_2)$-constrained path packing $\P_2$ in $G_2 - C$. Note that $\P_2$ is also $(v_2, \phi_2(v_2), A, B \cup (S' \setminus V(\hat{G}))$-constrained. Applying again Lemma~\ref{lem:path-packings-over-separator}, the two path packings $\P_2$ and $\hat{\P}$ thus witness that there are $\phi(v_2)$ \indep{v_2} paths from $v$ to $S$ in $\hat{G} \oplus G_2$.

  Next, consider $v \in D(\hat{G} - Z, \hat{\phi})$; there are  $\hat{\phi}(v)$ \indep{v} paths from $v$ to~$S$ in $\hat{G} \oplus G_1$. Applying Lemma~\ref{lem:path-packings-over-separator} with the separation $(V(G_1), V(\hat{G}))$, we obtain a partition of $Z$ into $A, B, C, D$ (since $v \in V(\hat{G}) \setminus V(G_1)$), a $(A, B \cup S_1)$-constrained path packing~$\P_1$ in $G_1 - C$ and a $(v, \hat{\phi}(v), B, A \cup (S \cap V(\hat{G}))$-constrained path~$\hat{\P}$ packing in $\hat{G} - D$. By the definition of provided connector, we have $(A, B, C) \in \fac(G_1, Z, S_1) = \fac(G_2, Z, S_2)$. Thus, again by the definition of provided connector, there is a $(A, B \cup S_2)$-constrained path packing in~$G_2 - C$. Applying again Lemma~\ref{lem:path-packings-over-separator}, the packings $\P_2$ and $\hat{\P}$ witness that there are $\hat{\phi}(v)$ \indep{v} paths from $v$ to $S'$ in $\hat{G} \oplus G_2$.

  Finally, consider $v \in D(\hat{G}[Z]), \hat{\phi})$; there are again  $\hat{\phi}(v)$ \indep{v} paths from $v$ to~$S$ in $\hat{G} \oplus G_1$. Now we apply Lemma~\ref{lem:path-packings-over-separator} with the separation $(V(G_1), V(\hat{G}))$. This time, we obtain a partition of $Z \setminus \{v\}$ into $A, B, C, D$, and an integer $i$ together with a $(v, i, A, B \cup S_1)$-constrained path packing~$\P_1$ in~$G_1 - C$ and a $(v, \hat{\phi}(v) - i, B, A \cup (S \cap V(\hat{G})))$-constrained path packing~$\hat{\P}$ in~$\hat{G} - D$. By the definition of provided special connector, the packing $\P_1$ witnesses that $(v, i, A, B, C) \in \fac(G_1, Z, S_1) = \fac(G_2, Z, S_2)$. Hence, again by this definition, there is also a $(v, i, A, B \cup S_2)$-constrained path packing~$\P_2$ in $G_2 - C$. Applying Lemma~\ref{lem:path-packings-over-separator} again, the packings $\P_2$ and $\hat{\P}$ witness that there are $\hat{\phi}(v)$ \indep{v} paths from $v$ to $S'$ in $\hat{G} \oplus G_2$. 

  Overall we showed that each vertex~$v$ with nonzero demand~$(\hat{\phi} \cup \phi_2)(v)$ has as many \indep{v} paths from $v$ to $S'$ in $\hat{G} \oplus G_2$, meaning that $S'$ is a vector connectivity set. Since $|S'| = |S|$, this shows that Rule~\ref{rule:replace-Y} is safe.\qed
\end{longproof}

\paragraph{Putting things together.} 
We can now state our kernelization procedure for instances $(G,\phi,k)$ of \VdConk. The only missing piece is to argue why and how we may reduce vertices in $G$ that are not contained in any set of $\Y(G,\phi,d)$.

\begin{theorem}\label{theorem:vdconk:vertexlinearkernel}
\VdConk has a vertex-linear polynomial kernelization.
\end{theorem}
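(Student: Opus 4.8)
The plan is to chain the reduction rules of the preceding sections --- Reduction Rules~\ref{rule:reducedemand}, \ref{rule:bounddemand} and~\ref{rule:replace-Y} --- with the counting bound $|\Y| = \Oh(k)$ from Lemma~\ref{lemma:boundy}, and to finish by adding one last rule that cleans up the part of~$G$ lying outside all sets of~$\Y$.

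First I would apply Reduction Rule~\ref{rule:reducedemand} exhaustively in polynomial time (Lemma~\ref{lemma:reducedemand-poly}) and then Reduction Rule~\ref{rule:bounddemand} (trivially polynomial); if the latter rejects, output a fixed-size trivial \no-instance. The instance now has $|D(G,\phi)| \le d^2k$ nonzero-demand vertices. Next, compute $\Y := \Y(G,\phi,d)$ in polynomial time by running the branching process from the proof of Lemma~\ref{lemma:boundy}, so that $|\Y| \le d^2k\cdot 2^{d^3+d}$. For each $Y\in\Y$ apply Reduction Rule~\ref{rule:replace-Y} (which is safe), re-establishing the hypotheses of Rules~\ref{rule:reducedemand} and~\ref{rule:bounddemand} and recomputing~$\Y$ after each application. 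Because every application of Rule~\ref{rule:replace-Y} strictly decreases $|V(G)|$, this halts after polynomially many rounds and leaves every $Y\in\Y$ with $|Y| \le f(d)$, where~$f$ is the function of Lemma~\ref{lemma:replacement-Y}. Hence $\bigl|\bigcup_{Y\in\Y}(Y\cup N(Y))\bigr| \le |\Y|\,(f(d)+d) = \Oh(k)$.

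It remains to bound the vertices that lie in no $Y\in\Y$. The crucial fact is Lemma~\ref{lemma:xiny}: every $X\in\X(G,\phi)$ sits inside some $Y\in\Y$, so any vertex $w\notin\bigcup_{Y\in\Y}Y$ lies in no $X\in\X(G,\phi)$ and hence, by Proposition~\ref{prop:HSequiv}, is never forced into a vector connectivity set. I would add a last reduction rule that removes these vertices: after deleting every demand-free connected component of~$G$ (trivially safe), the remaining vertices outside $\bigcup_{Y\in\Y}(Y\cup N(Y))$ split into demand-free connected pieces~$K$; one argues with the closest-set properties of Proposition~\ref{proposition:closestsets:properties} (as in the proofs of Lemmas~\ref{lemma:cuts} and~\ref{lemma:xiny}) that each~$K$ attaches to the rest of~$G$ through a separator of size bounded in~$d$, so that $G[K\cup N_G(K)]$ can be handled exactly like a set~$Y$ except with identically zero internal demand --- its requirement is then empty and only the facility of the empty partial solution matters. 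A special case of Lemmas~\ref{lemma:path-packing-poly}, \ref{lemma:fac-poly} and~\ref{lemma:replacement-Y} together with the safety argument of Rule~\ref{rule:replace-Y} then lets one replace each such piece by a gadget of size~$\Oh_d(1)$ with the same facility, and a routine count bounds the number of pieces, keeping the total at $\Oh(k) = f(d)\,k$ vertices. All steps run in polynomial time and preserve the answer, $\phi$ still maps into $\{0,\dots,d\}$, and~$k$ is unchanged, giving the claimed vertex-linear kernel.

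The main obstacle is this last step: Steps one and two only compose rules already proven correct, whereas cleaning up the part of~$G$ outside all $Y\in\Y$ requires both the structural claim that the leftover demand-free pieces have boundary bounded in~$d$ and a safety proof for collapsing them. This is precisely the place where, with no treewidth bound available, one cannot invoke an off-the-shelf protrusion-replacement theorem and must instead redo the flow-based signature replacement by hand.
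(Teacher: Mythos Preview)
Your first two phases --- exhausting Rules~\ref{rule:reducedemand}, \ref{rule:bounddemand}, \ref{rule:replace-Y} and invoking the bound $|\Y|\le d^2k\cdot 2^{d^3+d}$ --- match the paper's argument and are fine. The gap is exactly where you flagged it: the treatment of vertices lying in no $Y\in\Y$.

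Your plan asserts that each leftover demand-free piece $K$ attaches to $\bigcup_{Y\in\Y}(Y\cup N(Y))$ through a separator of size bounded in~$d$, and you appeal to the closest-set machinery of Proposition~\ref{proposition:closestsets:properties} and Lemmas~\ref{lemma:cuts},~\ref{lemma:xiny}. But those results control separators of \emph{demand} vertices from other demand vertices; they say nothing about the neighbourhood size of an arbitrary zero-demand region in an unrestricted graph. Concretely, take any instance already reduced under Rules~\ref{rule:reducedemand}--\ref{rule:replace-Y} and add a single zero-demand vertex $w$ adjacent to one vertex of $N(Y)$ for every $Y\in\Y$. This does not change $\X$, $\Y$, or the solution space, yet $\{w\}$ is a leftover piece with $|N(w)|=\Theta(|\Y|)=\Theta(k)$, so no $d$-bounded separator exists and your signature-replacement step cannot be carried out. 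Since you rightly note that no protrusion theorem is available here, the plan as stated does not close.

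The paper sidesteps this entirely with a much lighter device: after the reduction loop it sets $W:=D\cup\bigcup_{Y\in\Y}N[Y]$ and performs the \emph{torso} on $W$, i.e., keeps $G[W]$ and adds an edge $\{u,v\}$ for every $u,v\in W$ connected by a path with internal vertices in $V\setminus W$. No structural claim about the leftover pieces is needed. Correctness uses only Proposition~\ref{prop:HSequiv} and Lemma~\ref{lemma:xiny}: since every $X\in\X$ satisfies $N[X]\subseteq N[Y]\subseteq W$ for some $Y\in\Y$, minimal solutions live in $W$, and any path through $V\setminus W$ in $G$ is represented by a shortcut edge in the torso, while conversely a witnessing unhit set $X\in\X$ survives intact in $G'$ because $N[X]\subseteq W$. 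This gives the $\Oh(k)$-vertex bound directly.
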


\begin{proof}
Given an instance $(G,\phi,k)$ of \VdConk the kernelization proceeds as follows. Throughout, we refer to the current instance by $(G,\phi,k)$ and recall the use of $D=\{v\in V(G)\mid \phi(v)\geq 1\}$.
\begin{enumerate}
 \item Apply Rule~\ref{rule:reducedemand} exhaustively and then apply Rule~\ref{rule:bounddemand} (this may return answer \no if we have more than $d^2k$ demand vertices).\label{kernelization:step:one}
 \item Apply Rule~\ref{rule:replace-Y} once if possible.
 \item Return to Step~\ref{kernelization:step:one} if Rule~\ref{rule:replace-Y} was applied.
 \item Let $W:=D\cup\bigcup_{Y\in\Y}N[Y]$. Perform the torso operation on $W$ in $G$ to obtain $G'$. That is, carry out the following steps:
 \begin{enumerate}
  \item Start with $G'=G[W]$.
  \item For every pair $u,v\in W$, if there is a $u,v$-path in $G$ with internal vertices from $V\setminus W$ then add the edge $\{u,v\}$ to $G'$.
 \end{enumerate}
 \item Return $(G',\phi',k)$ as the kernelized instance, where $\phi'=\phi|_W$ is $\phi$ restricted to $W$.
\end{enumerate}

\emph{Correctness.} We already know that Rules~\ref{rule:reducedemand} through \ref{rule:replace-Y} are correct; it remains to discuss the effect of the torso operation: Proposition~\ref{prop:HSequiv} implies that minimal solutions $S$ for $(G,\phi,k)$ are completely contained in the union of sets $X\in\X$, since only such vertices can contribute to $S$ being a hitting set for \X. It follows, by Lemma~\ref{lemma:xiny}, that every minimal solution $S$ is also contained in $W$. Thus, if before the torso operation every vertex $v\in D$ has $\phi(v)$ paths to $S$ then the same is true after the operation since there are shortcut edges for all paths with internal vertices from $V\setminus W$. The converse is more interesting.

Assume that $(G,\phi,k)$ is \no and fix an arbitrary set $S\subseteq W=V(G')\subseteq V(G)$ of size at most $k$; we will show that $S$ is not a solution for $(G',\phi',k)$. By assumption $S$ is not a solution for $(G,\phi,k)$ and, by Proposition~\ref{prop:HSequiv}, it is not a hitting set for $\X=\X(G,\phi)$. Accordingly, fix a set $X\in\X$ with $S\cap X=\emptyset$. By definition of $\X$, let $v\in X$ with $\phi(v)>|N(X)|$. By Lemma~\ref{lemma:xiny}, there is a $Y\in\Y(G,\phi,d)$ with $X\subseteq Y$. Thus,~$N[X]\subseteq N[Y]\subseteq W$. If there were $\phi(v)$ paths from $v$ to $S$ in $G'$ then at least one of them avoids $N(X)$, since $|N(X)|<\phi(v)$; let $P'$ denote a path from $v$ to $S$ in $G'$ that avoids $N(X)$. Undoing the torso operation, we get a walk $P$ in $G$, with additional interval vertices from $V(G)\setminus W$. Since $(V(G)\setminus W)\cap N(X)=\emptyset$, this walk also avoids $N(X)$ and implies that $X$ contains at least one vertex of $S$; a contradiction to $S\cap X=\emptyset$. Thus, no $S\subseteq V(G')$ of size at most $k$ is a solution for $(G',\phi',k)$, implying that $(G',\phi',k)$ is \no, as required.

\emph{Size.} The output graph $G'$ has vertex set $W=D\cup \bigcup_{Y\in\Y}N[Y]$ where $D$ and $\Y$ correspond to a fully reduced instance $(G,\phi,k)$. By Rule~\ref{rule:bounddemand} we have $|D|\leq d^2k$. By Lemma~\ref{lemma:boundy} the set $\Y$ contains at most $2^{\demandiny+d}d^2k$ sets, each of size bounded by $f(d)$ for some computable function depending only on~$d$. By Definition~\ref{definition:y} the neighborhood $N(Y)$ of each set $Y$ has size at most $d$. It follows that $G'$ has $\Oh(k)$ vertices, as claimed. Clearly, the total encoding size for an instance can be bounded by $\Oh(k^2)$ since $d$ is a constant.

\emph{Runtime.} By Lemma~\ref{lemma:reducedemand-poly}, Rule~\ref{rule:reducedemand} can be applied exhaustively in polynomial time. Clearly, Rule~\ref{rule:bounddemand} can be applied in polynomial time as it only checks the number of nonzero-demand vertices. Finding one application of Rule~\ref{rule:replace-Y} can be done by iterating over $Y\in\Y$ and applying Lemma~\ref{lemma:replacement-Y} to each $Y$ until we find a replacement subgraph that is strictly smaller; in total this takes polynomial time. Furthermore, repeating these steps whenever Rule~\ref{rule:replace-Y} has been applied gives only a polynomial factor because each time the instance shrinks by at least one vertex. Finally, it is easy to implement the torso operation in polynomial time.\qed
\end{proof}

\section{Kernelization lower bound}\label{section:kernellowerbound}

In this section, we prove that \VConk admits no polynomial kernelization unless \containment. We give a reduction from \HSm, i.e., \HS with parameter number of sets, which also makes a polynomial Turing kernelization unlikely (cf.~\cite{HermelinKSWW13}). Since demands greater than $k+1$ can be safely replaced by demand $k+1$, implying $d\leq k+1$, the lower bound applies also to parameterization by $d+k$.

\begin{theorem}
\VConk does not admit a polynomial kernelization unless \containment and the polynomial hierarchy collapses.
\end{theorem}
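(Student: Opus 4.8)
The plan is to give a polynomial parameter transformation from \HSm to \VConk, i.e., a polynomial-time many-one reduction that turns an instance with parameter~$m$ (the number of sets) into an instance of \VConk with parameter~$\poly(m)$. The starting point is that \HSm admits no polynomial kernelization unless \containment: it is exactly \probname{Set Cover} parameterized by the size of the universe, for which incompressibility is known, and it is moreover $\mathsf{WK[1]}$-hard, so that even a polynomial Turing kernelization is unlikely (cf.~\cite{HermelinKSWW13}). Since a minimal hitting set of~$\F=\{F_1,\dots,F_m\}$ uses at most one element per set, we may assume the target size~$k$ satisfies~$k\le m$; hence the new parameter will indeed be~$\poly(m)$.

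Given~$(U,\F,k)$ I would build a graph~$G$ with an ``element vertex''~$x_u$ of demand~$0$ for each~$u\in U$ and, for each set~$F_i$, a \emph{demand gadget} built around a vertex~$z_i$ attached to~$\{x_u:u\in F_i\}$. The demand of~$z_i$ is chosen small enough --- bounded by some polynomial~$g(m)$, while we set the new parameter to~$k':=k+g(m)$ --- that~$z_i$ is not forced into a solution, but large enough that, in the hitting-set view of Proposition~\ref{prop:HSequiv}, the unique minimal ``bottleneck'' set of~$\X(G,\phi)$ around~$z_i$ is~$\{z_i\}\cup\{x_u:u\in F_i\}$, possibly together with private zero-demand padding. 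Thus every vector connectivity set must contain~$z_i$ or some~$x_u$ with~$u\in F_i$, while picking one such~$x_u$ already satisfies the gadget. To keep the neighbourhoods of the gadgets polynomially bounded even when a set~$F_i$ is large, I would replace each~$x_u$ by one private copy per set containing~$u$ and tie the copies together by a small auxiliary gadget, so that ``selecting~$u$'' still corresponds to a single vertex of cost one with a controlled interaction with the rest of~$G$. Finally I would add~$g(m)$ pendant gadgets each forcing exactly one vertex into every solution, so that the minimum size of a vector connectivity set of~$(G,\phi)$ equals the minimum size of a hitting set of~$\F$ plus~$g(m)$; then~$(U,\F,k)$ is a \yes-instance iff~$(G,\phi,k')$ is, with~$k'=\poly(m)$.

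Feeding the assumed polynomial kernelization for \VConk into this reduction would compress \HSm to polynomial size, contradicting its incompressibility unless \containment; the same transformation transfers $\mathsf{WK[1]}$-hardness. Moreover, since any vertex of demand greater than~$k'$ must lie in every solution of size at most~$k'$, we may cap all demands at~$k'+1$ without changing the instance, so~$d\le k'+1$ and the lower bound already applies to the combined parameter~$d+k$.

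The step I expect to be the main obstacle is the exact design and correctness analysis of the demand gadgets. Using the structure of closest separators (Proposition~\ref{proposition:closestsets:properties}) and the equivalence of Proposition~\ref{prop:HSequiv}, one must certify that~$\X(G,\phi)$ contains no set strictly smaller than the intended bottleneck of a gadget --- otherwise a vector connectivity set could be cheaper than a hitting set and the reverse direction would fail --- and that a gadget cannot be satisfied by routing paths around its bottleneck instead of selecting an element of~$F_i$, all while keeping the demands and the new parameter polynomially bounded in~$m$ even though a set~$F_i$ may be much larger than any polynomial in~$m$.
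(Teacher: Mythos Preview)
Your high-level plan---a polynomial parameter transformation from \HSm---matches the paper exactly, and your remarks on capping demands at~$k'+1$ (hence covering parameter $d+k$) are right. The gap is precisely where you locate it, in the gadget design, but the direction you propose for closing it is more convoluted than necessary: the worry that a large set~$F_i$ forces large demands or private per-set copies of the~$x_u$, with an auxiliary gadget to tie copies of the same element together, is a red herring. The paper's demands depend only on~$m$ and~$k$, never on~$|F_i|$, and element vertices are shared among all sets containing them.

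Concretely, for each $F\in\F$ the paper creates $k+1$ vertices $y_{1,F},\ldots,y_{k+1,F}$, each adjacent to all $x_u$ with $u\in F$, and gives every $y_{i,F}$ a private degree-one pendant $y'_{i,F}$ of demand~$2$. All $(k+1)m$ vertices $y_{i,F}$ together form a single clique, each with demand $(k+1)m+1$; the budget is $k'=(k+1)m+k=\Oh(m^2)$. The pendants are forced into~$S$ (degree one, demand two), consuming $(k+1)m$ of the budget and leaving at most~$k$ further vertices. A vertex $y_{i,F}\notin S$ then automatically has $(k+1)m$ disjoint paths to~$S$: one to its own pendant and one through each other $y$-vertex to that vertex's pendant. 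The single remaining required path must end in some $x_u\in S$ with $u\in F$; conversely, if no such $x_u$ exists, the set of all other $y$-vertices together with $y'_{i,F}$ is a $y_{i,F},S$-separator of size $(k+1)m$. The $k+1$ copies per set guarantee that for every~$F$ at least one $y_{i,F}$ lies outside~$S$, so every set is hit. No analysis of $\X(G,\phi)$ or of closest separators is needed; correctness is a direct path-counting argument. The idea you are missing is this clique of gadget vertices that supplies almost all required paths ``for free'', so that exactly one path per gadget vertex must go through an element vertex.
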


\begin{shortproof}
We give a polynomial parameter transformation from \HSm to \VConk, which is known to imply the claimed lower bound (cf.~\cite{BodlaenderJK14}). Let $(U,\F,k)$ be an instance of \HSm with parameter $m=|\F|$; w.l.o.g.~$k\leq m$. Let $n:=|U|$. We construct a graph $G$ on $2(k+1)m+n$ vertices that has a vector connectivity set of size at most $k'=(k+1)m+k=\Oh(m^2)$ if and only if $(U,\F,k)$ is \yes for \HSm. We will only show the construction and defer the correctness proof to the full version.

\emph{Construction.}
Make one vertex $x_u$ for each element $u\in U$, and make $2(k+1)$ vertices $y_{1,F},\ldots,y_{k+1,F},y'_{1,F},\ldots,y'_{k+1,F}$ for each set $F\in\F$. We add the following edges:
\begin{enumerate}
 \item Add $\{y_{i,F},y'_{i,F}\}$ for all $i\in\{1,\ldots,k+1\}$ and $F\in\F$.
 \item Add $\{x_{u},y_{i,F}\}$ for all $i\in\{1,\ldots,k+1\}$, $F\in\F$, and $u\in F$.
 \item Make the set of all vertices $y_{i,F}$ a clique (not including any $y'$-vertex).
\end{enumerate}
Set the demand $\phi$ of each $y'_{i,F}$ vertex to $2$ and of each $y_{i,F}$ vertex to $(k+1)m+1$; all $x$-vertices have demand zero. Set the budget $k'$ to $(k+1)m+k$. This completes the construction of an instance $(G,\phi,k')$, which can be easily performed in polynomial time.\qed
\end{shortproof}

\begin{longproof}
We give a polynomial parameter transformation from \HSm to \VConk, which is known to imply the claimed lower bound (cf.~\cite{BodlaenderJK14}). Let $(U,\F,k)$ be an instance of \HSm with parameter $m=|\F|$; w.l.o.g.~$k\leq m$. Let $n:=|U|$. We construct a graph $G$ on $2(k+1)m+n$ vertices that has a vector connectivity set of size at most $k'=(k+1)m+k=\Oh(m^2)$ if and only if $(U,\F,k)$ is \yes for \HSm.

\emph{Construction.}
Make one vertex $x_u$ for each element $u\in U$, and make $2(k+1)$ vertices $y_{1,F},\ldots,y_{k+1,F},y'_{1,F},\ldots,y'_{k+1,F}$ for each set $F\in\F$. We add the following edges:
\begin{enumerate}
 \item Add $\{y_{i,F},y'_{i,F}\}$ for all $i\in\{1,\ldots,k+1\}$ and $F\in\F$.
 \item Add $\{x_{u},y_{i,F}\}$ for all $i\in\{1,\ldots,k+1\}$, $F\in\F$, and $u\in F$.
 \item Make the set of all vertices $y_{i,F}$ a clique (not including any $y'$-vertex).
\end{enumerate}
Set the demand $\phi$ of each $y'_{i,F}$ vertex to $2$ and of each $y_{i,F}$ vertex to $(k+1)m+1$; all $x$-vertices have demand zero. Set the budget $k'$ to $(k+1)m+k$. This completes the construction of an instance $(G,\phi,k')$, which can be easily performed in polynomial time.

\emph{Correctness.} Assume first that $(G,\phi,k')$ is \yes and let $S$ a vector connectivity set of size at most $k'$. Note that $S$ must contain all vertices $y'_{i,F}$ since they have demand of $2$ but only one neighbor (namely $y_{i,F}$). This accounts for $(k+1)m$ vertices in $S$; there are at most $k$ further vertices in $S$. Let $T$ contain exactly those elements $u\in U$ such that $x_u\in S$; thus $|T|\leq k$. We claim that $T$ is a hitting set for $\F$.
Let $F\in\F$ and assume that $T\cap F=\emptyset$. It follows that $S$ contains no vertex $x_u$ with $u\in F$. Since at most $k$ vertices in $S$ are not $y'$-vertices, we can choose $i\in\{1,\ldots,k+1\}$ such that $S$ does not contain $y_{i,F}$. Consider the set $C$ consisting of all $y$-vertices other than $y_{i,F}$ as well as the vertex $y'_{i,F}$. In $G-C$ we find a connected component containing $y_{i,F}$ and all $x_u$ with $u\in F$ but no further vertices. Crucially, all other neighbors of $y_{i,F}$ are $y'_{i,F}$ and all $y$-vertices, and $x$-vertices only have $y$-vertices as neighbors. By assumption $S$ contains no vertex of this connected component. This yields a contradiction cause $C$ is of size $(k+1)m$ and separates $y_{i,F}$ from $S$, but since $S$ is a solution with $y_{i,F}\notin S$ there should be $(k+1)m+1$ disjoint paths from $y_{i,F}$ to $S$. Thus, $S$ must contain some $x_u$ with $u\in F$, and then $T\cap F\neq\emptyset$.

Now, assume that $(U,\F,k)$ is \yes for \HSm and let $T$ a hitting set of size at most $k$ for $\F$. We create a vector connectivity set $S$ by selecting all $x_u$ with $u\in T$ as well as all $y'$-vertices; thus $|S|\leq k'=(k+1)m+k$. Clearly, this satisfies all $y'$-vertices. Consider any vertex $y_{i,F}$ and recall that its demand is $\phi(y_{i,F})=(k+1)m+1$. We know that $S$ contains at least one vertex $x_u$ with $u\in F$ that is adjacent to $y_{i,F}$. Thus, we can find the required $(k+1)m+1$ disjoint paths from $y_{i,F}$ to $S$:
\begin{itemize}
 \item We have one path $(y_{i,F},y'_{i,F})$ and one path $(y_{i,F},x_u)$.
 \item For all $(j,F')\neq (i,F)$ we get one path $(y_{i,F},y_{j,F},y'_{j,F})$; we get $(k+1)m-1$ paths total.
\end{itemize}
It follows that $(G,\phi,k')$ is \yes for \VConk.

We have given a polynomial parameter transformation from \HSm, which is known not to admit a polynomial kernelization unless \containment \cite{DomLS09} (see also \cite{HermelinKSWW13}). This is known to imply the same lower bound for \VConk~\cite{BodlaenderJK14}.\qed
\end{longproof}

\section{Conclusion} \label{section:conclusion}

We have presented kernelization and approximation results for \VCon and \VdCon. An important ingredient of our results is a reduction rule that reduces the number of vertices with nonzero demand to at most $d^2\opt$ (or, similarly, to at most $\opt^3+\opt$ or $k^3+k$). From this, one directly gets approximation algorithms with ratios $d^2$ and $(\opt^2+1)$; we improved these to factors $d$ and $\opt$, respectively, by a local-ratio type algorithm. Recall that \VdCon is \APX-hard already for $d=4$~\cite{CicaleseMR14}.

On the kernelization side we show that \VConk does not admit a polynomial kernelization unless \containment. Since demands greater than $k+1$ can be safely replaced by demand $k+1$ (because they cannot be fulfilled without putting the vertex into the solution) the lower bound extends also to parameter $k+d$. For \VdConk, where $d$ is a problem-specific constant, we give an explicit vertex-linear kernelization with at most $f(d)\cdot k=\Oh(k)$ vertices; the computable function $f(d)$ is superpolynomial in $d$, which is necessary (unless \containment) due to the lower bound for $d+k$.

Finally, the reduction to $k^3+k$ nonzero demand vertices allows an alternative proof for fixed-parameter tractability: We give a randomized FPT-algorithm for \VConk that finds a solution by seeking a set of size $k$ that is simultaneously independent in each of $\ell=k^3+k$ linear matroids, each of which handles one demand vertex; for this we use an algorithm of Marx~\cite{Marx09} for linear matroid intersection, which is fixed-parameter tractable in $k+\ell=\Oh(k^3)$.

\paragraph{Acknowledgments.} The authors are grateful to anonymous reviewers for helpful comments improving the presentation and technical content of the paper. In particular, a reviewer commented on the definition of signatures, leading to a significantly simpler way of computing them; other reviewers pointed out that a non-constructive kernelization can be obtained from the literature on meta kernelization (in particular, from Fomin et al.~\cite{FominLST13}). 

Stefan Kratsch was supported by the German Research Foundation (DFG), KR 4286/1. Manuel Sorge was also supported by the German Research Foundation (DFG), project DAPA, NI 369/12.

\bibliographystyle{abbrv}
\bibliography{vectorconnectivity}

\end{document}